\documentclass[onecolumn,draftclsnofoot]{IEEEtran}
\usepackage{amsmath}
\usepackage{graphicx}
\usepackage{subcaption}
\usepackage{amsthm}
\usepackage{float}
\usepackage{mathrsfs}
\usepackage{verbatim}
\usepackage{epstopdf}
\usepackage{amssymb}
\usepackage{amsfonts}
\usepackage{color}
\usepackage{cite}
\usepackage{cancel}
\usepackage[shortlabels]{enumitem}
\usepackage[breaklinks=true,letterpaper=true,colorlinks=false,bookmarks=false]{hyperref}
\usepackage{algorithm}
\usepackage{algpseudocode}
\usepackage{arydshln}
\usepackage[official]{eurosym}
\usepackage{comment}
\usepackage{amsmath,amssymb,amsthm,mathrsfs,amsfonts,dsfont}
\usepackage[shortlabels]{enumitem}
\DeclareMathOperator*{\argmin}{arg\,min}
\DeclareMathOperator*{\argmax}{arg\,max}
\newcommand\independent{\protect\mathpalette{\protect\independenT}{\perp}}
\def\independenT#1#2{\mathrel{\rlap{$#1#2$}\mkern2mu{#1#2}}}

\newtheorem{theorem}{Theorem}
\newtheorem{proposition}{Proposition}
\newtheorem{remark}{Remark}
\newtheorem{corollary}{Corollary}[theorem]
\newtheorem{example}{Example}
\newtheorem{lemma}{Lemma}

\newtheorem{definition}{Definition}

\usepackage{tikz}
\usetikzlibrary{dsp,chains}
\usepackage{pgfplots}
\pgfplotsset{compat = newest}
\usetikzlibrary{shapes.misc, positioning}
\usetikzlibrary{pgfplots.external}
\usetikzlibrary{external}

\pgfplotscreateplotcyclelist{nano mark style}{%
dashed, every mark/.append style={solid, fill=gray!30}, mark=*, line width=1.1pt\\%
dashed, every mark/.append style={solid, fill=gray!30}, mark=square*, line width=1.1pt\\%
dashed, every mark/.append style={solid, fill=gray!30}, mark=triangle*, mark size=2.5pt, line width=1.1pt\\%
dashed, every mark/.append style={solid, fill=gray!30},mark=diamond*, mark size=2.5pt, line width=1.1pt\\%
dashed, every mark/.append style={solid, fill=gray!30},mark=x, mark size=3pt, line width=1.1pt\\%
}

\pgfplotscreateplotcyclelist{nano line style}{%
color = black, line width=2pt\\%
densely dashed, color = black, line width=2pt\\%
densely dotted, color = black, line width=2pt\\%
loosely dashed, color = black, line width=2pt\\%
}
\begin{document}
\title{Information-Theoretic Privacy-Preserving Schemes Based On Perfect Privacy}
\author{Borzoo Rassouli$^1$ and Deniz G\"und\"uz$^2$\\
\small{$^1$ Nokia Bell Labs, Munich, Germany}\\
\small{$^2$ Department of Electrical and Electronic Engineering, Imperial College London, London, UK}\\
{\tt\small borzoo.rassouli@nokia-bell-labs.com}, {\tt\small  d.gunduz@imperial.ac.uk}
}

\maketitle
\begin{abstract}
Consider a pair of random variables $(X,Y)$ distributed according to a given joint distribution $p_{XY}$. A curator wishes to maximally disclose information about $Y$, while limiting the information leakage incurred on $X$. Adopting mutual information to measure both utility and privacy of this information disclosure, the problem is to maximize $I(Y;U)$, subject to $I(X;U)\leq\epsilon$, where $U$ denotes the released random variable and $\epsilon$ is a given privacy threshold. Two settings are considered, where in the first one, the curator has access to $(X,Y)$, and hence, the optimization is over $p_{U|XY}$, while in the second one, the curator can only observe $Y$ and the optimization is over $p_{U|Y}$. In both settings, the utility-privacy trade-off is investigated from theoretical and practical perspective. More specifically, several privacy-preserving schemes are proposed in these settings based on generalizing the notion of statistical independence. Moreover, closed-form solutions are provided in certain scenarios. Finally, convexity arguments are provided for the utility-privacy trade-off as functionals of the joint distribution $p_{XY}$.
\end{abstract}
% \begin{IEEEkeywords}
% Privacy, perfect privacy, non-private information, mutual information.
% \end{IEEEkeywords}
\section{Introduction}
% With the explosion of machine learning algorithms, and their applications in many areas of science, technology, and governance, data is becoming an extremely valuable asset. However, with the growing power of machine learning algorithms in learning individual behavioral 
% patterns from diverse data sources, privacy is becoming a major concern, calling for strict regulations on data ownership and distribution. On the other hand, many recent examples of de-anonymization attacks on publicly available anonymized data ({\color{black}e.g.,}\!\!\cite{Nar},\cite{Ding}) show that regulation alone will not be sufficient to limit access to private data. An alternative approach, also considered in this paper, is to process the data at the time of release, such that no private information is leaked, called \textit{perfect privacy}. Assuming that the joint distribution of the observed data and the latent variables that should be kept private is known, an information-theoretic study is carried out in this paper to characterize the fundamental limits on perfect privacy. 

Consider a situation in which Alice wants to release some  \textit{useful} information about herself to Bob, represented by random variable $Y$, and she receives some utility from this disclosure of information. At the same time, she wishes to conceal from Bob some \textit{private} information which depends on $Y$, represented by $X$. To this end, instead of letting Bob have a direct access to $Y$, a \textit{privacy-preserving mapping}/\textit{data release mechanism}\footnote{The terms "mapping", "mechanism", "scheme" and "algorithm" are used interchangeably in this paper.} is applied, whereby a distorted version of $Y$, denoted by $U$, is revealed to Bob. In this context, privacy and utility are competing goals: The more $Y$ is distorted by the privacy-preserving mapping, the less information can Bob infer about $X$, but also the less the utility that can be obtained. This trade-off is the very result of the dependencies between $X$ and $Y$. 

Stated formally in a general context, consider a triplet of random variables $(X,Y,W)$, distributed according to the given/known joint probability mass function (pmf) $p_{XYW}$. Let $X$ denote the \textit{private/sensitive data} that the user/curator wants to conceal, $Y$ denote the \textit{public/useful data} the user wishes to disclose, and $W$ denote the \textit{observable data} that the curator observes, which can be regarded as a noisy version of $(X,Y)$. Assume that the privacy-preserving mapping takes $W$ as input, and maps it to the \textit{released data}, denoted by $U$. In this scenario, $(X,Y)-W-U$ form a Markov chain, and the privacy-preserving mapping is captured by the conditional distribution $p_{U|W}$. We assume that all the alphabets/supports $\mathcal{X,Y,W}$ are finite sets.

As stated, for the triplet $(X,Y,W)$, a privacy-preserving mapping can be constructed by obtaining pmfs $p_{U|W}(\cdot|w)$ for each $w\in\mathcal{W}$ such that $p_{U|W}$ meets certain conditions corresponding to the utility/privacy requirements. Equivalently, this can be carried out by obtaining $\mathcal{U}$, $p_{W|U}(\cdot|u),\ \forall u\in\mathcal{U}$, such that $p_W$ is preserved in the joint distribution $p_{WU}$ and those conditions are met. We call the former approach \textit{forward construction/model} and the latter one \textit{backward construction/model}.

In this paper, we are solely interested in the special cases of \textit{full data observation} and \textit{public data observation} which refer to the settings in which the privacy-preserving mapping has direct access to both the private and public data (i.e., $W=(X,Y)$) and only to the public data (i.e., $W=Y$), respectively. 

By adopting mutual information as the measure of both \textit{utility} and \textit{privacy} (i.e., $I(Y;U)$, and $I(X;U)$, respectively), the optimal utility-privacy (U-P) trade-off in the public data observation model is defined as
\begin{equation}\label{def}
g_{\epsilon}(X,Y)\triangleq\max_{\substack{p_{U|Y}:\\X-Y-U\\I(X;U)\leq\epsilon}}I(Y;U),
\end{equation}
and in the full data observation model, the optimal U-P trade-off can be formulated as
\begin{equation}\label{Geps}
   G_\epsilon(X,Y)\triangleq \max_{\substack{p_{U|XY}:\\I(X;U)\leq\epsilon}} I(Y;U),
\end{equation}
where the effective range of $\epsilon$ is $[0,I(X;Y)]$ by noting that both (\ref{def}) and (\ref{Geps}) have the upper bound of $H(Y)$ which is attained by setting $U\triangleq Y$, which in turn results in $I(X;U)=I(X;Y)$.\footnote{ That the maximums in (\ref{def}) and (\ref{Geps}) exist follows from standard arguments in real analysis (compactness, continuity) as in \cite{RG-JSAIT}. Furthermore, when $(X,Y)\sim p_{XY}$, the quantities $g_\epsilon(X,Y)$ and $G_\epsilon(X,Y)$ are written interchangeably as $g_\epsilon(p_{XY})$ and $G_\epsilon(p_{XY})$, respectively.}

Perfect privacy \cite{RG-JSAIT} refers to the stringent constraint of $X\independent U$, i.e., $\epsilon=0$ in the U-P trade-off. Assume that we have an algorithm which satisfies this constraint. In other words, once applied to $W$ (with $W=(X,Y)$ or $Y$ depending on the model involved), this algorithm releases $U$ such that $I(X;U)=0$. Select an arbitrary conditional pmf $p_{Z|XY}$, and construct a private-public pair as $(Z,Y)\sim\sum_xp_{XY}(x,\cdot)p_{Z|XY}(\cdot|x,\cdot)$. Applying the algorithm in this new context (with $W=(Z,Y)$ or $Y$), we get $U'$ such that $I(Z;U')=0$. The utility obtained is a lower bound on the original optimal U-P trade-off at $\epsilon=I(X;U')$, and by changing $p_{Z|XY}$ and repeating this process, we can sweep the whole range of $\epsilon$, i.e., $[0,I(X;Y)]$. This simple observation is the basis of the privacy-preserving schemes proposed in this paper.

The information-theoretic view of privacy has gained increasing attention recently, with an incomplete list of related literature being \cite{Calmon1,Makhdoumi,SG19,info7010015,Issa,RG-JSAIT,Shkel,Zamani,RRG,Liao,Cuff,RGTIFS,Wang}. In \cite{Calmon1}, a general statistical inference framework is proposed to capture the loss of privacy in legitimate transactions of data. In \cite{Makhdoumi}, the privacy-utility trade-off under the \textit{log-loss} cost function is considered, called as the \textit{privacy funnel}, which is closely related to the \textit{information bottleneck} introduced in \cite{Tishby} (see also \cite{SG19}). In \cite{Calmon2}, sharp bounds on the optimal privacy-utility trade-off for the privacy funnel are derived, and an alternative characterization of the perfect privacy condition (also studied in \cite{Berger} in a different context) is proposed. Measuring both the privacy and the utility in terms of mutual information, perfect privacy is fully characterized in \cite{Shahab1} for the binary case. 

The current paper contributes to this context as follows.
\begin{itemize}
    \item Upper and lower bounds on $G_0(X,Y)$ are proposed and their tightness is investigated.
    \item Based on the aforementioned bounds, and the simplex method \cite{murty}, a closed-from solution for $G_0(X,Y)$ is derived when $X$ is binary, or $(|\mathcal{X}|,|\mathcal{Y}|)=(3,2)$. In this context, it is shown that for fixed conditional pmf $p_{Y|X}$, the optimal privacy-preserving mapping does not depend on $p_X$, which is of practical interest when the curator is unaware of the distribution of the private data.
    \item It is shown that for fixed $p_X$, $G_\epsilon(X,Y)$ is concave in $p_{Y|X}$, and for fixed $p_{Y|X}$, both $G_0(X,Y)$ and $g_0(X,Y)$ are convex in $p_X$.
    \item Based on the lower bound on $G_0(X,Y)$, a privacy-preserving scheme is presented as a lower bound on the optimal U-P trade-off in the case of full data observation.
    \item When $X$ is binary, an algorithmic lower bound on $g_0(X,Y)$ is presented, which is optimal when $|\mathcal{Y}|=3$.
    \item Based on this algorithm, a privacy-preserving scheme is presented as a lower bound on the optimal U-P trade-off in the case of public data observation.
\end{itemize}
% In \cite{Makhdoumi}, the privacy-utility trade-off under the \textit{log-loss} cost function is considered, called as the \textit{privacy funnel}, which is closely related to the \textit{information bottleneck} introduced in \cite{Tishby} (see also \cite{SG19}). In \cite{Calmon2}, sharp bounds on the optimal privacy-utility trade-off for the privacy funnel are derived, and an alternative characterization of the perfect privacy condition (also studied in \cite{Berger} in a different context) is proposed. Measuring both the privacy and the utility in terms of mutual information, perfect privacy is fully characterized in \cite{Shahab1} for the binary case. Furthermore, a new quantity is introduced to capture the amount of private information about the latent variable $X$ carried by the useful data $Y$.

% We study the information theoretic perfect privacy in this paper.

\textbf{Notations.} Random variables are denoted by capital letters ($X,Y$, etc.), their realizations by lower case letters ($x,y$, etc.), and their alphabets by capital letters in calligraphic font ($\mathcal{X},\mathcal{Y}$, etc.). 
Matrices,  and vectors are denoted by bold capital and bold lower case letters, respectively. 
The rank of matrix $\mathbf{A}$ is denoted by $\textnormal{rank}(\mathbf{A})$. 
For integers $m, n$, we have the discrete interval $[m:n]\triangleq\{m, m+1,\ldots,n\}$ if $m\leq n$, and $\emptyset$ (the empty set), otherwise. The set $[1:n]$ is written in short as $[n]$. Given two positive integers $a,b$, $a$ modulo $b$ is abbreviated as $a\textnormal{ mod }b$, and $\mathcal{S}\textnormal{ mod b}$ denotes $\{x\textnormal{ mod }b|\ \forall x \in \mathcal{S}\}$.
Given two pmfs $p,q$, the Kullback–Leibler divergence from $q$ to $p$ is defined as\footnote{We assume that $p$ is absolutely continuous with respect to $q$, i.e., $q(x)=0$ implies $p(x)=0$, otherwise, $D(p||q)\triangleq\infty$.} $D(p||q)\triangleq\sum_{x}p(x)\log_2\frac{p(x)}{q(x)}$. {\color{black}All the logarithms in this paper are to the base of 2.} For $0\leq t\leq 1$, $\Bar{t}\triangleq 1-t$ and $H_b(t)\triangleq-t\log t-\Bar{t}\log \Bar{t}$ denotes the binary entropy function (with the convention $0\log 0\triangleq 0$). For an event $\mathcal{E}$, the indicator $\mathds{1}_{\{\mathcal{E}\}}$ is one when $\mathcal{E}$ occurs, and zero, otherwise. The domain of function $f$ is denoted by $\textnormal{dom}(f)$, and throughout the paper, if there are more than one candidate for $\argmin_{x\in\textnormal{dom}(f)}f(x)$, one is selected arbitrarily.
% Given two positive integers $a,b$, we have $a$ modulo $b$ written in short as $a\textnormal{ mod }b$. We have that $d_{\textnormal{TV}}$, $\lfloor\cdot\rfloor$, and $\lceil\cdot\rceil$ denote the total variation distance, the floor, and the ceiling operators, respectively. 
For a real number $x$, we define $(x)^+\triangleq\max\{0,x\}$. Define the support of a given pair $(X,Y)\sim p_{XY}$ as $ \textnormal{supp}(X,Y)\triangleq\{(x,y)\in\mathcal{X}\times\mathcal{Y}|p_{XY}(x,y)>0\}.$
Finally, throughout this paper, we encounter summations of the form $\sum_{u\in\mathcal{U}}(\cdot)$, i.e., summation over the elements of $\mathcal{U}$. If $\mathcal{U}$ happens to be the empty set, this summation is defined as zero. 
% \footnote{Due to space constraints, some of the proofs are provided in the extended online version \cite{rassouli2017perfect}. {\color{red}REMOVE THIS FOR THE ARXIV VERSION.}}

\section{Preliminaries}

% In this context, we assume that $p_X(x),p_Y(y),p_W(w)>0,\forall (x,y,w)\in\mathcal{X}\times\mathcal{Y}\times\mathcal{W}$, since otherwise the alphabets could have been modified accordingly. 
% This equivalently means that the corresponding probability vectors $\mathbf{p}_X,\mathbf{p}_Y, \mathbf{p}_W$ are in the interior of their corresponding probability simplices, i.e., $\mathcal{P}(\mathcal{X}),\mathcal{P}(\mathcal{Y}),\mathcal{P}(\mathcal{W})$, respectively.

% The following proposition states the necessary and sufficient condition for the feasibility of perfect privacy.
% \begin{proposition}
% Perfect privacy is feasible for $(X,Y,W)\in\mathcal{X}\times\mathcal{Y}\times\mathcal{W}$ if and only if
% \begin{equation}\label{Fullshart}
% \textnormal{dim}\bigg(\textnormal{Null}(\mathbf{P}_{X|W})\backslash\textnormal{Null}(\mathbf{P}_{Y|W})\bigg)\neq 0.
% \end{equation}
% \end{proposition}
% \begin{proof}
% The proof is a simple generalization of \cite[Theorem 4]{Berger}, by noting that both $X-W-U$ and $Y-W-U$ form Markov chains. In other words, we have $X\independent U$ and $Y\not\independent U$ if and only if there exists a vector $\mathbf{v}$ in $\mathcal{P}(\mathcal{W})$, such that a change in $\mathbf{p}_W$ along $\mathbf{v}-\mathbf{p}_W$ changes $\mathbf{p}_Y$ (in line with $Y\not\independent U$), while keeps $\mathbf{p}_X$ unchanged, i.e., $X\independent U$. Equivalently, there exists a vector $\mathbf{v}'\in\textnormal{Null}(\mathbf{P}_{X|W})$ such that $\mathbf{v}'\not\in\textnormal{Null}(\mathbf{P}_{Y|W})$, which is equivalent to (\ref{Fullshart}).
% \end{proof}

Throughout the paper, the U-P plane refers to the 2-dimensional plane, in which the horizontal and vertical axes denote the privacy-leakage $I(X;U)$ and utility $I(Y;U)$, respectively. Furthermore, we say that a point $P=(P_x,P_y)$ is achievable on this plane if there exists a joint distribution $p_{XY}\cdot p_{U|XY}$ (in the case of full data observation), or $p_{XY}\cdot p_{U|Y}$ (in the case of public data observation), such that $I(X;U)=P_x$, and $I(Y;U)=P_y$. 

\begin{remark}\label{concavity}
When $\epsilon\in[0,I(X,Y)]$, both $g_\epsilon(X,Y)$ and $G_\epsilon(X,Y)$ are concave and strictly increasing functions of $\epsilon$ that lie above the lines connecting $(0,g_0(X,Y))$, and $(0,G_0(X,Y))$ to $(I(X;Y),H(Y))$, respectively. Furthermore, they both lie under the line connecting $(0,H(Y|X))$ to $(I(X;Y),H(Y))$. Also, for an optimal mapping in (\ref{def}) or (\ref{Geps}), we have $I(X;U)=\epsilon$. Finally, both $g_0(X,Y)$ and $G_0(X,Y)$ can be obtained via a linear program (LP).
\end{remark}
\begin{proof}
The concavity of $g_\epsilon(X,Y)$ in $\epsilon$ is shown in \cite[lemma 2]{info7010015}. That of $G_\epsilon(X,Y)$ follows similarly\footnote{It can be shown via example that the claim of strict concavity is too strong for these two curves.}. We also have
\begin{align}
   g_\epsilon(X,Y)&\leq G_\epsilon(X,Y)\label{byd}\\
   &\leq \max_{\substack{p_{U|X,Y}:\\I(X;U)\leq\epsilon}} I(X,Y;U)\nonumber\\
   &=\max_{\substack{p_{U|X,Y}:\\I(X;U)\leq\epsilon}}I(X;U)+I(Y;U|X)\nonumber\\
   &\leq \epsilon + H(Y|X)\label{ful1},\ \epsilon\in[0,I(X;Y)],
\end{align}
where (\ref{byd}) is by definition. This means that both $g_\epsilon(X,Y)$ and $G_\epsilon(X,Y)$ lie under the line connecting $(0,H(Y|X))$ to $(I(X;Y),H(Y))$. Unless in the degenerate case of $X\independent Y$ (which results in $\epsilon\in[0,I(X;Y)]=\{0\}$), we have $H(Y|X)<H(Y)$, and hence, both $g_\epsilon(X,Y)$ and $G_\epsilon(X,Y)$, being concave functions, must be strictly increasing in $\epsilon$. As a result, they lie above the lines connecting $(0,g_0(X,Y))$, and $(0,G_0(X,Y))$ to $(I(X;Y),H(Y))$, respectively.

By noting that in maximizing a convex functional, the maximum occurs at an extreme point, we conclude that for any maximizer in (\ref{def}) or (\ref{Geps}), we have $I(X;U)=\epsilon$, when $\epsilon\in[0,I(X;Y)]$.

As shown in \cite{RG-JSAIT}, $g_0(X,Y)$ is obtained via an LP. More specifically, in the Markov chain $X-Y-U$, if we consider the backward model, the conditional pmf $p_{Y|U}(\cdot|u)$ must belong to a convex polytope determined by the condition $X\independent U$, i.e., $p_X(\cdot)=\sum_yp_{X|Y}(\cdot|y)p_{Y|U}(y|u)$. Since $H(Y|U=u)$ is a concave functional of $p_{Y|U}(\cdot|u)$, and its minimum occurs at an extreme point, we first obtain the extreme points of the aforementioned convex polytope, and what remains is to allocate proper weights, i.e., $p_U(\cdot)$, such that $H(Y|U)$ is minimized subject to $p_Y(\cdot)=\sum_u p_{Y|U}(\cdot|u)p_U(u)$, which is an LP. Similarly, by considering the Markov chain $X-(X,Y)-U$, $G_0(X,Y)$ can be obtained through an LP.
\end{proof}
% For a positive integer $n$ and a set of points $\mathcal{P}\triangleq\{(x_i,y_i)\}_{i=1}^n$ in $\mathds{R}^2$, define
% \begin{equation*}
%     \mathcal{F}_{[\mathcal{P}]}\triangleq\{f(x)|\textnormal{ dom}(f)=[\min_ix_i,\max_ix_i], f\textnormal{ is concave },f(x_i)\geq y_i,\ \forall i\in[n]\}
% \end{equation*}
% as the set of all concave functions $f(x)$ with domain $[\min_ix_i,\max_ix_i]$, for which $f(x_i)\geq y_i,\ \forall i\in[n]$.
\begin{definition}\label{UCE} The upper concave envelope of a set of points $\mathcal{P}\triangleq\{(x_i,y_i)\}_{i=1}^n$ in $\mathds{R}^2$ is defined as\footnote{The \textit{lower convex envelope} of $\mathcal{P}$ is the negative of the upper concave envelope of $\mathcal{P}^-\triangleq\{(x_i,y_i)|(x_i,-y_i)\in\mathcal{P},\ \forall i\in[n]\}$.}
\begin{equation*}
    uce_{[\mathcal{P}]}(\cdot)\triangleq \inf \{f(x)|\textnormal{ dom}(f)=[\min_ix_i,\max_ix_i], f\textnormal{ is concave },f(x_i)\geq y_i,\ \forall i\in[n]\}.
\end{equation*}
\end{definition}
% and the non-decreasing upper concave envelope of $\mathcal{P}$ is
% \begin{equation*}
%     uce^*_{[\mathcal{P}]}(\cdot)\triangleq \inf \left\{f(x)\in\mathcal{F}_{[\mathcal{P}]}\ |\ f(x)\leq f(y),\ \forall x,y\in \textnormal{dom}(f)\right\}.
% \end{equation*}
% \end{definition}
% Let $(x^*,y^*)\triangleq\argmax_{(x,y)}\{y|(x,y)\in\mathcal{P}\}$, and if there are more than one maximizer, select one arbitrarily. It can be readily verified that $uce^*_{[\mathcal{P}]}(x)$ coincides $uce_{[\mathcal{P}]}(x)$ for $x\in[\min_ix_i,x^*]$, and remains constant at $y^*$ when $x\in[x^*,\max_ix_i]$. As a special case, if the $y$ coordinate corresponding to $\max_ix_i$ is maximum, we have $uce^*(\cdot)=uce(\cdot).$

Figure \ref{fig_uce} provides an example of the upper concave envelope (solid line) of a set of points (filled circles).
\begin{figure}
    \centering
    \includegraphics{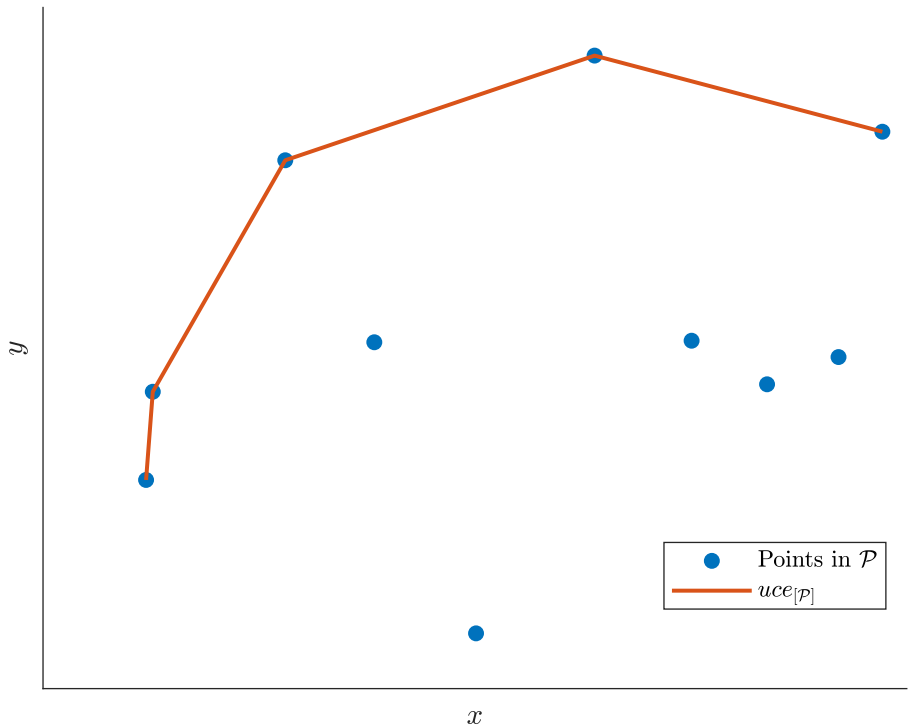}
    \caption{The upper concave envelope of a set of points.}
    \label{fig_uce}
\end{figure}
\begin{remark}
From Remark \ref{concavity} and Definition \ref{UCE}, it is obvious that if the points in $\mathcal{P}$ are achievable on the U-P plane, 
$\{(x,uce_{[\mathcal{P}]}(x))\ |\ \forall x\in[\min_ix_i,\max_ix_i]\}$
is also achievable, and hence, $uce_{[\mathcal{P}]}(\cdot)$ serves as a lower bound on the optimal utility-privacy trade-off\footnote{Since $(I(X;Y),H(Y))$ is an achievable point, we can include it in $\mathcal{P}$. Therefore, the resulting $uce_{[\mathcal{P}]}(\cdot)$ will be non-decreasing.}.   
\end{remark}

The privacy-preserving algorithms in this paper find achievable points based on a generalization of statistical independence. The sole purpose of the following definition is to simplify the explanation of this generalization.
\begin{definition} For a pair of random variables  $(X,U)\sim p_{XU}$, and $k\in[|\mathcal{X}|-1]$, if
\begin{equation}
    \bigg|\bigg\{x\in\mathcal{X}\ \bigg|\ p_{X|U}(x|u)=p_X(x),\ \forall u\in\mathcal{U}\bigg\}\bigg|\geq k,
\end{equation}
or equivalently, 
\begin{equation}\label{fed2}
    \bigg|\bigg\{x\in\mathcal{X}\ \bigg|\ p_{U|X}(\cdot|x)=p_U(\cdot)\bigg\}\bigg|\geq k,
\end{equation}
we say that $X$ is \textit{at least $k$-independent of $U$}, which is denoted by $X\stackrel{k}\independent U$.
\end{definition}
Note that i) $X\stackrel{k}\independent U\Longrightarrow X\stackrel{k-1}\independent U,\ k\in[2:|\mathcal{X}|-1]$, ii) $X\stackrel{|\mathcal{X}|-1}\independent U\Longrightarrow X\independent U$, and iii) this is an asymmetric notion, i.e., $X\stackrel{k}\independent U\not\Longrightarrow U\stackrel{k}\independent X$. Furthermore, if $X-U-Z$ form a Markov chain, from (\ref{fed2}), we conclude that $X\stackrel{k}\independent U$ results in $X\stackrel{k}\independent Z$.

% {\color{red} Two approaches: Merging ($I(Y;U)\to 0$), and $k$-independence ($I(X;U)\to 0$). The latter can also be applied to $Y$ to result in $I(Y;U)\to 0$.}
\section{Full data observation model}
In this Section, we assume that the curator has access to both $X$ and $Y$, i.e., $W=(X,Y)$, and propose an achievable scheme, i.e., a lower bound on $G_\epsilon(X,Y)$, which is defined in (\ref{Geps}). To this end, we find achievable points on the U-P trade-off and propose their upper concave envelope as the lower bound.

% While Remark \ref{concavity} states the concavity of $G_\epsilon(X,Y)$ (and also $g_\epsilon(X,Y)$) as a function of $\epsilon$, the following Proposition investigates its concavity as a functional of the joint distribution $p_{XY}$.

The following Lemma is central to the analysis in this Section.
\begin{lemma}\label{lemcardi}
For an optimal $p_{U^*|XY}$ in the evaluation of $G_0(X,Y)$, we must have
\begin{equation}\label{chahar}
    |\{y\in\mathcal{Y}|p(x,y|u^*)>0\}|= 1,\ \forall (x,u^*) \in\mathcal{X}\times\mathcal{U}^*,
\end{equation}
which results in
\begin{equation}\label{entfunction}
    H(Y|X,U^*)=0,
\end{equation}
where $U^*\in\mathcal{U}^*$ is induced by the optimal mapping $p_{U^*|XY}$. In other words, for any $(x,u^*)\in\mathcal{X}\times\mathcal{U}^*$, there must exist $y_{x,u^*}\in\mathcal{Y}$ such that
\begin{equation}\label{sheshomi}
    p(x,y_{x,u^*}|u^*)=p(x),\ p(x,y|u^*)=0,\ \forall y\in\mathcal{Y}\backslash\{y_{x,u^*}\},
\end{equation}
which results in a lower bound on the cardinality of $|\mathcal{U}^*|$ as
\begin{equation*}
    |\mathcal{U}^*|\geq\max_{x\in\mathcal{X}}|\{y\in\mathcal{Y}|p(y|x)>0\}|.
\end{equation*}
\end{lemma}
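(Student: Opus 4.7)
The plan is to exploit the perfect-privacy constraint $X\independent U^*$ to argue that, without loss of generality, an optimal $U^*$ can be chosen so that every conditional distribution $p_{XY|U^*}(\cdot,\cdot|u^*)$ is an extreme point of the polytope
\[
\mathcal{P}_X\triangleq\bigg\{q\in\mathds{R}_{\geq 0}^{|\mathcal{X}||\mathcal{Y}|}:\sum_y q(x,y)=p_X(x),\ \forall x\in\mathcal{X}\bigg\}.
\]
The constraint $I(X;U^*)=0$ immediately forces $p_{XY|U^*}(\cdot,\cdot|u^*)\in\mathcal{P}_X$ for every $u^*\in\mathcal{U}^*$, because the $X$-marginal of this conditional equals $p_{X|U^*}(\cdot|u^*)=p_X(\cdot)$. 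So the feasibility region decouples per $u^*$ into copies of $\mathcal{P}_X$, and the question reduces to understanding its extreme points.

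The key structural observation is that $\mathcal{P}_X$ is a Cartesian product of scaled simplices indexed by $x\in\mathcal{X}$ (since $q(x,\cdot)$ lies independently across $x$ in $p_X(x)\cdot\Delta_{|\mathcal{Y}|}$), so its extreme points are exactly the \textbf{functional} distributions $q_f(x,y)=p_X(x)\mathds{1}_{\{y=f(x)\}}$ indexed by functions $f:\mathcal{X}\to\mathcal{Y}$. Given any optimal $U^*$, I would decompose each conditional as $p_{XY|U^*}(\cdot,\cdot|u^*)=\sum_k\alpha_k^{(u^*)}q_{f_k^{(u^*)}}$ (by Carath\'eodory) and define the refinement $U^{**}=(U^*,K)$ where $K$ selects the extreme-point index with probability $\alpha_k^{(u^*)}$ conditional on $U^*=u^*$. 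Then $p_{XY|U^{**}}(\cdot,\cdot|(u^*,k))=q_{f_k^{(u^*)}}$ is concentrated on the graph of $f_k^{(u^*)}$, which directly delivers (\ref{chahar}) and (\ref{sheshomi}) with $y_{x,(u^*,k)}=f_k^{(u^*)}(x)$, and hence (\ref{entfunction}) since $Y$ is a deterministic function of $(X,U^{**})$.

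What remains is to check that the refinement preserves feasibility and optimality, and then extract the cardinality bound. Feasibility: $X\independent U^{**}$ holds because every extreme point $q_f$ has $X$-marginal $p_X$ by construction. Optimality: $U^*$ is a deterministic function of $U^{**}$, hence $I(Y;U^{**})\geq I(Y;U^*)$ by the data processing inequality, so the refinement is itself an optimizer and may be relabeled as $U^*$. Finally, combining (\ref{sheshomi}) with $X\independent U^*$ gives the identity
\[
p_{Y|X}(y|x)=\sum_{u^*\in\mathcal{U}^*}p_{U^*|X}(u^*|x)\mathds{1}_{\{y=y_{x,u^*}\}}=\sum_{u^*:\,y_{x,u^*}=y}p_{U^*}(u^*),
\]
so every $y$ with $p(y|x)>0$ must be hit by at least one $y_{x,u^*}$; thus the map $u^*\mapsto y_{x,u^*}$ is surjective onto $\{y:p(y|x)>0\}$, giving $|\mathcal{U}^*|\geq|\{y:p(y|x)>0\}|$ for each $x$ and hence the stated maximum. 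The main conceptual step, and the one I would spend the most care on, is the extreme-point characterization of $\mathcal{P}_X$ together with the verification that the refinement both preserves the perfect-privacy constraint and does not decrease $I(Y;U)$; once these are in hand the rest of the lemma is bookkeeping.
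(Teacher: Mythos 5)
Your argument has a genuine gap relative to what the lemma actually claims. The lemma is a \emph{necessity} statement: \emph{every} optimal $p_{U^*|XY}$ must satisfy (\ref{chahar})--(\ref{sheshomi}). Your Carath\'eodory refinement $U^{**}=(U^*,K)$ only shows that \emph{some} optimizer can be chosen with the functional structure: the data-processing step gives $I(Y;U^{**})\geq I(Y;U^*)$, a weak inequality, so it cannot rule out an optimizer that violates (\ref{chahar}). Worse, your particular refinement genuinely may not improve anything: two distinct functional extreme points $q_{f_1}\neq q_{f_2}$ can have identical $Y$-marginals (e.g.\ with $p_X(x_0)=p_X(x_1)$, take $f_1(x_0)=y',f_1(x_1)=y''$ and $f_2$ the swap), in which case splitting $u^*$ along them leaves $p_{Y|U}$, and hence $I(Y;U)$, unchanged. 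So from your route one can only conclude a ``without loss of generality'' version of the lemma; the statement ``we must have'' does not follow. (The existence version would in fact suffice for the paper's downstream uses, such as the extreme-point enumeration for the LP and the upper bound of Theorem \ref{th2}, but it is not what the lemma says, and the full claim is true.)

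The paper closes exactly this hole with a strict-improvement argument: assuming some optimal $u_0$ has $p(x_0,y'|u_0)>0$ and $p(x_0,y''|u_0)>0$ for a single $x_0$ and $y'\neq y''$, it splits $u_0$ into $u_0',u_0''$, moving all of the $(x_0,y')$ and $(x_0,y'')$ mass to $u_0'$ and $u_0''$ respectively while keeping every other entry of $p_{XY|U}(\cdot,\cdot|u_0)$ intact. Because only the $y$-allocation within the row $x_0$ changes, $p_{X|U}(\cdot|u_0')=p_{X|U}(\cdot|u_0'')=p_{X|U}(\cdot|u_0)$, so $I(X;U)$ is unchanged; but the two new $Y$-conditionals are necessarily \emph{distinct} (one gains mass at $y'$, the other loses it), and $p_{Y|U}(\cdot|u_0)$ is their nontrivial mixture, so strict concavity of entropy gives a strict decrease of $H(Y|U)$, contradicting optimality. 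That pairwise, single-$x_0$ split is the ingredient your decomposition lacks: it guarantees strictness, and hence necessity for every optimizer. The remaining parts of your proposal — the product-of-scaled-simplices characterization of the extreme points, the preservation of $X\independent U$ under refinement, and the surjectivity argument $p(y|x)=\sum_{u^*:\,y_{x,u^*}=y}p(u^*)$ yielding $|\mathcal{U}^*|\geq\max_x|\{y:p(y|x)>0\}|$ — are sound (modulo restricting to $x$ with $p_X(x)>0$), but they sit on top of the weaker existence claim rather than proving the lemma as stated.
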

\begin{proof}
    The proof is provided in Appendix \ref{app22}.\footnote{This Lemma is also given in \cite[Lemma 5]{Shkel}. Since it was also independently provided in \cite[Lemma 3]{arxv4} by the authors of the current manuscript, it is mentioned here.}
\end{proof}
Lemma \ref{lemcardi} is exemplified in Figure \ref{fig1} in which $(X,Y)\in\{x_1,x_2,x_3\}\times\{y_1,y_2\}$. Let $p_i\triangleq p(x_i),i\in[3]$. As (\ref{sheshomi}) requires, for each realization of $U$, there is exactly one link to subgroup $i$ of nodes, i.e., $\{(x_i,y_j)\}_{j=1}^2$ with transition probability $p_i$.

\begin{figure}
\centering
\begin{tikzpicture}[scale=0.8, transform shape]
  \node[dspnodeopen, minimum width=4pt,  dsp/label=left] (X_1) {$x_1y_1$};    
  \node[dspnodeopen, minimum width=4pt, below=1cm of X_1, dsp/label=left] (X_2) {$x_1y_2$};
  \node[dspnodeopen, minimum width=4pt, below=2cm of X_2, dsp/label=left] (X_3) {$x_2y_1$};
  \node[dspnodeopen, minimum width=4pt, below=1cm of X_3, dsp/label=left] (X_4) {$x_2y_2$};
  \node[dspnodeopen, minimum width=4pt, below=2cm of X_4, dsp/label=left] (X_5) {$x_3y_1$};
  \node[dspnodeopen, minimum width=4pt, below=1cm of X_5, dsp/label=left] (X_6) {$x_3y_2$};

    \node[dspnodeopen, minimum width=4pt, right=3cm of X_1, dsp/label=right] (U_1) {$u_1$};    
  \node[dspnodeopen, minimum width=4pt, below=1cm of U_1, dsp/label=right] (U_2) {$u_2$};
  \node[dspnodeopen, minimum width=4pt, below=1cm of U_2, dsp/label=right] (U_3) {$u_3$};
   \node[dspnodeopen, minimum width=4pt, below=1cm of U_3, dsp/label=right] (U_4) {$u_4$};
   \node[dspnodeopen, minimum width=4pt, below=1cm of U_4, dsp/label=right] (U_5) {$u_5$};
   \node[dspnodeopen, minimum width=4pt, below=1cm of U_5, dsp/label=right] (U_6) {$u_6$};
   \node[dspnodeopen, minimum width=4pt, below=1cm of U_6, dsp/label=right] (U_7) {$u_7$};
   \node[dspnodeopen, minimum width=4pt, below=1cm of U_7, dsp/label=right] (U_8) {$u_8$};
    
  \draw[line width=1.2pt] (U_1) to node[near end, inner sep=1pt, above,sloped] {}(X_1);
  \draw[line width=1.2pt] (U_1) to node[near end, inner sep=1pt, above,sloped] {}(X_3);
  \draw[line width=1.2pt] (U_1) to node[near end, inner sep=1pt, above,sloped] {}(X_5);
  \draw[line width=1.2pt] (U_2) to node[near end, inner sep=1pt, above,sloped] {}(X_1);
  \draw[line width=1.2pt] (U_2) to node[near end, inner sep=1pt, above,sloped] {}(X_3);
  \draw[line width=1.2pt] (U_2) to node[near end, inner sep=1pt, above,sloped] {}(X_6);
  \draw[line width=1.2pt] (U_3) to node[near end, inner sep=1pt, above,sloped] {}(X_1);
  \draw[line width=1.2pt] (U_3) to node[near end, inner sep=1pt, above,sloped] {}(X_4);
  \draw[line width=1.2pt] (U_3) to node[near end, inner sep=1pt, above,sloped] {}(X_5);
  \draw[line width=1.2pt] (U_4) to node[near end, inner sep=1pt, above,sloped] {}(X_1);
  \draw[line width=1.2pt] (U_4) to node[near end, inner sep=1pt, above,sloped] {}(X_4);
  \draw[line width=1.2pt] (U_4) to node[near end, inner sep=1pt, above,sloped] {}(X_6);
  \draw[line width=1.2pt] (U_5) to node[near end, inner sep=1pt, above,sloped] {}(X_2);
  \draw[line width=1.2pt] (U_5) to node[near end, inner sep=1pt, above,sloped] {}(X_3);
  \draw[line width=1.2pt] (U_5) to node[near end, inner sep=1pt, above,sloped] {}(X_5);
  \draw[line width=1.2pt] (U_6) to node[near end, inner sep=1pt, above,sloped] {}(X_2);
  \draw[line width=1.2pt] (U_6) to node[near end, inner sep=1pt, above,sloped] {}(X_3);
  \draw[line width=1.2pt] (U_6) to node[near end, inner sep=1pt, above,sloped] {}(X_6);
  \draw[line width=1.2pt] (U_7) to node[near end, inner sep=1pt, above,sloped] {}(X_2);
  \draw[line width=1.2pt] (U_7) to node[near end, inner sep=1pt, above,sloped] {}(X_4);
  \draw[line width=1.2pt] (U_7) to node[near end, inner sep=1pt, above,sloped] {}(X_5);
  \draw[line width=1.2pt] (U_8) to node[near end, inner sep=1pt, above,sloped] {}(X_2);
  \draw[line width=1.2pt] (U_8) to node[near end, inner sep=1pt, above,sloped] {}(X_4);
  \draw[line width=1.2pt] (U_8) to node[near end, inner sep=1pt, above,sloped] {}(X_6);

\end{tikzpicture}
\caption{An illustrative representation of Lemma \ref{lemcardi} for $(X,Y)\in\{x_1,x_2,x_3\}\times\{y_1,y_2\}.$ If $u_k$ is connected to pair $(x_i,y_j)$, we have $p(x_i,y_j|u_k)=p(x_i),\ (i,j,k)\in[3]\times[2]\times[8]$.}
\label{fig1}
\end{figure}
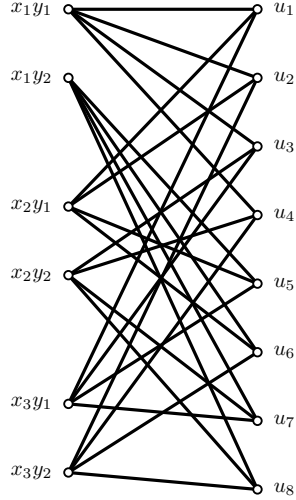
\begin{remark} 
If $Y$ is not a function of $X$, we have $G_0(X,Y)>0$ by \cite[Theorem 4]{RG-JSAIT} which is resulted by a $U$ such that $X\independent U$ and $H(Y|X,U)=0$ according to Lemma \ref{lemcardi}. Obviously, if $Y$ is a function of $X$, we can select $U$ as an arbitrary singleton. This observation provides an alternative proof for the functional representation lemma \cite[p. 626]{Elgamal}.
    
\end{remark}
In order to propose a lower bound on $G_\epsilon(X,Y)$, we start with $\epsilon = 0$, which is equivalent to $X\independent U$. It is already known that $G_0(X,Y)$ can be obtained via an LP whose dimension is the total number of extreme points of the convex polytope stated earlier in Remark \ref{concavity}. However, according to Lemma \ref{lemcardi}, these extreme points are already known. They are all the conditional pmfs $p_{XY|U}$ that satisfy the property in (\ref{sheshomi}). As a result, the dimension of the LP involved in evaluating $G_0(X,Y)$ is $\prod_{x\in\mathcal{X}}|\{y\in\mathcal{Y}|p(y|x)>0\}|\leq |\mathcal{Y}|^{|\mathcal{X}|}$. Note that even assuming a polynomial time complexity for the algorithm used for solving the LP, unless $|\mathcal{X}|$ and $|\mathcal{Y}|$ are small or the matrix of joint distribution $\mathbf{P}_{XY}$ is sparse, the problem becomes computationally intractable in terms of time and space. Therefore, a tractable method is desirable.

In what follows, an algorithm (Algorithm 1) is proposed in Theorem \ref{thbound5} that provides a lower bound on $G_0(X,Y)$. This algorithm is proved to be optimal in Theorem \ref{th2} when $X$ is binary or $(|\mathcal{X}|,|\mathcal{Y}|)=(3,2)$. Finally, building upon Algorithm 1, Proposition \ref{Prop1} presents Algorithm 2 which produces a privacy-preserving mapping as a lower bound on $G_\epsilon(X,Y),\ \epsilon\in[0,I(X;Y)]$.

% Lemma \ref{lemcardi} implies that in the optimal setting, for any $u^*\in\mathcal{U}^*$, and any $x\in\mathcal{X}$, there exists $y_{x,u^*}\in\mathcal{Y}$, such that $p(x,y_{x,u^*}|u^*)=p(x)$, and $p(x,y|u^*)=0,\ \forall y\in\mathcal{Y}\backslash\{y_{x,u^*}\}$. Therefore, in the achievable scheme, it makes sense to build a mapping $p_{U|X,Y}$, such its corresponding $p_{X,Y|U}$ satisfies this condition. To this end, we start with the backward model, i.e., $p_{X,Y|U}$, by imposing that i) for all the realizations $u$ of $U$, the condition in lemma \ref{lemcardi} must be satisfied, ii) the joint pmf $p_{X,Y}$ must be preserved in $p_{X,Y,U}$. The results are provided in the following Theorem.  
\begin{theorem}\label{thbound5}
For a given pair $(X,Y)\sim p_{XY}$, we have
\begin{align}
   G_0(X,Y)&\geq \left(H(Y)-\left(1-\sum_y\min_x p(y|x)\right)\min\{H(X),\log|\mathcal{Y}|\}\right)^+.\label{G0lowerbound}
\end{align}
% {\color{black}and $\mathcal{L}^c(X\to Y)$ is the maximal cost leakage defined in \cite{Issa} as
% \begin{equation*}
%     \mathcal{L}^c(X\to Y)\triangleq-\log\sum_y\min_x p(y|x).
% \end{equation*}
% }
\end{theorem}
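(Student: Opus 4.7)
The plan is to construct an explicit $p_{U|XY}$ satisfying perfect privacy $X\independent U$ and show that it achieves the stated right-hand side as a lower bound on $I(Y;U)$. The construction uses the classical decomposition of $p_{Y|X}$ into an $X$-independent ``common part'' and a residual.

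Set $\alpha_y \triangleq \min_{x\in\mathcal{X}} p(y|x)$, $\alpha \triangleq \sum_y \alpha_y \in [0,1]$, and for $\alpha\in(0,1)$ define $q(y) \triangleq \alpha_y/\alpha$ and $\tilde p(y|x) \triangleq (p(y|x)-\alpha_y)/(1-\alpha)$, so that $p(y|x) = \alpha q(y) + (1-\alpha)\tilde p(y|x)$. I would construct $U$ as a random function in $\mathcal{Y}^{\mathcal{X}}$: draw $E\sim\mathrm{Bern}(\alpha)$ independently of $X$; if $E=1$, sample $V\sim q$ and let $U$ be the constant function $x\mapsto V$; if $E=0$, independently for each $x$ draw $F(x)\sim \tilde p(\cdot|x)$ and let $U = F$. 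In either case set $Y = U(X)$. The degenerate cases $\alpha\in\{0,1\}$ collapse to one branch and are handled identically.

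Then I would verify three properties. (i) The induced marginal matches $p_{XY}$: for every $x,y$, $P(Y=y\mid X=x) = \alpha q(y) + (1-\alpha)\tilde p(y|x) = p(y|x)$. (ii) $X\independent U$, since the distribution of $U$ is determined by $(E,V,F)$, all drawn independently of $X$. (iii) $H(Y|X,U)=0$, since $Y=U(X)$ is deterministic given $(X,U)$.

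Finally I would bound $H(Y|U)$ by conditioning on $E$. On $\{E=1\}$, $U$ is a constant function, so $Y$ is determined by $U$ alone and $H(Y|U,E=1)=0$. On $\{E=0\}$, using that $Y$ is a function of $(X,U)$ and that $X\independent U \mid E=0$,
\[
H(Y|U,E=0) \;=\; I(X;Y|U,E=0) \;\le\; H(X|U,E=0) \;=\; H(X),
\]
and trivially $H(Y|U,E=0) \le \log|\mathcal{Y}|$. Thus $H(Y|U) \le (1-\alpha)\min\{H(X),\log|\mathcal{Y}|\}$, giving $I(Y;U) \ge H(Y) - (1-\alpha)\min\{H(X),\log|\mathcal{Y}|\}$; combining with the trivial $I(Y;U)\ge 0$ produces the $(\cdot)^+$. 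The only obstacle is conceptual rather than computational: recognizing that the common mass $\sum_y\min_x p(y|x)$ can be peeled off as an entropy-free, $X$-independent contribution, so that the residual inherits both the $\log|\mathcal{Y}|$ ceiling (trivially) and the $H(X)$ ceiling (via the functional representation $Y=U(X)$ combined with $X\independent U$).
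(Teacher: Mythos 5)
Your construction is correct and reaches the bound (\ref{G0lowerbound}) by a genuinely different route. The paper proves the theorem constructively via Algorithm 1: a backward, waterfilling-style construction that first spends probability $\sum_y\min_x p(y|x)$ on realizations $u$ with $H(Y|U=u)=0$, and then iteratively creates realizations for which $p_{XY|U}(\cdot,\cdot|u)$ carries the same mass values as $p_X$, so that $H(Y|U=u)\leq\min\{H(X),\log|\mathcal{Y}|\}$; this yields an explicit finite mapping with $|\mathcal{U}|\leq|\textnormal{supp}(X,Y)|-|\mathcal{X}|+1$, which the rest of the paper reuses (it underlies the $p_X$-invariance of Remark \ref{Rem1} and the scheme of Proposition \ref{Prop1}). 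You instead peel off the common part of the channel, $p(y|x)=\alpha q(y)+(1-\alpha)\tilde p(y|x)$ with $\alpha=\sum_y\min_x p(y|x)$, and release a random function $U\in\mathcal{Y}^{\mathcal{X}}$ built from $(E,V,F)\independent X$ with $Y=U(X)$: a clean one-shot argument, verifying the $p_{XY}$-marginal, $X\independent U$, and $H(Y|X,U)=0$ without any iterative bookkeeping, at the price of a much larger alphabet ($|\mathcal{Y}|^{|\mathcal{X}|}$) and of not producing the concrete mapping the later sections lean on. One step deserves an explicit line: you bound $H(Y|U,E)\leq(1-\alpha)\min\{H(X),\log|\mathcal{Y}|\}$ and then assert the same for $H(Y|U)$, but conditioning on $E$ can only decrease conditional entropy, so you need $H(E|U)=0$. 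This does hold in your construction, because for every $y$ the minimizing $x^*$ satisfies $\tilde p(y|x^*)=0$, so on $\{E=0\}$ the function $F$ is almost surely non-constant while on $\{E=1\}$ it is constant, making $E$ recoverable from $U$; alternatively, and more simply, declare $(U,E)$ to be the released variable, which is still independent of $X$ and gives the claimed bound directly.
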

\begin{proof}
{\color{black} 
% A mapping $p_{U|{X,Y}}$ is built in an iterative way, such that $p_{X|U}(\cdot|u)=p_{X}(\cdot),\ \forall u\in\mathcal{U}$ (and hence, $X\independent U$), and $H(Y|U)\leq \left(1-\sum_y\min_x p(y|x)\right)\cdot H(X)$ as follows. 
Define the index set
\begin{equation}
    \mathcal{I}\triangleq\{y\in\mathcal{Y}|p(y|x)>0,\ \forall x\in\mathcal{X}\},
\end{equation}
and relabel the elements of $\mathcal{Y}=\{y_1,y_2,\ldots,y_{|\mathcal{Y}|}\}$ such that the first $|\mathcal{I}|$ elements belong to $\mathcal{I}$. The algorithm proceeds as follows. First, $|\mathcal{I}|$ mass points for $U$ are created, which are denoted by $u_j (j\in[|\mathcal{I}|])$, each having $p_U(u_j)=\min_x p_{Y|X}(y_j|x)$, respectively, such that for all $x\in\mathcal{X}$, we have $p(x,y_k|u_j)=p(x)$, if $j=k$, and $0$, otherwise ($j,k\in[|\mathcal{I}|]$). It is evident that thus far, the posterior $p_X(\cdot|u_j)$ remains the same as the prior $p_X(\cdot)$, which is in line with the condition of $X\independent U$. Moreover, these mass points are such that $p_{Y|U}(y_j|u_j)=1$, resulting in $H(Y|U=u_j)=0,\ j\in[|\mathcal{Y}|]$. Afterwards, the iterations begin. In each iteration $i$, a mass point $u_{i+|\mathcal{I}|}$ is created such that $p_X(\cdot|u_{i+|\mathcal{I}|})=p_X(\cdot)$, and the conditional pmf of the pair $(X,Y)$ conditioned on $\{U=u_{i+|\mathcal{I}|}\}$ has the same mass probabilities as in $p_X(\cdot)$ resulting in $H(X,Y|U=u_{i+|\mathcal{I}|})=H(X)$. Hence, $H(Y|U=u_{i+|\mathcal{I}|})\leq H(X),\ i\geq 1$ (note that we also have the trivial upper bound $H(Y|U=u_{i+|\mathcal{I}|})\leq\log|\mathcal{Y}|$). The procedure is provided in Algorithm 1, 
% {\color{red}where in Step 7, if there are multiple minimizers, one is selected arbitrarily\footnote{Although it can be shown that the algorithm can be improved by a careful selection of minimizers, this is not needed in this discussion. In any case, note that Step 7 of the algorithm needs to be consistent with Step 6.}. 
and the algorithm terminates at some iteration $N$, where %$a_N(x,y)=0,\ \forall (x,y)\in\mathcal{X}\times\mathcal{Y}$,
$N\leq |\textnormal{supp}(X,Y)|-|\mathcal{I}|$, which is further tightened in Remark \ref{rem44}.}
\begin{algorithm}\label{algor}
\caption{A lower bound on $G_0(X,Y).$}
\begin{algorithmic}[1]
{\color{black}\Function{Algorithm1}{$p_{Y|X}$}
\State $p(u_j|x,y)=\frac{\min_{x\in\mathcal{X}} p(y|x)}{p(y|x)}\cdot\mathds{1}_{\{y=y_j\}},\ \forall j\in[|\mathcal{I}|], (x,y)\in\textnormal{supp}(X,Y)$
\State $a_1(x,y)= p(y|x)-\min_{x}p(y|x),\ \forall (x,y)\in\textnormal{supp}(X,Y)$
\State i = 1
\While{$\max_{x,y}a_i(x,y)\neq 0$}
\State $a_i^*= \min_{x,y}\{a_i(x,y)|a_i(x,y)>0\}$
\State $(x^*,y^*)=\argmin_{x,y}\{a_i(x,y)|a_i(x,y)>0\}$
\State $f_i(x) = \argmin_{y}\{a_i(x,y)|a_i(x,y)>0\},\ \forall x\in\mathcal{X}$
\State $p(u_{i+|\mathcal{I}|}|x,y)=\frac{a_i^*}{p(y|x)}\cdot\left(\mathds{1}_{\{a_i(x,y^*)>0, y=y^*\}}+\mathds{1}_{\{a_i(x,y^*)=0, y=f_i(x)\}}\right),\ \forall (x,y)\in\textnormal{supp}(X,Y)$
\State $a_{i+1}(x,y)= a_i(x,y)-a_i^*\cdot\left(\mathds{1}_{\{a_i(x,y^*)>0, y=y^*\}}+\mathds{1}_{\{a_i(x,y^*)=0, y=f_i(x)\}}\right),\ \forall (x,y)\in\textnormal{supp}(X,Y)$
\State $i = i + 1$
\EndWhile
\State \Return $p_{U|X,Y}$
\EndFunction}
\end{algorithmic}
\end{algorithm}

% Define $a_1(x,y)\triangleq p_{Y|X}(y|x),\ \forall (x,y)\in\mathcal{X}\times\mathcal{Y}$. For each iteration $i\geq 1$, we have
% \begin{align}
%     a_i^*(x)&\triangleq \min_{y\in\mathcal{Y}}\{a_i(x,y)|a_i(x,y)>0\},\ \forall x\in\mathcal{X}\\
%     f_i(x) &\triangleq \argmin_{y\in\mathcal{Y}}\{a_i(x,y)|a_i(x,y)>0\},\ \forall x\in\mathcal{X}\\
%     a_i^*&\triangleq \min_{x\in\mathcal{X}}a_i^*(x)\\
%     %x_i^*&\triangleq \argmin_{x\in\mathcal{X}}a_i^*(x)\\
%     p_{U|X,Y}(u_i|x,y)&\triangleq\frac{a_i^*}{p_{Y|X}(y|x)}\cdot\mathds{1}_{\{y=f_i(x)\}},\ \forall (x,y)\in\mathcal{X}\times\mathcal{Y}\label{condp}\\
%     a_{i+1}(x,y)&\triangleq a_i(x,y)-p_U(u_i)\cdot\mathds{1}_{\{y=f_i(x)\}},
% \end{align}

{\color{black}
The rationale behind this algorithm becomes clear by considering the backward construction as follows. Let each realization $(x,y)$ of $(X,Y)$ be denoted by a node. Arrange these nodes in a long column vector as in Figure \ref{fig1}. In this configuration, we divide the nodes into $|\mathcal{X}|$ subgroups of nodes: The first subgroup of nodes is $\{(x_1,y_i)|i\in[|\mathcal{Y}|]\}$, the second subgroup is $\{(x_2,y_i)|i\in[|\mathcal{Y}|]\}$, and so on. Obviously, the sum of the mass probabilities of the nodes in the $i$-th subgroup is $p_X(x_i),\ \forall i\in[|\mathcal{X}|]$. Therefore, if in this construction, from each mass point (or node) $u$, there is one connection/link to only one of the nodes in the first subgroup with transition probability $p_X(x_1)$, one link to only one of the nodes in the second subgroup with transition probability $p_X(x_2)$, and so on, which is what Lemma \ref{lemcardi} implies, we have $p_{X|U}(\cdot|u)=p_X(\cdot)$, and $H(Y|U=u)\leq H(X,Y|U=u)=H(X)$. However, if in the first $|\mathcal{I}|$ realizations of $U$, the links that connect each $u\in\{u_1,\ldots,u_{|\mathcal{I}|}\}$ to $|\mathcal{X}|$ subgroups arrive at nodes that have the same second coordinate, i.e., $y$, we get $H(Y|U=u)=0$ for these $|\mathcal{I}|$ realizations of $U$. Also, if a node $u$ is to be connected to $|\mathcal{X}|$ nodes sharing the same second coordinate, e.g., $\{(x_i,y)\}_{i=1}^{|\mathcal{X}|}$ for some $y\in\mathcal{Y}$, we must have $p(u)\leq p(y|x_i),\ \forall i\in[|\mathcal{X}|]$. This is needed to guarantee that the requirement $p(x_i,y|u)=p(x_i), i\in[|\mathcal{X}|]$ does not violate the preservation of $p_{XY}$ in $p_{XYU}$.\footnote{Since otherwise, we have $p(u)> p(y|x_j)$, for some $j\in[|\mathcal{X}|]$. This results in $p_{XYU}(x_j,y,u)=p(u)p(x_j,y|u)=p(u)p(x_j)>p(x_j,y)$, which results in $p(x_j,y)$ induced by $p_{XYU}$ being greater than the original $p_{XY}(x_j,y)$.} Therefore, we set $p(u)$ equal to its maximum allowable value, i.e., $\min_xp(y|x)$. The aforementioned procedure is captured in step 2 of the algorithm by making the convention $\frac{0}{0}\cdot 0\triangleq 0$. Subsequently, the event containing the first $|\mathcal{I}|$ realizations of $U$ occurs with probability of $\sum_{y\in\mathcal{I}}\min_xp(y|x)=\sum_{y\in\mathcal{Y}}\min_xp(y|x)$, which results in $H(Y|U)\leq (1-\sum_y\min_xp(y|x))\min\{H(X),\log|\mathcal{Y}|\}$. 

The concern in this backward construction is to preserve the original distribution $p_{XY}$ in the resulting joint pmf $p_{XYU}$. Since in the construction of $U$, it is known from our impositions that if $p(x,y|u)\neq 0$, for some $(x,y)\in\textnormal{supp}(X,Y)$, then we must have $p(x,y|u) = p(x)$, we observe that the preservation of $p_{XY}$ boils down to that of the conditional pmf $p_{Y|X}$. In other words, denoting the set of all realizations $u$ that have a link to $(x,y)$ by $\mathcal{U}_{x,y}$, i.e., $\mathcal{U}_{x,y}\triangleq\{u\in\mathcal{U}|p(x,y|u)\neq 0\},\ \forall (x,y)\in\textnormal{supp}(X,Y),$
the preservation of $p_{XY}$ is equivalent to 
\begin{align}
p(x,y)&=\sum_{u\in\mathcal{U}}p(x,y,u)\nonumber\\
&=\sum_{u\in\mathcal{U}_{x,y}}p(x,y|u)p(u)\nonumber\\
&=p(x)\sum_{u\in\mathcal{U}_{x,y}}p(u),\ \forall (x,y)\in\textnormal{supp}(X,Y),\nonumber
\end{align}
which is in turn equivalent to
\begin{equation*}
    p(y|x)=\sum_{u\in\mathcal{U}_{x,y}}p(u),\ \forall (x,y)\in\textnormal{supp}(X,Y).
\end{equation*}
Therefore, we only need to make sure that the mass probabilities of all the nodes $u$ that are connected to $(x,y)$ sum up to $p(y|x)$. To this end, we harness a waterfilling-like procedure, in which the water levels denote the remaining probabilities which need to be "filled". In step 3, the water levels are set as $a_1(x,y)$ by taking into account the assignment in step 2. In other words, for each node $(x,y)$, the amount of $\min_x p(y|x)$ has already been filled by the links from $u_j,\ \forall j\in[|\mathcal{I}|]$ in step 2. At each iteration $i$, node $u_{i+|\mathcal{I}|}$ is created to fill the minimum water level denoted by $a_i^*$ in step 6. A/the minimizer is denoted by $(x^*,y^*)$ in step 7. Note that in this step and step 8, if there are multiple minimizers, one is selected arbitrarily\footnote{Although at the expense of making the algorithm more complicated, one can propose a better selection (in terms of lowering $H(Y|U)$), we do not discuss it here.}. In step 8, $f_i(x)$ denotes a/the minimum non-zero water level in subgroup $x$ at iteration $i$. We create $u_{i+|\mathcal{I}|}$, and set $p(u_{i+|\mathcal{I}|})\triangleq a_i^*$, and connect this node to $|\mathcal{X}|$ nodes, each belonging to one subgroup, with the transition probability of $p(x_1)$ for the link to subgroup 1, $p(x_2)$ for the link to subgroup 2, and so on.
In doing so, we take this intuition into account that points with common $y$-coordinates are desirable, as this allocation is in line with lowering $H(Y|U)$. Hence, in each subgroup $x$ ($x\in\mathcal{X}$), if the water level corresponding to $(x,y^*)$, i.e., $a_i(x,y^*)$, is non-zero, this point is selected, otherwise, the point corresponding to a/the minimum water level of this subgroup is selected, i.e., $(x,f_i(x))$. This is given in step 9 of the algorithm, and in step 10, the water levels are updated. 

Since in step 3 (prior to the iterations), the water levels of at least $|\mathcal{I}|$ nodes are filled, and at each iteration, at least one water level gets filled, i.e., $a_{i+1}(x^*,y^*)$ becomes zero (which occurs in step 10), the algorithm terminates after at most $|\textnormal{supp}(X,Y)|-|\mathcal{I}|$ iterations.
With this $p_{U|XY}$, we get $X\independent U$, and $H(Y|U)\leq (1-\sum_y\min_xp(y|x))\min\{H(X),\log|\mathcal{Y}|\}$, which proves the lower bound in (\ref{G0lowerbound}).
% In this algorithm, for any $u\in\mathcal{U}$, we have $p_{X|U}(\cdot|u)=p_X(\cdot)$. Therefore, we have $X\independent U$. Furthermore, we also have that the mass probabilities of $(X,Y)|{U=u}$ are those of $X,\ \forall u\in\mathcal{U}$. Hence, $H(X,Y|U)=H(X)$, which results in 
% \begin{align*}
%     G_0(X,Y)\geq H(Y)- (1-\sum_y\min_xp(y|x))H(X),
% \end{align*}
% which is 
}
\end{proof}
The following example clarifies the steps in Algorithm 1.
\begin{example}\label{ex.1}
Let $(X,Y)\in\{x_1,x_2,x_3\}\times\{y_1,y_2,y_3\}$ be distributed according to the joint pmf $\mathbf{P}_{XY}=\mathbf{P}_{Y|X}\mathbf{p}_X$ as
\begin{equation*}
    \mathbf{P}_{X,Y}=\begin{bmatrix}0.2&0.4&0.6\\0.5&0.2&0.3\\0.3&0.4&0.1\end{bmatrix}\begin{bmatrix}p_1\\p_2\\p_3\end{bmatrix},
\end{equation*}
where $p_i\triangleq p_X(x_i)$, and column $i$ of $\mathbf{P}_{Y|X}$ represents $p_{Y|X}(\cdot|x_i)$, $\forall i\in[3]$. The reason for representing the mass probabilities of $X$ as parameters, i.e., $p_i$'s, rather than numerical values is this interesting property the design of a privacy-preserving mapping via Algorithm 1 does not depend on $p_X$, which is elaborated further in Remark \ref{Rem1}.

Figure \ref{fig11:subim1} illustrates step 2 of the algorithm. On the left hand side of this figure, the elements of $\mathcal{X}\times\mathcal{Y}$ are arranged into 3 ($=|\mathcal{X}|$) subgroups in a column, and their corresponding probabilities are shown on their left side. In this example, we have $\mathcal{I}=\mathcal{Y}$, hence, we create 3 ($=|\mathcal{I}|$) realizations of $U$, denoted by $u_i,i\in[3]$, with the corresponding probabilities of $\min_xp(y_i|x)$, which are shown on the right side of these points. Afterwards, each $u_i$ is connected to $x_1y_i, x_2y_i, x_3y_i$, with transition probabilities of $p_i$, respectively. This is equivalent to step 2 of the algorithm.

In Figure \ref{fig11:subim2}, we have the same set of mass points $x_iy_j$'s whose mass probabilities have been updated by taking into account Figure \ref{fig11:subim1}. In other words, each mass point $x_iy_j$ has the remaining probability of $p(x_i,y_j)-p(x_i,y_j|u_j)p(u_j)$ ($=p_ip(y_j|x_i)-p_ip(u_j)=p_ia_1(x_i,y_j)$, where $a_1(\cdot,\cdot)$ is defined in step 3) to be filled with other realizations of $U$. These remaining probabilities are shown on the left side of $x_iy_j$'s. Iteration 1 starts, and $u_4$ is created, whose aim is to fill the minimum (non-zero) remaining probability , which is that of $x^*y^*$( $=x_3y_2$ in this example). This $u_4$ is connected to $x_3y_2$, and $x_1y_2$ (whose $y$-coordinate is in common with $x_3y_2$), and $x_2y_1$, which has the minimum (non-zero) water level in the subgroup of $x_2$ (since the water level of $x_2y_2$ is zero). These links are created bearing in mind that any connection to subgroup $i$ has the transition probability of $p_i,\ i\in[3].$

Taking into account the connections in Figure \ref{fig11:subim2}, the remaining probabilities are again updated in Figure \ref{fig11:subim3}, shown on the left side of $x_iy_j$'s. Iteration 2 starts, and realization $u_5$ is created in a similar way.

Again, taking into account the connections in Figure \ref{fig11:subim3}, the remaining probabilities are updated in Figure \ref{fig11:subim4}, shown on the left side of $x_iy_j$'s. Iteration 3 starts, and realization $u_6$ is created.

The remaining probabilities are updated in Figure \ref{fig11:subim5}
where we are left with only one non-zero probability in each subgroup, i.e., $0.2$. In iteration 4, which is the last one, $u_7$ is created to fill all the remaining water levels, and the algorithm terminates after 4 iterations.

Finally, the output of this algorithm is shown in Figure \ref{fig12:image}, where the transition probabilities in Figure \ref{fig12:subim1} represent $p_{XY|U}$, and those in Figure \ref{fig12:subim2} represent $p_{U|XY}$. From Figure \ref{fig12:subim1}, it is obvious that $p_{X|U}(\cdot|u)=p_X(\cdot),\ \forall u\in\mathcal{U}$, and hence, $X\independent U$. Also, $H(Y|U=u)=0,\ \forall u\in\{u_1,u_2,u_3\}$, and $H(Y|U=u)\leq H(X), \forall u\in\{u_4,u_5,u_6,u_7\}$. Therefore, $G_0(X,Y)\geq I(Y;U)\geq \left(H(Y)-\sum_{u_4}^{u_7}p(u)H(X)\right)^+=\left(H(Y)-0.5H(X)\right)^+.$\footnote{Note that in this example, $H(X)\leq\log|\mathcal{Y}|=\log 3.$}
\begin{figure}

\begin{subfigure}{0.5\textwidth}
\centering
\begin{tikzpicture}[scale=0.8, transform shape]
  \node[dspnodeopen, minimum width=4pt,  dsp/label=left] (X_1) {$p_1\times0.2:x_1y_1$};    
  \node[dspnodeopen, minimum width=4pt, below=1cm of X_1, dsp/label=left] (X_2) {$p_1\times0.5:x_1y_2$};
  \node[dspnodeopen, minimum width=4pt, below=1cm of X_2, dsp/label=left] (X_3) {$p_1\times0.3:x_1y_3$};
  \node[dspnodeopen, minimum width=4pt, below=2cm of X_3, dsp/label=left] (X_4) {$p_2\times0.4:x_2y_1$};
  \node[dspnodeopen, minimum width=4pt, below=1cm of X_4, dsp/label=left] (X_5) {$p_2\times0.2:x_2y_2$};
  \node[dspnodeopen, minimum width=4pt, below=1cm of X_5, dsp/label=left] (X_6) {$p_2\times0.4:x_2y_3$};
  \node[dspnodeopen, minimum width=4pt, below=2cm of X_6, dsp/label=left] (X_7) {$p_3\times0.6:x_3y_1$};
  \node[dspnodeopen, minimum width=4pt, below=1cm of X_7, dsp/label=left] (X_8) {$p_3\times0.3:x_3y_2$};
  \node[dspnodeopen, minimum width=4pt, below=1cm of X_8, dsp/label=left] (X_9) {$p_3\times0.1:x_3y_3$};
  
  \node[dspnodeopen, minimum width=4pt, right=3cm of X_1, dsp/label=right] (U_1) {$u_1:0.2$};    
  \node[dspnodeopen, minimum width=4pt, below=5cm of U_1, dsp/label=right] (U_2) {$u_2:0.2$};
  \node[dspnodeopen, minimum width=4pt, below=5cm of U_2, dsp/label=right] (U_3) {$u_3:0.1$};
  
  \draw[line width=1.2pt] (X_1) to node[near end, inner sep=1pt, above,sloped]{\footnotesize{$p_1$}} (U_1);
\draw[line width=1.2pt] (X_2) to node[near end, inner sep=1pt, above, sloped]{\footnotesize{$p_1$}} (U_2);
\draw[line width=1.2pt] (X_3) to node[near end, inner sep=1pt, above, sloped]{\footnotesize{$p_1$}} (U_3);
\draw[line width=1.2pt] (X_4) to node[near end, inner sep=1pt, above,sloped]{\footnotesize{$p_2$}} (U_1);
\draw[line width=1.2pt] (X_5) to node[near end, inner sep=1pt, above, sloped]{\footnotesize{$p_2$}} (U_2);
\draw[line width=1.2pt] (X_6) to node[near end, inner sep=1pt, above, sloped]{\footnotesize{$p_2$}} (U_3);
\draw[line width=1.2pt] (X_7) to node[near end, inner sep=1pt, above,sloped]{\footnotesize{$p_3$}} (U_1);
\draw[line width=1.2pt] (X_8) to node[near end, inner sep=1pt, above, sloped]{\footnotesize{$p_3$}} (U_2);
\draw[line width=1.2pt] (X_9) to node[near end, inner sep=1pt, above, sloped]{\footnotesize{$p_3$}} (U_3);

\end{tikzpicture}
\caption{}
\label{fig11:subim1}
\end{subfigure}
\begin{subfigure}{0.5\textwidth}
\centering
\begin{tikzpicture}[scale=0.8, transform shape]
  \node[dspnodeopen, minimum width=4pt,  dsp/label=left] (X_1) {$p_1\times0:x_1y_1$};    
  \node[dspnodeopen, minimum width=4pt, below=1cm of X_1, dsp/label=left] (X_2) {$p_1\times0.3:x_1y_2$};
  \node[dspnodeopen, minimum width=4pt, below=1cm of X_2, dsp/label=left] (X_3) {$p_1\times0.2:x_1y_3$};
  \node[dspnodeopen, minimum width=4pt, below=2cm of X_3, dsp/label=left] (X_4) {$p_2\times0.2:x_2y_1$};
  \node[dspnodeopen, minimum width=4pt, below=1cm of X_4, dsp/label=left] (X_5) {$p_2\times0:x_2y_2$};
  \node[dspnodeopen, minimum width=4pt, below=1cm of X_5, dsp/label=left] (X_6) {$p_2\times0.3:x_2y_3$};
  \node[dspnodeopen, minimum width=4pt, below=2cm of X_6, dsp/label=left] (X_7) {$p_3\times0.4:x_3y_1$};
  \node[dspnodeopen, minimum width=4pt, below=1cm of X_7, dsp/label=left] (X_8) {$p_3\times0.1:x_3y_2$};
  \node[dspnodeopen, minimum width=4pt, below=1cm of X_8, dsp/label=left] (X_9) {$p_3\times0:x_3y_3$};
  
  \node[dspnodeopen, minimum width=4pt, right=3cm of X_5, dsp/label=right] (U_4) {$u_4:0.1$};

  \draw[line width=1.2pt] (X_2) to node[near end, inner sep=1pt, above,sloped]{\footnotesize{$p_1$}} (U_4);
\draw[line width=1.2pt] (X_4) to node[near end, inner sep=1pt, above, sloped]{\footnotesize{$p_2$}} (U_4);
\draw[line width=1.2pt] (X_8) to node[near end, inner sep=1pt, above, sloped]{\footnotesize{$p_3$}} (U_4);

\end{tikzpicture}
\caption{}
\label{fig11:subim2}
\end{subfigure}
\par\bigskip\par\bigskip\par\bigskip\par\bigskip
\begin{subfigure}{0.5\textwidth}
\centering
\begin{tikzpicture}[scale=0.8, transform shape]
  \node[dspnodeopen, minimum width=4pt,  dsp/label=left] (X_1) {$p_1\times0:x_1y_1$};    
  \node[dspnodeopen, minimum width=4pt, below=1cm of X_1, dsp/label=left] (X_2) {$p_1\times0.2:x_1y_2$};
  \node[dspnodeopen, minimum width=4pt, below=1cm of X_2, dsp/label=left] (X_3) {$p_1\times0.2:x_1y_3$};
  \node[dspnodeopen, minimum width=4pt, below=2cm of X_3, dsp/label=left] (X_4) {$p_2\times0.1:x_2y_1$};
  \node[dspnodeopen, minimum width=4pt, below=1cm of X_4, dsp/label=left] (X_5) {$p_2\times0:x_2y_2$};
  \node[dspnodeopen, minimum width=4pt, below=1cm of X_5, dsp/label=left] (X_6) {$p_2\times0.3:x_2y_3$};
  \node[dspnodeopen, minimum width=4pt, below=2cm of X_6, dsp/label=left] (X_7) {$p_3\times0.4:x_3y_1$};
  \node[dspnodeopen, minimum width=4pt, below=1cm of X_7, dsp/label=left] (X_8) {$p_3\times0:x_3y_2$};
  \node[dspnodeopen, minimum width=4pt, below=1cm of X_8, dsp/label=left] (X_9) {$p_3\times0:x_3y_3$};
  
  \node[dspnodeopen, minimum width=4pt, right=3cm of X_5, dsp/label=right] (U_5) {$u_5:0.1$};

  \draw[line width=1.2pt] (X_3) to node[near end, inner sep=1pt, above,sloped]{\footnotesize{$p_1$}} (U_5);
\draw[line width=1.2pt] (X_4) to node[near end, inner sep=1pt, above, sloped]{\footnotesize{$p_2$}} (U_5);
\draw[line width=1.2pt] (X_7) to node[near end, inner sep=1pt, above, sloped]{\footnotesize{$p_3$}} (U_5);

\end{tikzpicture}
\caption{}
\label{fig11:subim3}
\end{subfigure}
\begin{subfigure}{0.5\textwidth}
\centering
\begin{tikzpicture}[scale=0.8, transform shape]
  \node[dspnodeopen, minimum width=4pt,  dsp/label=left] (X_1) {$p_1\times0:x_1y_1$};    
  \node[dspnodeopen, minimum width=4pt, below=1cm of X_1, dsp/label=left] (X_2) {$p_1\times0.2:x_1y_2$};
  \node[dspnodeopen, minimum width=4pt, below=1cm of X_2, dsp/label=left] (X_3) {$p_1\times0.1:x_1y_3$};
  \node[dspnodeopen, minimum width=4pt, below=2cm of X_3, dsp/label=left] (X_4) {$p_2\times0:x_2y_1$};
  \node[dspnodeopen, minimum width=4pt, below=1cm of X_4, dsp/label=left] (X_5) {$p_2\times0:x_2y_2$};
  \node[dspnodeopen, minimum width=4pt, below=1cm of X_5, dsp/label=left] (X_6) {$p_2\times0.3:x_2y_3$};
  \node[dspnodeopen, minimum width=4pt, below=2cm of X_6, dsp/label=left] (X_7) {$p_3\times0.3:x_3y_1$};
  \node[dspnodeopen, minimum width=4pt, below=1cm of X_7, dsp/label=left] (X_8) {$p_3\times0:x_3y_2$};
  \node[dspnodeopen, minimum width=4pt, below=1cm of X_8, dsp/label=left] (X_9) {$p_3\times0:x_3y_3$};
  
  \node[dspnodeopen, minimum width=4pt, right=3cm of X_5, dsp/label=right] (U_6) {$u_6:0.1$};

  \draw[line width=1.2pt] (X_3) to node[near end, inner sep=1pt, above,sloped]{\footnotesize{$p_1$}} (U_6);
\draw[line width=1.2pt] (X_6) to node[near end, inner sep=1pt, above, sloped]{\footnotesize{$p_2$}} (U_6);
\draw[line width=1.2pt] (X_7) to node[near end, inner sep=1pt, above, sloped]{\footnotesize{$p_3$}} (U_6);

\end{tikzpicture}
\caption{}
\label{fig11:subim4}
\end{subfigure}

\caption{An illustrative representation of Algorithm 1.}
\label{fig11:image}
\end{figure}
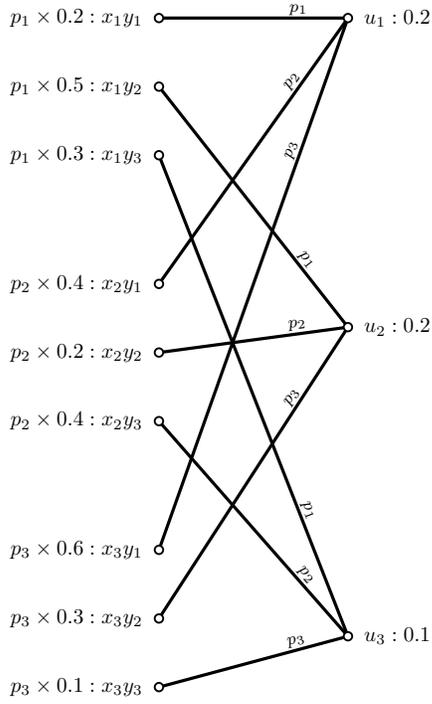
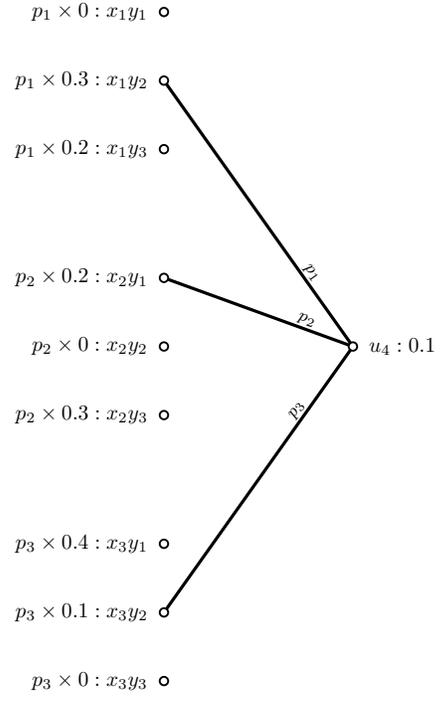
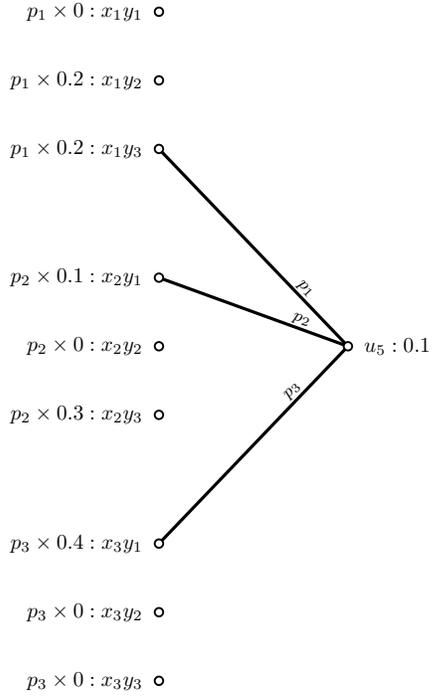
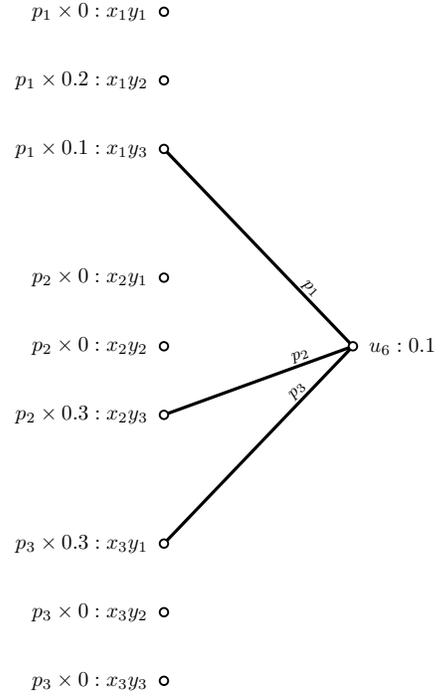
%%%%%%%%%%%%%%%%%%%%%%%%%
\begin{figure}[ht]\ContinuedFloat
\centering
    \begin{subfigure}{0.5\textwidth}
\centering
\begin{tikzpicture}[scale=0.8, transform shape]
  \node[dspnodeopen, minimum width=4pt,  dsp/label=left] (X_1) {$p_1\times0:x_1y_1$};    
  \node[dspnodeopen, minimum width=4pt, below=1cm of X_1, dsp/label=left] (X_2) {$p_1\times0.2:x_1y_2$};
  \node[dspnodeopen, minimum width=4pt, below=1cm of X_2, dsp/label=left] (X_3) {$p_1\times0:x_1y_3$};
  \node[dspnodeopen, minimum width=4pt, below=2cm of X_3, dsp/label=left] (X_4) {$p_2\times0:x_2y_1$};
  \node[dspnodeopen, minimum width=4pt, below=1cm of X_4, dsp/label=left] (X_5) {$p_2\times0:x_2y_2$};
  \node[dspnodeopen, minimum width=4pt, below=1cm of X_5, dsp/label=left] (X_6) {$p_2\times0.2:x_2y_3$};
  \node[dspnodeopen, minimum width=4pt, below=2cm of X_6, dsp/label=left] (X_7) {$p_3\times0.2:x_3y_1$};
  \node[dspnodeopen, minimum width=4pt, below=1cm of X_7, dsp/label=left] (X_8) {$p_3\times0:x_3y_2$};
  \node[dspnodeopen, minimum width=4pt, below=1cm of X_8, dsp/label=left] (X_9) {$p_3\times0:x_3y_3$};
  
  \node[dspnodeopen, minimum width=4pt, right=3cm of X_5, dsp/label=right] (U_7) {$u_7:0.2$};

  \draw[line width=1.2pt] (X_2) to node[near end, inner sep=1pt, above,sloped]{\footnotesize{$p_1$}} (U_7);
\draw[line width=1.2pt] (X_6) to node[near end, inner sep=1pt, above, sloped]{\footnotesize{$p_2$}} (U_7);
\draw[line width=1.2pt] (X_7) to node[near end, inner sep=1pt, above, sloped]{\footnotesize{$p_3$}} (U_7);

\end{tikzpicture}
\caption{}
\label{fig11:subim5}
\end{subfigure}
\caption{An illustrative representation of Algorithm 1 (cont.).}
\end{figure}
%%%%%%%%%%%%%%%%%%%%%%%%%%%%
\begin{figure}

\begin{subfigure}{0.5\textwidth}
\centering
\begin{tikzpicture}[scale=0.8, transform shape]
  \node[dspnodeopen, minimum width=4pt,  dsp/label=left] (X_1) {$p_1\times0.2:x_1y_1$};    
  \node[dspnodeopen, minimum width=4pt, below=1cm of X_1, dsp/label=left] (X_2) {$p_1\times0.5:x_1y_2$};
  \node[dspnodeopen, minimum width=4pt, below=1cm of X_2, dsp/label=left] (X_3) {$p_1\times0.3:x_1y_3$};
  \node[dspnodeopen, minimum width=4pt, below=2cm of X_3, dsp/label=left] (X_4) {$p_2\times0.4:x_2y_1$};
  \node[dspnodeopen, minimum width=4pt, below=1cm of X_4, dsp/label=left] (X_5) {$p_2\times0.2:x_2y_2$};
  \node[dspnodeopen, minimum width=4pt, below=1cm of X_5, dsp/label=left] (X_6) {$p_2\times0.4:x_2y_3$};
  \node[dspnodeopen, minimum width=4pt, below=2cm of X_6, dsp/label=left] (X_7) {$p_3\times0.6:x_3y_1$};
  \node[dspnodeopen, minimum width=4pt, below=1cm of X_7, dsp/label=left] (X_8) {$p_3\times0.3:x_3y_2$};
  \node[dspnodeopen, minimum width=4pt, below=1cm of X_8, dsp/label=left] (X_9) {$p_3\times0.1:x_3y_3$};
  
  \node[dspnodeopen, minimum width=4pt, right=4.5cm of X_1, dsp/label=right] (U_1) {$u_1:0.2$};    
  \node[dspnodeopen, minimum width=4pt, below=2cm of U_1, dsp/label=right] (U_2) {$u_2:0.2$};
  \node[dspnodeopen, minimum width=4pt, below=2cm of U_2, dsp/label=right] (U_3) {$u_3:0.1$};
  \node[dspnodeopen, minimum width=4pt, below=2cm of U_3, dsp/label=right] (U_4) {$u_4:0.1$};
  \node[dspnodeopen, minimum width=4pt, below=2cm of U_4, dsp/label=right] (U_5) {$u_5:0.1$};
  \node[dspnodeopen, minimum width=4pt, below=2cm of U_5, dsp/label=right] (U_6) {$u_6:0.1$};
  \node[dspnodeopen, minimum width=4pt, below=2cm of U_6, dsp/label=right] (U_7) {$u_7:0.2$};
 
\draw[line width=1.2pt] (X_1) to node[near end, inner sep=1pt, above]{\footnotesize{$p_1$}} (U_1);
\draw[line width=1.2pt] (X_2) to node[near end, inner sep=1pt, above, sloped]{\footnotesize{$p_1$}} (U_2);
\draw[line width=1.2pt] (X_3) to node[near end, inner sep=1pt, above, sloped]{\footnotesize{$p_1$}} (U_3);
\draw[line width=1.2pt] (X_4) to node[near end, inner sep=1pt, above]{\footnotesize{$p_2$}} (U_1);
\draw[line width=1.2pt] (X_5) to node[near end, inner sep=1pt, above, sloped]{\footnotesize{$p_2$}} (U_2);
\draw[line width=1.2pt] (X_6) to node[near end, inner sep=1pt, above, sloped]{\footnotesize{$p_2$}} (U_3);
\draw[line width=1.2pt] (X_7) to node[near end, inner sep=1pt, above,sloped]{\footnotesize{$p_3$}} (U_1);
\draw[line width=1.2pt] (X_8) to node[near end, inner sep=1pt, above, sloped]{\footnotesize{$p_3$}} (U_2);
\draw[line width=1.2pt] (X_9) to node[near end, inner sep=1pt, above, sloped]{\footnotesize{$p_3$}} (U_3);

\draw[line width=1.2pt] (X_2) to node[near end, inner sep=1pt, above,sloped]{\footnotesize{$p_1$}} (U_4);
\draw[line width=1.2pt] (X_4) to node[near end, inner sep=1pt, above, sloped]{\footnotesize{$p_2$}} (U_4);
\draw[line width=1.2pt] (X_8) to node[near end, inner sep=1pt, above, sloped]{\footnotesize{$p_3$}} (U_4);

  \draw[line width=1.2pt] (X_3) to node[near end, inner sep=1pt, above,sloped]{\footnotesize{$p_1$}} (U_5);
\draw[line width=1.2pt] (X_4) to node[near end, inner sep=1pt, above, sloped]{\footnotesize{$p_2$}} (U_5);
\draw[line width=1.2pt] (X_7) to node[near end, inner sep=1pt, above, sloped]{\footnotesize{$p_3$}} (U_5);

  \draw[line width=1.2pt] (X_3) to node[near end, inner sep=1pt, above,sloped]{\footnotesize{$p_1$}} (U_6);
\draw[line width=1.2pt] (X_6) to node[near end, inner sep=1pt, above, sloped]{\footnotesize{$p_2$}} (U_6);
\draw[line width=1.2pt] (X_7) to node[near end, inner sep=1pt, above, sloped]{\footnotesize{$p_3$}} (U_6);

  \draw[line width=1.2pt] (X_2) to node[near end, inner sep=1pt, above,sloped]{\footnotesize{$p_1$}} (U_7);
\draw[line width=1.2pt] (X_6) to node[near end, inner sep=1pt, above, sloped]{\footnotesize{$p_2$}} (U_7);
\draw[line width=1.2pt] (X_7) to node[near end, inner sep=1pt, above, sloped]{\footnotesize{$p_3$}} (U_7);
\end{tikzpicture}
\caption{Backward construction: $p_{X,Y,U}=p_{X,Y|U}\cdot p_{U}$.}
\label{fig12:subim1}
\end{subfigure}
\begin{subfigure}{0.5\textwidth}
\centering
\begin{tikzpicture}[scale=0.8, transform shape]
  \node[dspnodeopen, minimum width=4pt,  dsp/label=left] (X_1) {$p_1\times0.2:x_1y_1$};    
  \node[dspnodeopen, minimum width=4pt, below=1cm of X_1, dsp/label=left] (X_2) {$p_1\times0.5:x_1y_2$};
  \node[dspnodeopen, minimum width=4pt, below=1cm of X_2, dsp/label=left] (X_3) {$p_1\times0.3:x_1y_3$};
  \node[dspnodeopen, minimum width=4pt, below=2cm of X_3, dsp/label=left] (X_4) {$p_2\times0.4:x_2y_1$};
  \node[dspnodeopen, minimum width=4pt, below=1cm of X_4, dsp/label=left] (X_5) {$p_2\times0.2:x_2y_2$};
  \node[dspnodeopen, minimum width=4pt, below=1cm of X_5, dsp/label=left] (X_6) {$p_2\times0.4:x_2y_3$};
  \node[dspnodeopen, minimum width=4pt, below=2cm of X_6, dsp/label=left] (X_7) {$p_3\times0.6:x_3y_1$};
  \node[dspnodeopen, minimum width=4pt, below=1cm of X_7, dsp/label=left] (X_8) {$p_3\times0.3:x_3y_2$};
  \node[dspnodeopen, minimum width=4pt, below=1cm of X_8, dsp/label=left] (X_9) {$p_3\times0.1:x_3y_3$};
  
  \node[dspnodeopen, minimum width=4pt, right=4.5cm of X_1, dsp/label=right] (U_1) {$u_1:0.2$};    
  \node[dspnodeopen, minimum width=4pt, below=2cm of U_1, dsp/label=right] (U_2) {$u_2:0.2$};
  \node[dspnodeopen, minimum width=4pt, below=2cm of U_2, dsp/label=right] (U_3) {$u_3:0.1$};
  \node[dspnodeopen, minimum width=4pt, below=2cm of U_3, dsp/label=right] (U_4) {$u_4:0.1$};
  \node[dspnodeopen, minimum width=4pt, below=2cm of U_4, dsp/label=right] (U_5) {$u_5:0.1$};
  \node[dspnodeopen, minimum width=4pt, below=2cm of U_5, dsp/label=right] (U_6) {$u_6:0.1$};
  \node[dspnodeopen, minimum width=4pt, below=2cm of U_6, dsp/label=right] (U_7) {$u_7:0.2$};
  \draw[line width=1.2pt] (X_1) to node[near end, inner sep=1pt, above]{\footnotesize{$1$}} (U_1);
\draw[line width=1.2pt] (X_2) to node[near end, inner sep=1pt, above, sloped]{\footnotesize{$\frac{2}{5}$}} (U_2);
\draw[line width=1.2pt] (X_3) to node[near end, inner sep=1pt, above, sloped]{\footnotesize{$\frac{1}{3}$}} (U_3);
\draw[line width=1.2pt] (X_4) to node[near end, inner sep=1pt, above]{\footnotesize{$\frac{1}{2}$}} (U_1);
\draw[line width=1.2pt] (X_5) to node[near end, inner sep=1pt, above, sloped]{\footnotesize{$1$}} (U_2);
\draw[line width=1.2pt] (X_6) to node[near end, inner sep=1pt, above, sloped]{\footnotesize{$\frac{1}{4}$}} (U_3);
\draw[line width=1.2pt] (X_7) to node[near end, inner sep=1pt, above,sloped]{\footnotesize{$\frac{1}{3}$}} (U_1);
\draw[line width=1.2pt] (X_8) to node[near end, inner sep=1pt, above, sloped]{\footnotesize{$\frac{2}{3}$}} (U_2);
\draw[line width=1.2pt] (X_9) to node[near end, inner sep=1pt, above, sloped]{\footnotesize{$1$}} (U_3);
  \draw[line width=1.2pt] (X_3) to node[near end, inner sep=1pt, above,sloped]{\footnotesize{$\frac{1}{3}$}} (U_4);
\draw[line width=1.2pt] (X_4) to node[near end, inner sep=1pt, above, sloped]{\footnotesize{$\frac{1}{4}$}} (U_4);
\draw[line width=1.2pt] (X_8) to node[near end, inner sep=1pt, above, sloped]{\footnotesize{$\frac{1}{3}$}} (U_4);
  \draw[line width=1.2pt] (X_3) to node[near end, inner sep=1pt, above,sloped]{\footnotesize{$\frac{1}{3}$}} (U_5);
\draw[line width=1.2pt] (X_4) to node[near end, inner sep=1pt, above, sloped]{\footnotesize{$\frac{1}{4}$}} (U_5);
\draw[line width=1.2pt] (X_7) to node[near end, inner sep=1pt, above, sloped]{\footnotesize{$\frac{1}{6}$}} (U_5);
\draw[line width=1.2pt] (X_2) to node[near end, inner sep=1pt, above,sloped]{\footnotesize{$\frac{1}{5}$}} (U_6);
\draw[line width=1.2pt] (X_6) to node[near end, inner sep=1pt, above, sloped]{\footnotesize{$\frac{1}{4}$}} (U_6);
\draw[line width=1.2pt] (X_7) to node[near end, inner sep=1pt, above, sloped]{\footnotesize{$\frac{1}{6}$}} (U_6);

  \draw[line width=1.2pt] (X_2) to node[near end, inner sep=1pt, above,sloped]{\footnotesize{$\frac{2}{5}$}} (U_7);
\draw[line width=1.2pt] (X_6) to node[near end, inner sep=1pt, above, sloped]{\footnotesize{$\frac{1}{2}$}} (U_7);
\draw[line width=1.2pt] (X_7) to node[near end, inner sep=1pt, above, sloped]{\footnotesize{$\frac{1}{3}$}} (U_7);
\end{tikzpicture}
\caption{Forward construction: $p_{X,Y,U}=p_{X,Y}\cdot p_{U|X,Y}$}
\label{fig12:subim2}
\end{subfigure}
\caption{The output of Algorithm 1.}
\label{fig12:image}
\end{figure}
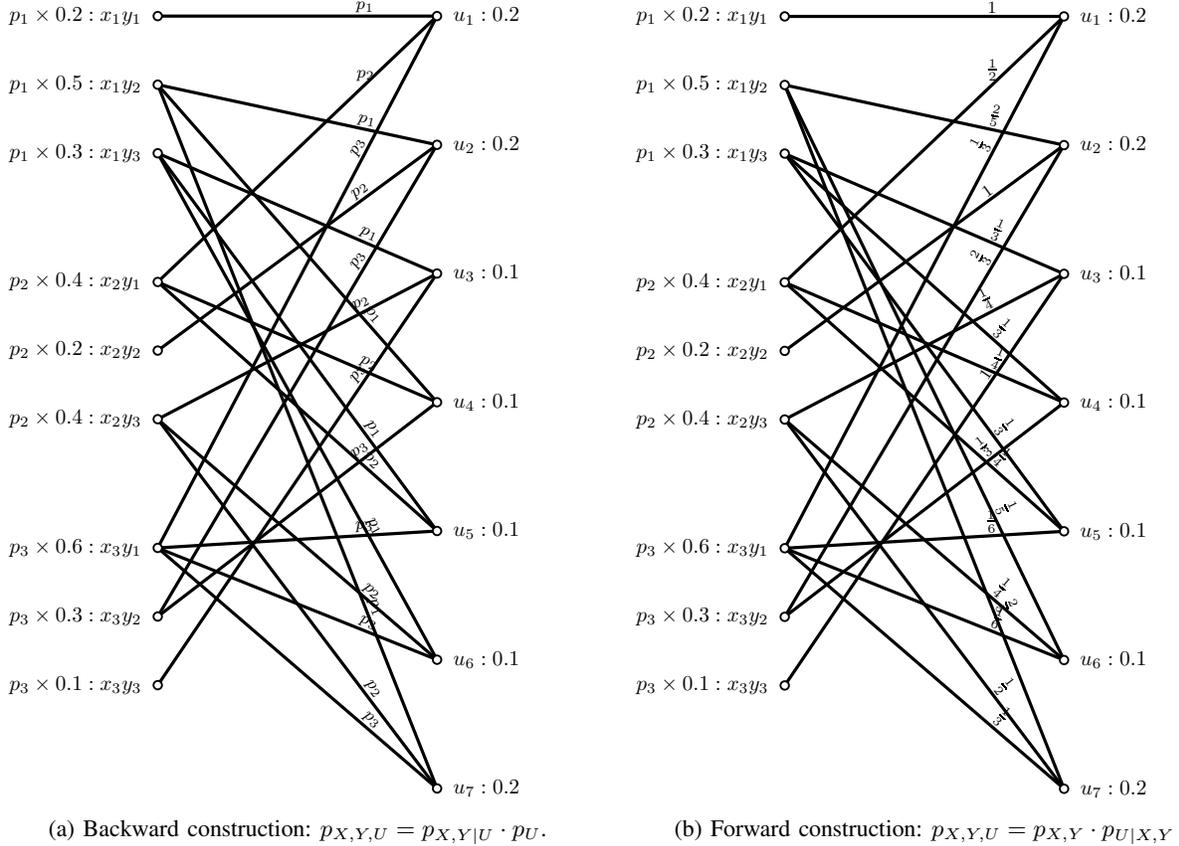
\end{example}

\begin{remark}\label{Rem1}
(\textbf{$p_X$-invariance}) An advantage of the achievable scheme in Algorithm 1 is that it does not depend on the distribution of the private data, i.e., $p_X(\cdot)$. In other words, the privacy-preserving mapping $p_{U|{XY}}$ obtained via Algorithm 1 can be derived regardless of the knowledge about $p_X(\cdot)$, as long as $p_{Y|X}$ is given. This can be verified by the fact that none of the 14 steps of Algorithm 1 rely on the knowledge of $p_X$.\footnote{Note that in the explanation of Algorithm 1, we indeed made use of $p_X$, but this should not be confusing, since that explanation is about the backward construction $p_{XY|U}$.} This is the reason that in Example 1, the mass probabilities of $X$ are given only as parameters $p_1,p_2,p_3$, and as it can be verified in Figure \ref{fig12:subim2}, the mapping $p_{U|XY}$ does not depend on a specific choice of them. This feature of Algorithm 1 is not only of practical interest (e.g., when the distribution of the private data is unknown or difficult to estimate), but also helpful in theory, as used in Corollary \ref{cor3}. Finally, it is important to emphasize that for a fixed $p_{Y|X}$, it is the proposed privacy-preserving mapping $p_{U|XY}$ that is $p_X$-invariant, not the resulting utility, i.e., $I(Y;U)$.
\end{remark}
\begin{remark}\label{rem44}(\textbf{Number of iterations})
In the explanation of Algorithm 1, it is stated that since at each iteration of the algorithm, at least one water level is filled, and the algorithm terminates after all these levels are filled, the number of iterations is upper bounded by the number of non-zero remaining probabilities prior to the start of the iterations, which is at most $|\textnormal{supp}(X,Y)|-|\mathcal{I}|$. While this is correct, we observe that, as in Figure \ref{fig11:subim4}, in the very last iteration we have exactly $|\mathcal{X}|$ non-zero and equal water levels, one in each subgroup, that are filled together in one iteration. This is a direct consequence of the fact that at each step of producing a new realization for $U$ in the algorithm, i) each subgroup of nodes has the same amount of total water levels, and ii) each subgroup of nodes undergoes the same amount of decrement in water levels. As a result, in the last iteration of the algorithm, we are left with $|\mathcal{X}|$ equal (non-zero) remaining water levels to be filled at once with the last realization of $U$. Therefore, the algorithm terminates after $N$ iterations with $N\leq|\textnormal{supp}(X,Y)|-|\mathcal{I}|-|\mathcal{X}|+1$. Moreover, since at each iteration, we get a realization for $U$, and we already have $|\mathcal{I}|$ realizations before the iterations start,  we have $|\mathcal{U}|\leq N+|\mathcal{I}|=|\textnormal{supp}(X,Y)|-|\mathcal{X}|+1$.
\end{remark}
The following Lemma is needed to obtain an upper bound on $G_0(X,Y)$ in the sequel.
\begin{lemma}\label{lf}
    Let $f(X)$ be a function of $X\sim p$ such that it has at least two realizations. We have
    \begin{equation}
        H(f(X))\geq H_b(\min_xp(x)).
    \end{equation}
\end{lemma}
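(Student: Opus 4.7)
The plan is to reduce the statement to a binary random variable and then exploit the symmetry and unimodality of the binary entropy function. Let $p^* \triangleq \min_x p(x)$, let $x^*$ attain this minimum (taken over the support of $X$), and set $y^* \triangleq f(x^*)$. Since $f(X)$ has at least two realizations, there exists some $y' \neq y^*$ in the range of $f$, and hence some $x' \in \mathcal{X}$ with $p(x') > 0$ and $f(x') = y' \neq y^*$.

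The key step is to introduce the coarsening $Z \triangleq \mathds{1}_{\{f(X) = y^*\}}$. Since $Z$ is a (deterministic) function of $f(X)$, we have $H(f(X)) \geq H(Z) = H_b(q)$, where $q \triangleq \Pr(Z = 1) = p_{f(X)}(y^*)$. The remainder of the proof amounts to sandwiching $q$ inside an interval on which $H_b$ is bounded below by $H_b(p^*)$.

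For the lower bound on $q$, note that $q = \sum_{x : f(x) = y^*} p(x) \geq p(x^*) = p^*$. For the upper bound, $q = 1 - \sum_{x : f(x) \neq y^*} p(x) \leq 1 - p(x') \leq 1 - p^*$, where the last inequality uses $x' \in \textnormal{supp}(X)$, so $p(x') \geq p^*$. (Observe that this forces $p^* \leq 1/2$, which is consistent: if $\textnormal{supp}(X)$ contains two distinct elements, their probabilities sum to at most $1$, so the smaller is at most $1/2$.)

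The final step is to invoke the fact that $H_b$ is symmetric about $1/2$, increasing on $[0,1/2]$, and decreasing on $[1/2,1]$; hence for any $q \in [p^*, 1-p^*]$, $H_b(q) \geq \min\{H_b(p^*), H_b(1-p^*)\} = H_b(p^*)$. Combining with the data-processing inequality yields $H(f(X)) \geq H(Z) \geq H_b(p^*)$, as claimed. I do not expect a genuine obstacle here; the only subtlety is verifying that the assumption of at least two realizations of $f(X)$ is precisely what is needed to produce the witness $x'$ delivering the upper bound $q \leq 1 - p^*$.
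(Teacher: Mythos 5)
Your proposal is correct and follows essentially the same route as the paper's proof: both define the indicator $Z=\mathds{1}_{\{f(X)=f(x^*)\}}$, use $H(f(X))\geq H(Z)$, and sandwich $\Pr\{Z=1\}$ in $[\min_x p(x),\,1-\min_x p(x)]$ to conclude via the binary entropy function. The only difference is that you make explicit the symmetry/unimodality of $H_b$ and the witness $x'$ supplying the upper bound, which the paper leaves implicit.
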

\begin{proof}
       The proof is provided in Appendix \ref{app1}.
\end{proof}
{\color{black}
\begin{theorem}\label{th2}
For a given pair $(X,Y)\sim p_{XY}$, we have
\begin{equation}\label{G0upperbound}
    G_0(X,Y)\leq H(Y)-\left(1-\sum_y\min_xp(y|x)\right)H_b\left(\min_xp_X(x)\right).
\end{equation}
\end{theorem}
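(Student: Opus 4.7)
The plan is to leverage the structural characterization of the optimal mapping from Lemma \ref{lemcardi} to reduce the problem to bounding $H(Y\mid U^*)$ from below, and then invoke Lemma \ref{lf} on the conditional distributions of $Y$ given $U^*$.

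First I would fix an optimal $p_{U^*\mid XY}$ attaining $G_0(X,Y)$. By Lemma \ref{lemcardi}, for every $(x,u^*)\in\mathcal{X}\times\mathcal{U}^*$ there is a unique $y_{x,u^*}\in\mathcal{Y}$ with $p(x,y_{x,u^*}\mid u^*)=p_X(x)$ and $p(x,y\mid u^*)=0$ for every other $y$. Consequently $Y$ is a deterministic function of $(X,U^*)$, namely $Y=f(X,U^*)$ with $f(x,u)\triangleq y_{x,u}$. Moreover $X\independent U^*$ forces $p_{X\mid U^*}(\cdot\mid u)=p_X(\cdot)$, so conditioned on $\{U^*=u\}$ the random variable $Y$ is distributed as $f(X,u)$ with $X\sim p_X$.

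Next I would split $\mathcal{U}^*$ according to whether $f(\cdot,u)$ is constant. Let $\mathcal{U}^*_y\triangleq\{u\in\mathcal{U}^*:\ f(x,u)=y\ \forall x\in\mathcal{X}\}$ and $\mathcal{A}\triangleq\mathcal{U}^*\setminus\bigcup_{y}\mathcal{U}^*_y$. The marginal-preservation identity $p(x,y)=\sum_{u:\ f(x,u)=y}p_{XU^*}(x,u)=p_X(x)\sum_{u:\ f(x,u)=y}p_{U^*}(u)$ rearranges to
\begin{equation*}
\sum_{u:\ f(x,u)=y}p_{U^*}(u)=p_{Y\mid X}(y\mid x),\qquad \forall(x,y)\in\textnormal{supp}(X,Y).
\end{equation*}
Since for every $u\in\mathcal{U}^*_y$ and every $x\in\mathcal{X}$ we have $f(x,u)=y$, the sum $\sum_{u\in\mathcal{U}^*_y}p_{U^*}(u)$ is bounded above by $p_{Y\mid X}(y\mid x)$ for each $x$, hence by $\min_x p_{Y\mid X}(y\mid x)$. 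Summing over $y$ yields $\Pr\{U^*\notin\mathcal{A}\}\leq\sum_y\min_x p(y\mid x)$, so $\Pr\{U^*\in\mathcal{A}\}\geq 1-\sum_y\min_x p(y\mid x)$.

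For $u\in\mathcal{A}$, the function $f(\cdot,u)$ takes at least two values on $\mathcal{X}$, so the random variable $f(X,u)$ with $X\sim p_X$ has at least two realizations; Lemma \ref{lf} then gives $H(Y\mid U^*=u)=H(f(X,u))\geq H_b(\min_x p_X(x))$. For $u\in\mathcal{U}^*_y$, the entropy $H(Y\mid U^*=u)$ is non-negative. Combining,
\begin{equation*}
H(Y\mid U^*)\geq\Pr\{U^*\in\mathcal{A}\}\cdot H_b(\min_x p_X(x))\geq\Bigl(1-\sum_y\min_x p(y\mid x)\Bigr)H_b(\min_x p_X(x)),
\end{equation*}
and subtracting from $H(Y)$ proves (\ref{G0upperbound}). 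The only subtle step is the probability accounting for the constant-$u$ set; once the marginal-preservation identity is written down correctly the bound $\sum_{u\in\mathcal{U}^*_y}p_{U^*}(u)\leq\min_x p(y\mid x)$ is immediate, and the rest is mechanical.
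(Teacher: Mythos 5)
Your proof is correct and follows essentially the same route as the paper: invoke Lemma \ref{lemcardi} to make $Y$ a function of $(X,U^*)$, split $\mathcal{U}^*$ into the realizations where that function is constant in $x$ (your $\mathcal{U}^*_y$, the paper's $\mathcal{U}_y$) and the rest, bound $\Pr\{U^*\notin\mathcal{A}\}\leq\sum_y\min_x p(y|x)$ via marginal preservation, and apply Lemma \ref{lf} on the remaining realizations. The only cosmetic difference is that you state the marginal-preservation identity exactly before bounding, whereas the paper bounds directly; the argument is otherwise the same.
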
}
{\color{black}
\begin{proof}
If $Y$ is a singleton, i.e., $|\mathcal{Y}|= 1$, we have $G_0(X,Y)=0$ and (\ref{G0upperbound}) follows, since $\min_xp(y|x)=1$. Therefore, in what follows, we assume that $|\mathcal{Y}|\geq 2$.

From Lemma \ref{lemcardi}, in an optimal mapping $p_{U|XY}$, for any $(x,u)\in\mathcal{X}\times\mathcal{U}$, there exists exactly one $y_{x,u}\in\mathcal{Y}$ such that $p(x,y_{x,u},u)>0$, and we have $p(x,y_{x,u}|u)=p(x)$. For any $y\in\mathcal{Y}$, let $\mathcal{U}_y\triangleq\{u\in\mathcal{U}|p(x,y,u)>0, \forall x\in\mathcal{X}\}$ be the set of realizations of $U$ which are connected to pairs $(x_1,y),(x_2,y),\ldots,(x_{|\mathcal{X}|},y)$. Define $\Tilde{\mathcal{U}}\triangleq\cup_{y\in\mathcal{Y}}\mathcal{U}_y$. Since $\mathcal{U}_{y'}\cap\mathcal{U}_{y''}=\emptyset$ when $y'\neq y''$, it is immediate that $H(Y|U=u)=0,\ \forall u\in\Tilde{\mathcal{U}}$. Since $Y$ conditioned on $\{U=u\},\ \forall u\not\in\Tilde{\mathcal{U}}$ is a function of $X$ which has at least two realizations, from Lemma \ref{lf}, we get $H(Y|U=u)\geq H_b\left(\min_xp_X(x)\right),\ \forall u\not\in\Tilde{\mathcal{U}}$.

For an arbitrary $y\in\mathcal{Y}$, we have
\begin{align}
      p(x)p(y|x)&=\sum_{u\in\mathcal{U}}p(x,y,u)\nonumber\\
    &=\sum_{u\in\mathcal{U}_y}p(x,y,u)+\sum_{u\not\in\mathcal{U}_y}p(x,y,u)\nonumber\\
    &\geq\sum_{u\in\mathcal{U}_y}p(u)p(x,y|u)\nonumber\\
    &=p(x)\textnormal{Pr}\{U\in\mathcal{U}_y\}\label{fl},\ \forall x\in\mathcal{X},
\end{align}
where (\ref{fl}) follows from having $p(x,y|u)=p(x),\ \forall u\in\mathcal{U}_y$. Hence, we have $\textnormal{Pr}\{U\in\mathcal{U}_y\}\leq\min_xp(y|x),\ \forall y\in\mathcal{Y}$. Noting that $\Tilde{\mathcal{U}}$ is the union of disjoint sets $\mathcal{U}_y,\forall y\in\mathcal{Y}$, we get
\begin{align}
    \textnormal{Pr}\{U\in\tilde{\mathcal{U}}\}&=\sum_y\textnormal{Pr}\{U\in\mathcal{U}_y\}\nonumber\\
    &\leq\sum_y\min_xp(y|x).\label{pubound}
\end{align}
We can write
\begin{align}
    H(Y|U)&=\sum_{u\in\mathcal{U}}p(u)H(Y|U=u)\nonumber\\
    &=\sum_{u\in\tilde{\mathcal{U}}}p(u)H(Y|U=u)+\sum_{u\not\in\tilde{\mathcal{U}}}p(u)H(Y|U=u)\nonumber\\
    &\geq\sum_{u\in\tilde{\mathcal{U}}}p(u)\cdot 0+\sum_{u\not\in\tilde{\mathcal{U}}}p(u)H_b(\min_xp_X(x))\nonumber\\
    &=\left(1-\textnormal{Pr}\{U\in\tilde{\mathcal{U}}\}\right)H_b(\min_xp_X(x))\nonumber\\
    &\geq\left(1-\sum_y\min_xp(y|x)\right)H_b(\min_xp_X(x)),\label{fl2}
\end{align}
where (\ref{fl2}) follows from (\ref{pubound}). This proves (\ref{G0upperbound}).

%%%%%%%%%%%%%%%%%%%%%%%
% If $X$ is a function of $Y$, from (\ref{Gup}), we have $G_0(X,Y)\leq H(Y|X)=H(Y)-H(X)$. Combining this with Theorem \ref{thbound5}, we get 
% \begin{align}
%     H(Y)-(1-\sum_y\min_xp(y|x))H(X)\leq G_0(X,Y)
%     \leq  H(Y)-H(X).\label{cherk}
% \end{align}
% We have $\sum_y\min_xp(y|x)=1$, if $X$ is a singleton, and $\sum_y\min_xp(y|x)=0$, otherwise. In both cases, the upper and lower bounds in (\ref{cherk}) coincide. 

\end{proof}
}
\begin{example} Consider $(X,Y)\sim p_{XY}$, in which $X$ is uniformly distributed over $[0:K-1]$, where $K>2$ is an arbitrary integer. Let $Y$ conditioned on $\{X=x\}$ be uniformly distributed on $[x:x+K-2]\textnormal{ mod }K$. Hence, $Y$ is also uniformly distributed over $[0:K-1]$. In this setting, the upper bounds in (\ref{ful1}) and (\ref{G0upperbound}) are
\begin{align*}
    H(Y|X)&=\log(K-1)\nonumber\\
    H(Y)-\left(1-\sum_y\min_xp(y|x)\right)H_b(\min_xp(x))&=\frac{K-1}{K}\log(K-1).
\end{align*}
Obviously, the bound in (\ref{G0upperbound}) is tighter in this example, and it can be readily verified that Algorithm 1 achieves it. 
\end{example}
The following lemma is needed in its following Theorem.
\begin{lemma}\label{p3s}
    For the mass probabilities $p_1,p_2,p_3$ (i.e., $p_i\geq 0,\ i\in[3],\sum_ip_i=1$), we have
    \begin{equation}\label{nbv1}
        H_b(p_i)\leq H_b(p_j)+H_b(p_k),\ i,j,k\in[3],\ j\neq k,
    \end{equation}
    where $H_b(\cdot)$ denotes the binary entropy function.
\end{lemma}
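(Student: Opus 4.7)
The plan is to handle the trivial subcase first and then reduce the nontrivial one to sandwiching the ternary entropy between $H_b(p_i)$ and $H_b(p_j) + H_b(p_k)$. If $i \in \{j,k\}$, say $i = j$, then (\ref{nbv1}) reads $H_b(p_i) \le H_b(p_i) + H_b(p_k)$, which holds by nonnegativity of $H_b(\cdot)$. So it remains to treat the case in which $\{i,j,k\}$ is a permutation of $\{1,2,3\}$; by symmetry we may take $i=1$, $\{j,k\}=\{2,3\}$, and prove $H_b(p_1) \le H_b(p_2) + H_b(p_3)$.

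First I would introduce an auxiliary ternary random variable $X$ with $\textnormal{Pr}\{X = \ell\} = p_\ell$, $\ell \in [3]$, and define two Bernoulli indicators $W_1 \triangleq \mathds{1}_{\{X=2\}}$ and $W_2 \triangleq \mathds{1}_{\{X=3\}}$, whose marginal parameters are $p_2$ and $p_3$. The pair $(W_1, W_2)$ takes the three values $(0,0), (1,0), (0,1)$ with probabilities $p_1, p_2, p_3$, so $(W_1, W_2)$ is a bijective function of $X$ and $H(W_1, W_2) = H(X)$.

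Next, I would apply two standard identities in opposite directions. Subadditivity of joint entropy gives
\begin{equation*}
    H(X) \;=\; H(W_1, W_2) \;\le\; H(W_1) + H(W_2) \;=\; H_b(p_2) + H_b(p_3).
\end{equation*}
On the other hand, grouping the outcomes of $X$ according to whether $X=1$ yields
\begin{equation*}
    H(X) \;=\; H_b(p_1) + \bar{p}_1\, H_b\!\left(\tfrac{p_2}{\bar{p}_1}\right) \;\ge\; H_b(p_1),
\end{equation*}
where the inequality uses $H_b(\cdot) \ge 0$ (and the case $\bar p_1 = 0$ is trivial with the convention $0 \log 0 = 0$). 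Chaining these two bounds gives exactly $H_b(p_1) \le H_b(p_2) + H_b(p_3)$.

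There is no real obstacle here: once the indicator pair $(W_1, W_2)$ is set up so that its joint entropy equals the entropy of the original ternary distribution, subadditivity and the grouping rule pin $H(X)$ from above and below by the two sides of (\ref{nbv1}). The only thing to be careful about is to state the WLOG reduction cleanly, and to handle the degenerate case $p_1 = 1$ (where both sides of (\ref{nbv1}) vanish) so that the grouping identity is interpreted correctly.
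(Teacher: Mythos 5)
Your proof is correct, but it takes a genuinely different route from the paper's. The paper argues analytically: since $p_j+p_k=1-p_i$ and $H_b$ is symmetric about $1/2$, it writes $H_b(p_i)=H_b(p_j+p_k)$ and then uses a first-order Taylor bound together with strict concavity of $H_b$ (monotonicity of $H_b'$, and $H_b'(p_j)p_j< H_b(p_j)$) to get $H_b(p_j+p_k)< H_b(p_k)+H_b(p_j)$ after disposing of the degenerate cases where some mass is $0$ or $1$ or $i\in\{j,k\}$. You instead encode the ternary distribution by the indicator pair $(W_1,W_2)=(\mathds{1}_{\{X=2\}},\mathds{1}_{\{X=3\}})$, which is a bijective recoding of $X$ on its support, and sandwich $H(X)$: subadditivity gives $H(X)\leq H_b(p_2)+H_b(p_3)$, while the grouping identity (equivalently, $H(X)\geq H(\mathds{1}_{\{X=1\}})$ since a function of $X$ cannot have more entropy) gives $H(X)\geq H_b(p_1)$. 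Your argument avoids calculus entirely, handles the boundary cases with no extra work beyond the $\bar p_1=0$ convention, and generalizes immediately to any number of masses ($H_b(p_i)\leq\sum_{j\neq i}H_b(p_j)$); the paper's calculus route is more hands-on but additionally records that the inequality is strict in the nondegenerate case, a refinement your entropy-identity argument does not directly deliver (though it is not needed where the lemma is used, in the simplex argument of Theorem \ref{th3}).
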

\begin{proof}
The proof is provided in Appendix \ref{app3}.
\end{proof}
\begin{theorem}\label{th3}
    When $X$ is binary, Algorithm 1 is optimal, and we have
\begin{equation}\label{G0eq}
    G_0(X,Y)=H(Y)-\left(1-\sum_y\min_xp(y|x)\right)H(X).
\end{equation}
Also, when $(|\mathcal{X}|,|\mathcal{Y}|)=(3,2)$, Algorithm 1 is optimal, and letting $\mathcal{Y}\triangleq\{y_1,y_2\}$, if we label the elements of $\mathcal{X}$ according to $x_1\triangleq\min_{x\in\mathcal{X}}p(y_1|x), x_2\triangleq\min_{x\in\mathcal{X}}p(y_2|x)$, we have
\begin{equation}\label{G0eq2}
    G_0(X,Y)=H(Y)-\left(p(y_1|x_3)-p(y_1|x_1)\right)H_b(p_1)-\left(p(y_2|x_3)-p(y_2|x_2)\right)H_b(p_2),
\end{equation}
where $p_i\triangleq p(x_i),\ i\in[3].$
\end{theorem}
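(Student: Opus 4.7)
My plan is to sandwich $G_0(X,Y)$ between the lower bound of Theorem \ref{thbound5} and the upper bound of Theorem \ref{th2} and observe that they coincide. When $|\mathcal{X}|=2$, $H_b(\min_x p_X(x))=H(X)$ because the binary entropy is symmetric; moreover $H(X)\le 1\le\log|\mathcal{Y}|$ whenever $|\mathcal{Y}|\ge 2$ (the case $|\mathcal{Y}|=1$ is trivially $G_0=0$), so the inner $\min\{H(X),\log|\mathcal{Y}|\}$ in (\ref{G0lowerbound}) evaluates to $H(X)$. The two bounds then match, giving (\ref{G0eq}); the $(\cdot)^+$ is superfluous since the matching upper bound, being an upper bound on $G_0\ge 0$, is itself non-negative.

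\textbf{$(|\mathcal{X}|,|\mathcal{Y}|)=(3,2)$ case, achievability.} Set $\theta_i\triangleq p(y_1|x_i)$; the labeling convention forces $\theta_1\le\theta_3\le\theta_2$ (if the minimizers $x_1,x_2$ coincide, a short check yields $X\independent Y$ and (\ref{G0eq2}) is trivial). I would trace Algorithm 1 on this instance: step 2 produces $u_1,u_2$ with masses $\theta_1$ and $1-\theta_2$ and $H(Y|U=u_j)=0$; the next two iterations then produce $u_3,u_4$ with masses $a=\theta_3-\theta_1$ and $b=\theta_2-\theta_3$, respectively linked to $\{(x_1,y_2),(x_2,y_1),(x_3,y_1)\}$ and $\{(x_1,y_2),(x_2,y_1),(x_3,y_2)\}$. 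Reading off $p_{Y|U}(\cdot|u_j)$ gives $H(Y|U=u_3)=H_b(p_1)$ and $H(Y|U=u_4)=H_b(p_2)$, so Algorithm 1 achieves $I(Y;U)=H(Y)-aH_b(p_1)-bH_b(p_2)$, the right-hand side of (\ref{G0eq2}).

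\textbf{$(3,2)$ case, converse via LP duality.} For the matching upper bound, I would invoke Lemma \ref{lemcardi} to restrict attention to mappings placing mass on functions $g:\mathcal{X}\to\mathcal{Y}$ with $H(Y|U=u_g)=H(g(X))$; this turns $G_0(X,Y)$ into $H(Y)$ minus a linear program over weights $q_g\ge 0$ for the $2^3=8$ functions (two constant with $H(g(X))=0$; six non-constant with $H(g(X))\in\{H_b(p_1),H_b(p_2),H_b(p_3)\}$ according to which $x_i$ is the odd one out), subject to $\sum_{g:g(x)=y}q_g=p(y|x)$. Since Algorithm 1 activates only four of the eight weights, complementary slackness nominates the dual point $\lambda_1=0,\ \lambda_2=H_b(p_1),\ \lambda_3=-H_b(p_2),\ \lambda_4=H_b(p_2)-H_b(p_1)$, where $\lambda_1$ is dual to the normalization and $\lambda_{i+1}$ is dual to $p(y_2|x_i)=\sum_{g:g(x_i)=y_2}q_g$. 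A direct computation gives the dual objective $H_b(p_1)(b_1-b_3)+H_b(p_2)(b_3-b_2)=aH_b(p_1)+bH_b(p_2)$, where $b_i\triangleq p(y_2|x_i)$.

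The main obstacle is dual feasibility. Six of the eight dual constraints are immediate from $H_b\ge 0$; the remaining two, corresponding to the non-constant functions with $H(g(X))=H_b(p_3)$ (namely those sending $x_3$ alone to one value of $Y$), reduce exactly to $|H_b(p_1)-H_b(p_2)|\le H_b(p_3)$, i.e., Lemma \ref{p3s} applied to $(p_1,p_2,p_3)$. This pins down the role of Lemma \ref{p3s} in the statement: it exists precisely to certify this feasibility check. Strong duality for LPs then upgrades the achievability of Algorithm 1 to equality in (\ref{G0eq2}).
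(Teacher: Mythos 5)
Your proposal is correct and takes essentially the paper's route: the binary case is the same sandwich of Theorem \ref{thbound5} and Theorem \ref{th2}, and your LP-duality certificate for $(|\mathcal{X}|,|\mathcal{Y}|)=(3,2)$ is exactly the dual face of the paper's simplex argument — your dual slacks ($2H_b(p_1)$, $2H_b(p_2)$, $H_b(p_3)+H_b(p_1)-H_b(p_2)$, $H_b(p_3)+H_b(p_2)-H_b(p_1)$) are the paper's reduced costs in (\ref{fs2})--(\ref{fs3}), with Lemma \ref{p3s} certifying the same two constraints, so weak duality against your certificate and the paper's nonnegative-reduced-cost argument are the same proof in two packagings. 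Your explicit trace of Algorithm 1 showing it attains the right-hand side of (\ref{G0eq2}) is a useful addition that the paper leaves implicit.
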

\begin{proof}
   When $X$ is binary, we have $H(X)=H_b(\min_xp(x))$, and from (\ref{G0upperbound}) and Theorem \ref{thbound5}, (\ref{G0eq}) is obtained, and Algorithm 1 achieves it.

   When $(|\mathcal{X}|,|\mathcal{Y}|)=(3,2)$, we prove the optimality by the simplex method \cite{murty}. As already stated, $G_0(X,Y)$ can be obtained via an LP. The problem is to find values for $p_U(\cdot)$ in Figure \ref{fig1} such that $H(Y|U)$ is minimized and $p_{XY}(\cdot,\cdot)=\sum_u p_{XY|U}(\cdot,\cdot|u)p_U(u)$. For $i,j,k\in[2]$, let $P_{ijk}$ denote the probability of that $u$ which is connected to $(x_1,y_i),(x_2,y_j),(x_3,y_k)$ with transition probabilities $p1,p2$ and $p_3$, respectively. For example, in Figure \ref{fig1}, we have $P_{111}=p(u_1),P_{112}=p(u_2), P_{121}=p(u_3)$, and so on. As a result, the LP minimizes $H(Y|U)$, which is
   \begin{equation}\label{obj0}
       P_{111}\cdot 0+P_{112}H_b(p_3)+P_{121}H_b(p2)+P_{122}H_b(p_1)+P_{211}H_b(p_1)+P_{212}H_b(p_2)+P_{221}H_b(p_3)+P_{222}\cdot 0
   \end{equation}
over the non-negative values of $P_{ijk}, i,j,k\in[2]$ such that
\begin{align}\label{const0}
    P_{111}+P_{112}+P_{121}+P_{122}&=p(y_1|x_1)\nonumber\\
    P_{121}+P_{122}+P_{221}+P_{222}&=p(y_2|x_2)\nonumber\\
    P_{111}+P_{121}+P_{211}+P_{221}&=p(y_1|x_3)\nonumber\\
    P_{111}+P_{112}+P_{121}+P_{122}+P_{211}+P_{212}+P_{221}+P_{222}&=1.
\end{align}
Changing the order of the variables, the simplex tableau for this LP is provided in Table \ref{table:1}. By performing Gaussian elimination (i.e., subtracting row 1 from row 3, and then subtracting the sum of row 1, row 2, and the resulting row 3 from row 4), we obtain a canonical tableau as in Table \ref{table:2}. Note that all the elements of the rightmost column are non-negative due to the initial convention $x_1\triangleq\min_{x\in\mathcal{X}}p(y_1|x), x_2\triangleq\min_{x\in\mathcal{X}}p(y_2|x)$.
\begin{table}[h!]
\centering
\begin{tabular}{||c c c c c c c c c||} 
 \hline
 $P_{111}$ & $P_{222}$ & $P_{211}$ & $P_{212}$ &$ P_{112}$ & $P_{121}$ & $P_{122}$ & $P_{221}$ & \\ [0.5ex] 
 \hline\hline
1 & 0 & 0 & 0 & 1 & 1 & 1 & 0 & $p(y_1|x_1)$\\ 
 \hline
0 & 1 & 0 & 0 & 0 & 1 & 1 & 1 & $p(y_2|x_2)$\\
 \hline
 1 & 0 & 1 & 0 & 0 & 1 & 0 & 1 & $p(y_1|x_3)$ \\
 \hline
 1 & 1 & 1 & 1 & 1 & 1 & 1 & 1 & 1 \\
 \hline
\end{tabular}
\caption{Simplex tableau of order 4 and dimension 8.}
\label{table:1}
\end{table}
%%%%%%%%%%%%
\begin{table}[h!]
\centering
\begin{tabular}{||c c c c c c c c c||} 
 \hline
 $P_{111}$ & $P_{222}$ & $P_{211}$ & $P_{212}$ &$ P_{112}$ & $P_{121}$ & $P_{122}$ & $P_{221}$ & \\ [0.5ex] 
 \hline\hline
1 & 0 & 0 & 0 & 1 & 1 & 1 & 0 & $p(y_1|x_1)$\\ 
 \hline
0 & 1 & 0 & 0 & 0 & 1 & 1 & 1 & $p(y_2|x_2)$\\
 \hline
0 & 0 & 1 & 0 & -1 & 0 & -1 & 1 & $p(y_1|x_3)-p(y_1|x_1)$ \\
 \hline
 0 & 0 & 0 & 1 & 1 & -1 & 0 & -1 & $p(y_2|x_3)-p(y_2|x_2)$ \\
 \hline
\end{tabular}
\caption{Canonical form.}
\label{table:2}
\end{table}

The first four columns of the tableau in Table \ref{table:2} form a basis, and
\begin{align}\label{bfs0}
    [P_{111},P_{222},P_{211},P_{212},P_{112},P_{121},&P_{122},P_{221}]^T=\nonumber\\&[p(y_1|x_1),p(y_2|x_2),p(y_1|x_3)-p(y_1|x_1),p(y_2|x_3)-p(y_2|x_2),0,0,0,0]^T
\end{align}
is a basic feasible solution, which results in
\begin{equation}\label{bfs1}
    H(Y|U)=\left(p(y_1|x_3)-p(y_1|x_1)\right)H_b(p_1)+\left(p(y_2|x_3)-p(y_2|x_2)\right)H_b(p_2).
\end{equation}

In order to show that no other feasible solution outperforms (\ref{bfs0}), i.e., resulting in a smaller $H(Y|U)$ than (\ref{bfs1}), we proceed as follows. For any feasible solution $\Tilde{P}_{ijk},\ i,j,k\in[2]$, with some calculations, we get
\begin{align}
    H(Y|U)&=\left(p(y_1|x_3)-p(y_1|x_1)\right)H_b(p_1)+\left(p(y_2|x_3)-p(y_2|x_2)\right)H_b(p_2)\label{fs1}\\
    &\ \ \ +(H_b(p_3)+H_b(p_1)-H_b(p_2))\Tilde{P}_{112}+2H_b(p_2)\Tilde{P}_{121}\label{fs2}\\
    &\ \ \ +2H_b(p_1)\Tilde{P}_{122}+(H_b(p_3)+H_b(p_2)-H_b(p_1))\Tilde{P}_{221}\label{fs3}.
\end{align}
The RHS in (\ref{fs1}) is equal to (\ref{bfs1}). From Lemma \ref{p3s} and non-negativity of entropy, all the remaining terms in (\ref{fs2}) and (\ref{fs3}) are non-negative. As a result, no other feasible solution can produce a smaller $H(Y|U)$ than (\ref{bfs1}), which proves (\ref{G0eq2}).
\end{proof}

\begin{corollary}\label{cor3}
When $|\mathcal{X}|=2$, or $(|\mathcal{X}|,|\mathcal{Y}|)=(3,2)$, for fixed $p_{Y|X}$, the optimal $p^*_{U|X,Y}$ is $p_X$-invariant \footnote{This, however, does not hold in general.}, and $G_0(X,Y)$ is convex in $p_X$.
\end{corollary}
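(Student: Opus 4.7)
The $p_X$-invariance of $p^*_{U|X,Y}$ will be immediate: Theorem \ref{th3} shows that Algorithm 1 is optimal in both cases listed, and Remark \ref{Rem1} observes that no step of Algorithm 1 references $p_X$, so the produced mapping depends only on $p_{Y|X}$.

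For the convexity of $G_0$ in $p_X$, my plan is to leverage this invariance more strongly. A key intermediate step is that the mapping $p^*_{U|X,Y}$ produced by Algorithm 1 in fact yields $X\independent U$ under $p_X\cdot p_{Y|X}\cdot p^*_{U|X,Y}$ for \emph{every} $p_X$ (not only the one used to construct it), and that $p_U$ is itself $p_X$-invariant. This can be read off from the backward construction used in the proof of Theorem \ref{thbound5}: for each $u$ there is a function $y_{x,u}$ with $p(x,y|u)=p_X(x)$ iff $y=y_{x,u}$, so the corresponding forward probabilities collapse to $p^*_{U|X,Y}(u|x,y_{x,u})=p_U(u)/p_{Y|X}(y_{x,u}|x)$, with $p_U(u)$ independent of $x$. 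Consequently $p(u|x)=\sum_y p_{Y|X}(y|x)\,p^*_{U|X,Y}(u|x,y)=p_U(u)$ for every $x$, regardless of $p_X$.

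Given this, I would express
\[
G_0(X,Y) = I(Y;U) = \sum_u p_U(u)\, D\bigl(p_{Y|U}(\cdot|u) \,\|\, p_Y\bigr),
\]
where for each $u$ both $p_{Y|U}(\cdot|u)$ and $p_Y$ are affine in $p_X$ (the former because the denominator $p_U(u)$ is $p_X$-invariant). Joint convexity of KL divergence in its two arguments, composed with these affine maps, gives convexity of each summand in $p_X$; a non-negative combination with the $p_X$-invariant weights $p_U(u)$ then yields convexity of $G_0$. The main obstacle is establishing the intermediate claim that $p_U$ is $p_X$-invariant and that $X\independent U$ holds uniformly over $p_X$; this is where the internal structure of Algorithm 1 is really used. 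Once that is in place, the convexity reduces to the standard fact that $D(P\|Q)$ is jointly convex in $(P,Q)$.
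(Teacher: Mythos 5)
Your proposal is correct and matches the paper's first method for this corollary almost exactly: the paper likewise derives $p_X$-invariance from Theorem \ref{th3} plus Remark \ref{Rem1}, shows via the backward structure that $p_U(u)=p_{Y|X}(y|x)p^*_{U|XY}(u|x,y)$ on the support (hence $p_U$ and the independence $X\independent U$ persist for every $p_X$), and then concludes by convexity of $I(Y;U)$ in $p_{Y|U}$ for fixed $p_U$ --- the fact you re-derive through the decomposition $I(Y;U)=\sum_u p_U(u)D\bigl(p_{Y|U}(\cdot|u)\,\|\,p_Y\bigr)$ and joint convexity of KL divergence. (The paper also offers a second, independent verification via the closed-form expressions and a Hessian/strong-data-processing argument, but that is not needed.)
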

{\color{black}

\begin{proof}
The first part of this claim is proved by combining the optimality of Algorithm 1 in Theorem \ref{th3}, and Remark \ref{Rem1}.

In what follows, we provide two methods to prove the second part of the claim. The first method makes use of $p_X$-invariance, while the second one relies on directly inspecting $G_0(X,Y)$ in (\ref{G0eq}) and (\ref{G0eq2}).
\subsection{First method}
Fix $\lambda\in(0,1)$. Let $p_{W|XY}$, $p_{V|XY}$ and $p_{U|XY}$ be optimal solutions in the evaluation of $G_0(p_X'\cdot p_{Y|X})$, $G_0(p_X''\cdot p_{Y|X})$ and $G_0\left((\lambda p'_X+\bar{\lambda}p_X'')\cdot p_{Y|X}\right)$, respectively. From Corollary \ref{cor3}, we can set $\mathcal{W}=\mathcal{V}=\mathcal{U}$, and write
\begin{equation}\label{ch0}
 p_{W|XY}=p_{V|XY}=p_{U|XY}.    
\end{equation}
In the sequel, by $p_{XYW}$, $p_{XYV}$ and $p_{XYU}$, we are referring to $p'_X\cdot p_{Y|X}\cdot p_{W|XY}$, $p''_X\cdot p_{Y|X}\cdot p_{V|XY}$ and $(\lambda p'_X+\bar{\lambda}p''_X)\cdot p_{Y|X}\cdot p_{U|XY}$, respectively, which share the same support, denoted by $\mathcal{S}$, and induce
\begin{equation}\label{dalil1}
    p_{YU}=\lambda p_{YW}+\bar{\lambda}p_{YV}.
\end{equation}
For any tuple $(x,y,w)\in\mathcal{S}$, we have
\begin{align}
    p_W(w)&=\frac{p_{XYW}(x,y,w)}{p_{XY|W}(x,y|w)}\nonumber\\
    &=\frac{p'_X(x)p_{Y|X}(y|x)p_{W|XY}(w|x,y)}{p_{X|W}(x|w)p_{Y|XW}(y|x,w)}\nonumber\\
    &=\frac{p'_X(x)p_{Y|X}(y|x)p_{W|XY}(w|x,y)}{p'_{X}(x)}\label{chera1}\\
    &=p_{Y|X}(y|x)p_{W|XY}(w|x,y),\label{chera2}
\end{align}
where (\ref{chera1}) follows from i) having $X\independent W$ in $p_{XYW}$, and ii) having $p_{Y|XW}(y|x,w)=1$, since $(x,y,w)\in\mathcal{S}$.

From (\ref{ch0}) and (\ref{chera2}), we get that $p_U(\cdot)=p_W(\cdot)=p_V(\cdot)$, which in conjunction with (\ref{dalil1}) and convexity of $I(A;B)$ in $p_{A|B}$ for fixed $p_B$, results in
\begin{equation*}
    I(Y;U)\leq\lambda I(Y;W)+\bar{\lambda}I(Y;V).
\end{equation*}
This is equivalent to
\begin{equation*}
    G_0\left((\lambda p'_X+\bar{\lambda}p_X'')\cdot p_{Y|X}\right)\leq\lambda G_0(p'_X\cdot p_{Y|X})+\bar{\lambda}G_0(p''_X\cdot p_{Y|X}),
\end{equation*}
which completes the proof.
\subsection{Second method}
According to \cite{Anantharam}, for fixed $p_{Y|X}$, the minimum value of $\lambda$, for which $H(Y)-\lambda H(X)$ is a convex functional of $p_X$ is $\sup_{p_X}s^*(X;Y)$, where $s^*(X;Y)\triangleq\sup_{Z:Z-X-Y,Z\not\independent X}\frac{I(Z;Y)}{I(Z;X)}$ is the strong data processing coefficient. Since we have $1-\sum_y\min_xp(y|x)\geq \sup_{p_X}s^*(X;Y)$ (see \cite[Corollary 6]{Issa}), we conclude that $G_0(X,Y)$ in (\ref{G0eq}) is convex in $p_X$ for fixed $p_{Y|X}$. 

For fixed $p_{Y|X}$, denoting $\alpha_i\triangleq p(y_1|x_i),\ i\in[3]$ for simplicity, (\ref{G0eq2}) becomes
\begin{equation}
    f(p_1,p_2)\triangleq G_0(X,Y)=H_b(p_y)-(\alpha_3-\alpha_1)H_b(p_1)-(\alpha_2-\alpha_3)H_b(p_2),
\end{equation}
where $p_y\triangleq (\alpha_1-\alpha_3)p_1+(\alpha_2-\alpha_3)p_2+\alpha_3$.

To prove the convexity of $G_0$ in $p_X$, we show that $f$ is convex in $(p_1,p_2)$. By some calculations, the Hessian matrix of $f$ is obtained as
\begin{equation}
    \nabla^2f=\begin{bmatrix}
        a+b&\sqrt{ac}\\
        \sqrt{ac}&c+d
    \end{bmatrix},
\end{equation}
with $a\triangleq-\frac{(\alpha_1-\alpha_3)^2}{p_y(1-p_y)},b\triangleq\frac{\alpha_3-\alpha_1}{p_1(1-p_1)},c\triangleq-\frac{(\alpha_2-\alpha_3)^2}{p_y(1-p_y)}$ and $d\triangleq\frac{\alpha_2-\alpha_3}{p_2(1-p_2)}$.

The characteristic polynomial of $\nabla^2f$ is $\lambda^2-(a+b+c+d)\lambda +ad+b(c+d)$ whose roots determine the eigenvalues. From the initial convention, we have $\alpha_1\leq\alpha_3\leq\alpha_2$. This, in conjunction with the inequality $\alpha_i\Bar{\alpha_j}+\alpha_j\Bar{\alpha_i}\geq|\alpha_i-\alpha_j|,\ i,j\in[3]$ results in $a+b+c+d\geq 0$ and $ad+b(c+d)\geq0$, which in turn means that the eigenvalues of $\nabla^2f$ are non-negative. Therefore, $\nabla^2f$ is positive semi-definite and $f$ is convex in $(p_1,p_2)$.
\end{proof}
% \begin{corollary}
% If $Y$ is uniform on $\mathcal{Y}$, we have
% \begin{equation*}
%     G_0(X,Y)\geq \max\bigg\{\bigg(2^{-\mathcal{L}(Y\to X)}\log|\mathcal{Y}|-1\bigg)^+,\bigg(\log|\mathcal{Y}| -\left(1-2^{-\mathcal{L}^c(X\to Y)}\right)H(X)\bigg)^+\bigg\}.
% \end{equation*}
% \end{corollary}
The convexity result in Corollary \ref{cor3} is not specific to $|\mathcal{X}|=2$ or $(|\mathcal{X}|,|\mathcal{Y}|)=(3,2)$ as the second part of the following Theorem indicates.
\begin{theorem}\label{concavity2}
For given $\epsilon\geq 0$ and $p_X$, $G_\epsilon(X,Y)$ is concave in $p_{Y|X}$. Furthermore, for given $p_{Y|X}$, $G_0(X,Y)$ and $g_0(X,Y)$ are convex in $p_X$.
\end{theorem}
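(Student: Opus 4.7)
For the concavity of $G_\epsilon$ in $p_{Y|X}$ with $p_X$ and $\epsilon$ fixed, I plan a time-sharing argument. Let $p_{Y|X}^{(\lambda)} = \lambda p_{Y|X}^{(1)} + \bar\lambda p_{Y|X}^{(2)}$ and, for each $i \in \{1,2\}$, let $p^{(i)}_{U|XY}$ be an optimal mechanism achieving $G_\epsilon^{(i)} \triangleq G_\epsilon(p_X\cdot p_{Y|X}^{(i)})$. I introduce an auxiliary switch $V \in \{1,2\}$ with $\Pr(V{=}1)=\lambda$, independent of $X$, and set $Y \mid X,V{=}i \sim p_{Y|X}^{(i)}$ together with $U \mid X,Y,V{=}i \sim p^{(i)}_{U|XY}$, so that the induced $(X,Y)$ has law $p_X\cdot p_{Y|X}^{(\lambda)}$ and $\tilde{U}\triangleq(U,V)$ qualifies as a released variable for the mixture. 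From $X \independent V$ and the chain rule, $I(X;\tilde{U}) = I(X;V) + I(X;U\mid V) = 0 + \lambda I(X;U_1) + \bar\lambda I(X;U_2) \leq \epsilon$. For the utility, applying $H(Y) \geq H(Y\mid V)$ (concavity of entropy) gives
\begin{equation*}
I(Y;\tilde{U}) = H(Y) - H(Y\mid U,V) \;\geq\; H(Y\mid V) - H(Y\mid U,V) = I(Y;U\mid V) = \lambda G_\epsilon^{(1)} + \bar\lambda G_\epsilon^{(2)},
\end{equation*}
which is the desired concavity bound.

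For the convexity of $G_0$ in $p_X$ with $p_{Y|X}$ fixed, the key observation I plan to exploit is that the $X\independent U$ condition on $p_{U|XY}$, rewritten as $\sum_y p_{Y|X}(y|x)\, p_{U|XY}(u|x,y) = p_U(u)$ for every $x$, is $p_X$-invariant; evaluating the left-hand side at any single $x$ also shows that the induced $p_U$ is determined by $p_{Y|X}$ and $p_{U|XY}$ alone, hence $p_X$-invariant. I will then write $I(Y;U) = \sum_u p_U(u)\, D(p_{Y|U=u} \,\|\, p_Y)$. For each feasible mechanism, both $p_{Y|U=u}$ and $p_Y$ are affine in $p_X$, so joint convexity of $D(\cdot\|\cdot)$ in its two arguments renders each summand convex in $p_X$, and the $p_X$-invariant nonnegative weights $p_U(u)$ preserve convexity under summation. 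Thus $I(Y;U)$ is convex in $p_X$ for each feasible mechanism, and $G_0$ is a pointwise supremum of convex functions over a $p_X$-independent index set, which is convex.

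For $g_0$, the mechanism $p_{U|Y}$ itself is manifestly free of $p_X$, so in the representation $I(Y;U) = \sum_y p_Y(y)\, D(p_{U|Y=y} \,\|\, p_U)$ both arguments of the KL divergence are $p_X$-invariant while $p_Y$ is affine in $p_X$, making $I(Y;U)$ affine in $p_X$ for each feasible $p_{U|Y}$; a pointwise maximum of affine functions is convex. The main subtlety I anticipate is spotting that the $X \independent U$ constraint decouples from $p_X$ once recast in terms of the kernel $p_{Y|X}$ and the mechanism alone; without this, one would be comparing objectives over $p_X$-dependent feasible sets and the pointwise-max-of-convex argument would not go through directly. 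Once this structural invariance is established, the two convexity claims reduce to the joint convexity of relative entropy combined with the standard fact that pointwise suprema preserve convexity.
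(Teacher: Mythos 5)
Your proposal is correct and takes essentially the same route as the paper: for concavity in $p_{Y|X}$, your switch variable $V$ revealed in the release $(U,V)$ plays exactly the role of the paper's indicator $E=\mathds{1}_{\{U\in\mathcal{W}\}}$ built from disjoint alphabets, with the same chain-rule and conditioning steps. For convexity in $p_X$, your key observation that the perfect-privacy feasibility condition and the induced $p_U$ are $p_X$-invariant for a fixed mechanism, combined with convexity (affinity for $g_0$) of the objective in $p_X$, is precisely the paper's argument that the mixture-optimal mechanism remains feasible at the endpoint distributions and that $I(Y;U)$ is convex in $p_{Y|U}$ for fixed $p_U$.
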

\begin{proof}
    The first part of the claim is proved as follows. Fix $\epsilon\geq 0$. Given two conditional pmfs $p'_{Y|X}$ and $p''_{Y|X}$, let $p_{W|XY}$ and $p_{V|XY}$ be maximizers of $G_\epsilon(p_X\cdot p'_{Y|X})$ and $G_\epsilon(p_X\cdot p''_{Y|X})$ in (\ref{Geps}), respectively. In other words, when $(X,Y)$ is distributed according to $p_X\cdot p'_{Y|X}$ (or $p_X\cdot p''_{Y|X}$), an optimal privacy-preserving mapping is $p_{W|XY}$ (or $p_{V|XY}$). In the sequel, $p_{XYW}$ and $p_{XYV}$ refer to $p_X\cdot p'_{Y|X}\cdot p_{W|XY}$ and $p_X\cdot p''_{Y|X}\cdot p_{V|XY}$, respectively. Without loss of optimality, select the alphabets $\mathcal{W}$ and $\mathcal{V}$, such that $\mathcal{W}\cap\mathcal{V}=\emptyset.$ Fix $\lambda\in(0,1)$, and let $\mathcal{U}\triangleq \mathcal{W}\cup\mathcal{V}$. Let $(X,Y)\sim p_X\cdot(\lambda p'_{Y|X}+\bar{\lambda}p''_{Y|X})$, and define the following conditional pmf
\begin{equation}\label{pudef}
    p_{U|XY}(u|x,y)\triangleq\frac{\lambda p'(y|x)p_{W|XY}(u|x,y)\cdot\mathds{1}_{\{u\in\mathcal{W}\}}+\bar{\lambda} p''(y|x)p_{V|XY}(u|x,y)\cdot\mathds{1}_{\{u\in\mathcal{V}\}}}{\lambda p'(y|x)+\bar{\lambda}p''(y|x)},
\end{equation}
for all $(u,x,y)\in\mathcal{U}\times\textnormal{supp}(X,Y)$. 

In what follows, we show that the joint pmf $p_{XYU}$ induced by (\ref{pudef}) results in $I(X;U)\leq\epsilon$ and $I(Y;U)\geq\lambda I(Y;W)+\bar{\lambda}I(Y;V)$, which completes the proof.

The construction in (\ref{pudef}) results in
\begin{align}
    p_U(u)&=\lambda p_W(u)\cdot\mathds{1}_{\{u\in\mathcal{W}\}}+\bar{\lambda}p_V(u)\cdot\mathds{1}_{\{u\in\mathcal{V}\}}\label{ph1}\\
    p_{XY|U}(x,y|u)&=p_{XY|W}(x,y|u)\cdot\mathds{1}_{\{u\in\mathcal{W}\}}+p_{XY|V}(x,y|u)\cdot\mathds{1}_{\{u\in\mathcal{V}\}},\ \forall (u,x,y)\in\mathcal{U}\times\textnormal{supp}(X,Y).\label{ph2}
\end{align}
Let $E\triangleq\mathds{1}_{\{U\in\mathcal{W}\}}$ be a binary r.v. defined as a function of $U$. From (\ref{ph1}), we have $p_E(1)=\lambda$. Also,
\begin{align}
    p_{E|X}(1|x)&=\sum_yp_{EY|X}(1,y|x)\nonumber\\
    &=\sum_yp_{Y|X}(y|x)p_{E|XY}(1|x,y)\nonumber\\
    &=\sum_y(\lambda p'(y|x)+\bar{\lambda}p''(y|x))p_{E|XY}(1|x,y)\nonumber\\
    &=\sum_y(\lambda p'(y|x)+\bar{\lambda}p''(y|x))\sum_{u\in\mathcal{W}}p_{U|XY}(u|x,y)\nonumber
\end{align}
\begin{align}
    &=\sum_y(\lambda p'(y|x)+\bar{\lambda}p''(y|x))\sum_{u\in\mathcal{W}}\frac{\lambda p'(y|x)p_{W|XY}(u|x,y)}{\lambda p'(y|x)+\bar{\lambda}p''(y|x)}\label{fyek}\\
    &=\sum_y(\lambda p'(y|x)+\bar{\lambda}p''(y|x))\frac{\lambda p'(y|x)}{\lambda p'(y|x)+\bar{\lambda}p''(y|x)}\nonumber\\
    &=\lambda,\ \forall x\in\mathcal{X}\label{fdo}
\end{align}
where (\ref{fyek}) results from (\ref{pudef}), and (\ref{fdo}) results in $E\independent X$ with $p_E(1)=\lambda$. 

We can write
\begin{align}
  p_{XYU|E}(x,y,u|e)
  &=\frac{p_U(u)p_{XY|U}(x,y|u)p_{E|U}(e|u)}{p_E(e)}\nonumber\\
  &=\frac{p_U(u)\left(p_{XY|W}(x,y|u)\cdot\mathds{1}_{\{e=1\}}+p_{XY|V}(x,y|u)\cdot\mathds{1}_{\{e=0\}}\right)}{\lambda\cdot\mathds{1}_{\{e=1\}}+\bar{\lambda}\cdot\mathds{1}_{\{e=0\}}}\label{ph3}\\
  &=p_{XYW}(x,y,u)\cdot\mathds{1}_{\{e=1\}}+p_{XYV}(x,y,u)\cdot\mathds{1}_{\{e=0\}},\ e\in\{0,1\},\label{ph4}
\end{align}
where (\ref{ph3}) and (\ref{ph4}) follow from (\ref{ph2}) and (\ref{ph1}). Therefore,
\begin{align}
    I(X;U)&=I(X;U,E)\label{gyek}\\
    &=I(X;U|E)\label{gdo}\\
    &=\lambda I(X;U|E=1)+\bar{\lambda}I(X;U|E=0)\nonumber\\
    &=\lambda I(X;W)+\bar{\lambda}I(X;V)\label{gse}\\
    &\leq\epsilon,\label{eptight}
\end{align}
where in (\ref{gyek}), $E$ is a deterministic function of $U$, and (\ref{gdo}) results from $X\independent E$. We obtain (\ref{gse}) from (\ref{ph4}). 
% \footnote{Note that the inequality in (\ref{eptight}) is tight, since we have $I(X;W)=I(X;V)=\epsilon$. Nonetheless, since this tightness is not needed in this analysis, (\ref{eptight}) is written in a general form.}

From (\ref{eptight}), we are allowed to write
\begin{align}
    G_\epsilon\left(p_X\cdot(\lambda p'_{Y|X}+\bar{\lambda}p''_{Y|X})\right)&\geq I(Y;U)\nonumber\\
    &=I(Y;U,E)\nonumber\\
    &\geq I(Y;U|E)\nonumber\\
    &=\lambda I(Y;U|E=1)+\bar{\lambda}I(Y;U|E=0)\nonumber\\
    &=\lambda I(Y;W)+\bar{\lambda}I(Y;V)\label{jahrom}\\
    &=\lambda G_\epsilon(p_X\cdot p'_{Y|X})+\bar{\lambda}G_\epsilon(p_X\cdot p''_{Y|X})\label{jahrom2},
\end{align}
where (\ref{jahrom}) follows from (\ref{ph4}). Finally, by noting that $\epsilon$ and $\lambda$ were chosen arbitrarily, (\ref{jahrom2}) proves that for fixed $p_X$, $G_\epsilon(p_X\cdot p_{Y|X})$ is concave in $p_{Y|X}$.

To prove the second part of the claim, we proceed as follows. Fix $p_{Y|X}$ and $\lambda\in(0,1)$. Given two pmfs $p_X'$ and $p_X''$, let $p_X=\lambda p_X' + \Bar{\lambda}p_X''$. Let $p^*_{U|XY}$ be an optimal mapping in the evaluation of $G_0(p_X\cdot p_{Y|X})$, which induces $U^*\sim p_{U^*}$. Let $I^q$ denote $I(Y;U)$ when $(X,Y,U)\sim q\cdot p_{Y|X}\cdot p^*_{U|XY}$. Obviously, $G_0(p_X\cdot p_{Y|X})=I^{p_X}$.

When $(X,Y,U)\sim q\cdot p_{Y|X}\cdot p^*_{U|XY}$, where $q$ is an arbitrary pmf on $\mathcal{X}$, we have
\begin{align}
    p_{U|X}(\cdot|x)&=\sum_{y}p_{Y|X}(y|x)p^*_{U|XY}(\cdot|x,y)\nonumber\\
    &=p_{U^*|X}(\cdot|x)\nonumber\\
    &=p_{U^*}(\cdot),
\end{align}
and hence, $X\independent U$. Therefore, we have by definition
\begin{equation}\label{tozih3}
    I^q\leq G_0(q\cdot p_{Y|X}).
\end{equation}
When $(X,Y,U)\sim q\cdot p_{Y|X}\cdot p^*_{U|XY}$, we have
\begin{align}
    p_{Y|U}(\cdot|\cdot)&=\sum_xq(x)p_{Y|XU}\nonumber\\
    &=\sum_xq(x)\frac{p_{Y|X}(\cdot|x)p^*_{U|XY}(\cdot|x,\cdot)}{p_{U^*}(\cdot)},
\end{align}
which means that the conditional pmf $p_{Y|U}$ derived from $(\lambda p_X'+\bar{\lambda}p_X'')\cdot p_{Y|X}\cdot p^*_{U|XY}$ is $\lambda$ times the conditional pmf $p_{Y|U}$ derived from $p_X'\cdot p_{Y|X}\cdot p^*_{U|XY}$ plus $\Bar{\lambda}$ times the conditional pmf $p_{Y|U}$ derived from $p_X''\cdot p_{Y|X}\cdot p^*_{U|XY}$.
Hence, we can write
\begin{align}
    G_0((\lambda p_X'+\bar{\lambda}p_X'')\cdot p_{Y|X})&=I^{\lambda p_X'+\bar{\lambda}p_X''}\nonumber\\
    &\leq\lambda I^{p_X'}+\Bar{\lambda}I^{p_X''}\label{tozih1}\\
    &\leq \lambda G_0(p_X'\cdot p_{Y|X})+\Bar{\lambda}G_0(p_X''\cdot p_{Y|X})\label{tozih2},
\end{align}
where (\ref{tozih1}) results from the convexity of $I(Y;U)$ in $p_{Y|U}$ for fixed $p_U$, and (\ref{tozih2}) follows from (\ref{tozih3}).

Finally, the above analysis for proving the convexity of $G_0(X,Y)$ remains valid if $X-Y-U$ form a Markov chain, and we replace $p^*_{U|XY}$ with $p^*_{U|Y}$, which is an optimal mapping in the evaluation of $g_0(p_X\cdot p_{Y|X})$. Therefore, for fixed $p_{Y|X}$, $g_0(X,Y)$ is also convex in $p_X$.
\end{proof}
\begin{example}
Let $(X,Y)\in\{x_1,x_2\}\times\{y_1,y_2\}$, in which $p\triangleq p(x_1)$, and the transition from $X$ to $Y$ follows a general \textit{binary asymmetric channel} (BAC) with cross over probabilities $\alpha,\beta \in[0,1]$, i.e., $\alpha\triangleq p(y_2|x_1)$ and $\beta\triangleq p(y_1|x_2)$. Therefore, we have $q\triangleq 
 p(y_1)=p\bar{\alpha}+\bar{p}\beta$. From Theorem \ref{th2}, we have that
\begin{align}
    G_0(X,Y)&=H_b(q)-\left(1-\min\{\alpha,\bar{\beta}\}-\min\{\beta,\bar{\alpha}\}\right)H_b(p),\nonumber\\
    &=H_b(q)-|\alpha-\bar{\beta}|H_b(p).
\end{align}
It is already known that for a given $p$, $G_0(X,Y)$ is concave in $(\alpha,\beta)$, and the maximizer is $(\alpha^*,\beta^*)=(\frac{1}{2},\frac{1}{2})$, which results in $X\independent Y$, and $H(Y)=1$.

For a given $(\alpha,\beta)$, $G_0(X,Y)$ is convex in $p$. Therefore, by setting $\frac{d}{dp}G_0(X,Y)=0$, we solve for $p^*$ as
% in 
% \begin{equation}\label{pstar}
%     \left(\frac{1}{p^*(1-\alpha-\beta)-\beta}-1\right)^{(1-\alpha-\beta)}=\left(\frac{1}{p^*}-1\right)^{\left(1-\min\{\alpha,1-\alpha\}-\min\{\beta,1-\beta\}\right)}.
% \end{equation}
% It can be verified that if $\alpha,\beta<\frac{1}{2}$, or $\alpha,\beta>\frac{1}{2}$, (\ref{pstar}) has a closed-form solution\footnote{Note tht for $\alpha=\beta=\frac{1}{2}$, we have $G_0(X,Y)=1$, irrespective of the value of $p$.} 
as
\begin{align*}
    p^*&=\frac{\beta}{\alpha+\beta}\mathds{1}_{\{\alpha<\Bar{\beta}\}}+\frac{\bar{\beta}}{\bar{\alpha}+\bar{\beta}}\mathds{1}_{\{\alpha>\Bar{\beta}\}},
\end{align*}
and when $\alpha = \bar{\beta}$, $p^*$ is arbitrary, since we get $G_0(X,Y)=H_b(\beta)=H_b(\alpha)$ irrespective of the value of $p$ due to the independence of $X$ and $Y$.

In the special case of $\alpha=\beta$, i.e., if the transition from $X$ to $Y$ is a binary symmetric channel (BSC($\alpha$)), we have that $p^*=\frac{1}{2}$, if $\alpha\neq\frac{1}{2}$, and any number in $[0,1]$ if $\alpha=\frac{1}{2}$. Therefore, we get $G_0(X,Y)\geq2\min\{\alpha,\bar{\alpha}\}$. Taking Corollary \ref{cor3} into account, this means that if for a pair $(X,Y)$ whose $p_{Y|X}$ is BSC($\alpha$), the curator is unaware of $p_X$, he can still obtain the optimal mapping and make sure that the utility of this release is at least $2\min\{\alpha,\bar{\alpha}\}$.

As another case, let $(X,Y)\in\{x_1,x_2\}\times\{y_0,y_1,y_2\}$, in which $p(x_1)\triangleq p$, and the transition from $X$ to $Y$ is a binary eraser channel with erasure probability of $e$, i.e., $\textnormal{BEC}(e)$, in which $p(y_i|x_j)=e\mathds{1}_{\{i=0\}}+\bar{e}\mathds{1}_{\{i=j\}},\ (i,j)\in[0:2]\times[2]$. Denote the entropy of a ternary random variable with mass probabilities $p_1,p_2,p_3$ by $H(p_1,p_2,p_3)$. From Theorem \ref{th2}, we have
\begin{align*}
    G_0(X,Y)&=H(Y)-(1-\sum_y\min_xp(y|x))H(X)\\
    &=H(p\bar{e},e,\bar{p}\bar{e})-\bar{e}H_b(p)\\
    &=H_b(e),
\end{align*}
and $U\triangleq \mathds{1}_{\{Y=y_0\}}$ attains it. Moreover, since this mapping is only from $Y$ to $U$, we conclude that $g_0(X,Y)=H_b(e)$. By changing the role of $X$ and $Y$, and applying the second part of Theorem \ref{th3}, we get
\begin{equation*}
    G_0(Y,X) = H_b(p)-pH_b(\Bar{p}\Bar{e})-\Bar{p}H_b(p\Bar{e}),
\end{equation*}
where $U\triangleq X\cdot\mathds{1}_{\{Y=y_0\}}+\tilde{X}\cdot\mathds{1}_{\{Y\neq y_0\}}$, in which $\Tilde{X}\in\mathcal{X}$ is a Bernoulli random variable independent of $(X,Y)$ with $p_{\tilde{X}}(x_1)=p$, achieves it. We also have $g_0(Y,X)=0$ from \cite[Corollary 2]{RG-JSAIT} by noting that the nullity of $\mathbf{P}_{Y|X}$ is zero.\footnote{With some back of the envelope calculations, it can be verified that for an $M$-ary erasure channel ($M\geq 2$), in which $(X,Y)\in\{x_1,\ldots,x_M\}\times\{y_0,y_1,\ldots,y_M\}$ and $p(y_i|x_j)=e\mathds{1}_{\{i=0\}}+\bar{e}\mathds{1}_{\{i=j\}},\ (i,j)\in[0:M]\times[M]$, we have $g_0(X,Y)=G_0(X,Y)=H(Y|X)=H_b(e)$ attained by $U\triangleq \mathds{1}_{\{Y=y_0\}}$. Furthermore, denoting the probability vector associated with $p_X$ by $\mathbf{p}$, we have $g_0(Y,X)=0$ and $G_0(Y,X)=H(X)-\sum_{i=1}^Mp_iH(\bar{e}\mathbf{p}+e\mathbf{1}_i)$ attained by $U\triangleq X\cdot\mathds{1}_{\{Y=y_0\}}+\tilde{X}\cdot\mathds{1}_{\{Y\neq y_0\}}$, where $\mathbf{1}_i$ is the $i$-th standard unit vector, and $\tilde{X}\independent (X,Y)$ is distributed according to $p_X$.}
\end{example}
% \begin{corollary}\label{cor55}
% If $X$ is binary, we have 
% \begin{equation*}
%   g_0(X,Y)\leq H(Y)-\left(1-\sum_y\min_xp(y|x)\right)H(X),
% \end{equation*}
% which is tighter than the previously known upper bound in literature, i.e., $H(Y|X)$.
% \end{corollary}
% \begin{proof}
% The above bound follows (\ref{G0eq}), and the fact that $g_0(X,Y)\leq G_0(X,Y)$ by definition. Also, the fact that this bound is tighter than $H(Y|X)$ is immediate from having $G_0(X,Y)\leq H(Y|X)$ as in (\ref{G0up}). \footnote{Alternatively, the latter can be proved by strong data processing inequality as follows.
% \begin{align}\label{tritri}
%     \frac{I(X;Y)}{H(X)}&\leq s^*(X;Y)\leq\max_{p_X}s^*(X;Y)\leq 1-2^{-\mathcal{L}^c(X\to Y)},
% \end{align}
% where the first inequality is immediate by noting that $Z\triangleq X$ is a permissible point in the search space in the definition of $s^*(X;Y)$, which results in the LHS of (\ref{tritri}), and the last inequality follows from \cite[Corollary 6]{Issa}. This results in $H(Y)-\left(1-2^{-\mathcal{L}^c(X\to Y)}\right)H(X)\leq H(Y|X)$.}
% \end{proof}

% \begin{remark}
% (On the tightness of the lower bound in Theorem \ref{thbound5}) We observed that the second term in the RHS of (\ref{G0bound}) is tight if $X$ is a deterministic function of $Y$. In what follows, it is shown that the first term in the RHS of (\ref{G0bound}) can be within a margin of 1 bit from $G_0(X,Y)$.
% \end{remark}
}

Thus far, we have presented an achievable scheme in Algorithm 1 as a lower bound on $G_0(X,Y)$.
Based on this scheme, we proceed to present a privacy-preserving algorithm as a lower bound on $G_\epsilon(X,Y)$.  On the U-P plane, the privacy restriction becomes stricter as we move from right to left. The rightmost point $(I(X;Y),H(Y))$ is achieved when there is no constraint on privacy, while the leftmost point $(0,G_0(X,Y))$ relates to the strictest privacy restriction, which is statistical independence between the private and the released data. Therefore, it makes sense to obtain achievable points on the U-P plane starting from no privacy constraint and increasing the restrictions incrementally until we reach the requirement $X\independent U$. One approach is as follows. Let $R\triangleq (I(X;Y),H(Y))$ denote the rightmost point on the U-P trade-off curve. In order to obtain an achievable point $P_1$, we impose the requirement that $X$ must be at least 1-independent of $U$, i.e., $X\stackrel{1}\independent U$. In other words, we require that $X$ must have at least one realization, call it $x_1$, whose posterior probability $p(x_1|u)$ is the same as the prior $p(x_1)$ for any realization $u$ of $U$. Letting $0$ not be an element of $\mathcal{X}$, define an auxiliary random variable $Z_1\triangleq X\cdot\mathds{1}_{\{X=x_1\}}$. The constraint of having $X$ at least 1-independent of the released data can be satisfied by designing a privacy-preserving scheme via Algorithm 1 for the new pair $(Z_1,Y)$, i.e., $p_{U_1|Z_1Y}$, which guarantees $Z_1\independent U_1$, or equivalently $X\stackrel{1}\independent U_1$. The mapping $p_{U_1|XY}$, which is induced by $p_{U_1|Z_1Y}$, results in the achievable point $P_1\triangleq(I(X;U_1),I(Y;U_1))$. To get $P_2$, we require that $X$ must be at least 2-independent of the released data. To this end select arbitrary $x_1,x_2\in\mathcal{X}$, and define $Z_2\triangleq X\cdot\mathds{1}_{\{X\in\{x_1,x_2\}\}}$, and obtain a mapping $p_{U_2|Z_2Y}$ via Algorithm 1, which satisfies $X\stackrel{2}\independent U_2$. The corresponding $p_{U_2|XY}$ produces $P_2\triangleq(I(X;U_2),I(Y;U_2))$. This procedure continues providing achievable points until we reach the constraint $X\independent U$, which is taken care of by applying Algorithm 1 to the pair $(X,Y)$. Finally, the upper concave envelope of the set of these achievable points results in a lower bound on $G_\epsilon(X,Y)$, which is formally presented in the following Proposition..
\begin{proposition}\label{Prop1}(\textbf{Privacy-preserving mapping - a lower bound on $G_\epsilon(X,Y)$}) Let $(X,Y)\sim p_{XY}$ be given. Let $k\in[|\mathcal{X}|-2]$ and $\mathcal{X}'$ be an arbitrary subset of $\mathcal{X}$ with size $k$. Without loss of generality, assume that $0\not\in\mathcal{X}$, and let $Z$ be a function of $X$ defined as
\begin{equation}\label{zkdef}
    Z\triangleq X\cdot\mathds{1}_{\{X\in\mathcal{X}'\}}.
\end{equation}
Applying the achievable scheme in Theorem \ref{thbound5}, i.e., Algorithm 1, to the pair $(Z,Y)$ results in a mapping $p_{U|ZY}$, such that $Z\independent U$, or equivalently, $X\stackrel{k}\independent U$. Calculate $p_{U|XY}$ from $p_{U|ZY}$, and set $P_{\mathcal{X}'}\triangleq(I(X;U),I(Y;U))$. Apply Algorithm 1 to $(X,Y)$ and obtain an achievable point denoted by $L$. Let $\mathcal{P}\triangleq \cup_{\mathcal{X}'\subset\mathcal{X}}\{P_{\mathcal{X}'}\}\cup\{L,R\}$ denote the set of all the achievable points obtained so far. Finally, we have the U-P trade-off $uce_{[\mathcal{P}]}(\epsilon)$ as a lower bound on $G_\epsilon(X,Y),\ \forall\epsilon\in[0,I(X;Y)]$.
\end{proposition}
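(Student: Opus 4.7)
The plan is to check that every element of $\mathcal{P}$ is an achievable $(I(X;U),I(Y;U))$ pair in the full data observation model; once this is done, the Remark following Definition~\ref{UCE} immediately yields $uce_{[\mathcal{P}]}(\epsilon)\leq G_\epsilon(X,Y)$ for every $\epsilon\in[0,I(X;Y)]$. Two of the three sources of points in $\mathcal{P}$ are immediate: $R=(I(X;Y),H(Y))$ is attained by $U\triangleq Y$, and $L$ is attained by feeding $(X,Y)$ directly into Algorithm~1, which by Theorem~\ref{thbound5} produces a mapping with $X\independent U$ and hence a valid achievable point on the vertical axis.

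The substantive step is to establish achievability of each $P_{\mathcal{X}'}$. Fix $\mathcal{X}'\subset\mathcal{X}$ of size $k\in[|\mathcal{X}|-2]$, and write $z(x)\triangleq x\cdot\mathds{1}_{\{x\in\mathcal{X}'\}}$ so that $Z=z(X)$. I would apply Algorithm~1 to the pair $(Z,Y)$, obtaining a mapping $p_{U|ZY}$ with $Z\independent U$, and then lift it to an $(X,Y)$-mapping by declaring
\begin{equation*}
    p_{U|XY}(u|x,y)\triangleq p_{U|ZY}(u\,|\,z(x),y).
\end{equation*}
Because $Z$ is a deterministic function of $X$, the resulting joint $p_{XYZU}$ is consistent, the chain $U-(Z,Y)-X$ holds, and the $(Z,Y,U)$ marginal coincides with what Algorithm~1 produced; in particular $p_U$ and $p_{U|Z}$ are inherited. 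It then remains to translate $Z\independent U$ into $X\stackrel{k}\independent U$: for every $x\in\mathcal{X}'$ the events $\{X=x\}$ and $\{Z=x\}$ coincide, which gives
\begin{equation*}
    p_{U|X}(\cdot|x)=p_{U|Z}(\cdot|x)=p_U(\cdot).
\end{equation*}
At least $k$ elements of $\mathcal{X}$ thus satisfy (\ref{fed2}), which is precisely $X\stackrel{k}\independent U$, and hence $P_{\mathcal{X}'}=(I(X;U),I(Y;U))$ is achievable.

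With every point of $\mathcal{P}$ shown to be achievable, and noting that $L$ and $R$ force the horizontal domain of $uce_{[\mathcal{P}]}(\cdot)$ to cover $[0,I(X;Y)]$, the Remark following Definition~\ref{UCE} delivers the stated lower bound. The only subtle step I expect is the lifting argument: one must verify that the induced $(Z,Y,U)$ marginal really matches the output of Algorithm~1, so that the independence $Z\independent U$ is not corrupted on passing through to $X$. This is routine bookkeeping because the lift is defined through the deterministic function $z(\cdot)$, but it is the single place where care is needed before invoking the upper-concave-envelope Remark to close the proof.
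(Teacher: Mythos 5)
Your proposal is correct and follows essentially the same route the paper takes: the lift $p_{U|XY}(u|x,y)=p_{U|ZY}(u|z(x),y)$ is exactly the paper's construction in (\ref{zcond}), the translation of $Z\independent U$ into $X\stackrel{k}\independent U$ via $\{X=x\}=\{Z=x\}$ for $x\in\mathcal{X}'$ is the intended argument, and the conclusion via the upper-concave-envelope remark (with $L$ and $R$ pinning the domain to $[0,I(X;Y)]$) matches the paper. The only thing you omit is the paper's convenient closed form $I(X;U)=I(X;Y)-I(Z;Y)$ from (\ref{qe1})--(\ref{qe5}), which is not needed for the validity of the lower bound.
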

For the tuple $(Z,X,Y,U)$ in Proposition \ref{Prop1}, it can be readily verified that 
\begin{align}
    p_{Z}(z)&=p_{X}(z)\cdot\mathds{1}_{\{z\neq 0\}}+(\sum_{x\in\mathcal{X}\backslash\mathcal{X}'}p_X(x))\cdot\mathds{1}_{\{z= 0\}}\nonumber\\
    p_{Y|Z}(y|z)&=p_{Y|X}(y|z)\cdot\mathds{1}_{\{z\neq 0\}}+\frac{\sum_{x\in\mathcal{X}\backslash\mathcal{X}'} p_{XY}(x,y)}{\sum_{x\in\mathcal{X}\backslash\mathcal{X}'}p_X(x)}\cdot\mathds{1}_{\{z= 0\}},\label{zcond}\\
    p_{U|XY}(u|x,y)&=p_{U|ZY}(u|x,y)\cdot\mathds{1}_{\{x\in\mathcal{X}'\}}+p_{U|ZY}(u|0,y)\cdot\mathds{1}_{\{x\not\in\mathcal{X}'\}}.\nonumber
\end{align}
% and applying Algorithm 1 in Theorem \ref{thbound5} to $(Z,Y)$ provides a mapping $p_{U|ZY}$ with utility lower bounded as
% \begin{equation}\label{LLBB}
%     I(Y;U)\geq \bigg(H(Y)-\left(1-\sum_y\min_{z}p_{Y|Z}(y|z)\right)\min\{H(Z),\log|\mathcal{Y}|\}\bigg)^+,
% \end{equation}
% which is tight when $|\mathcal{X}'|=1$, i.e., when $Z$ is binary. 
The quantity $I(Y;U)$ is obtained after $p_{U|ZY}$ is obtained in the algorithm (unless $|\mathcal{X}'|=1$, for which Theorem \ref{th2} gives a closed-form solution), while $I(X;U)$ can be obtained prior to the algorithm as 
\begin{align}
    I(X;U)&=I(X,Z;U)\label{qe1}\\
    &=I(Z;U)+I(X;U|Z)\nonumber\\
    &=I(X;U|Z)\label{qe2}\\
    &=I(X;U,Y|Z)\label{qe3}
\end{align}
\begin{align}
    &=I(X;Y|Z)+I(X;U|Z,Y)\nonumber\\
    &=I(X;Y|Z)\label{qe4}\\
    &= I(X;Y)-I(Z;Y)\label{qe5},
\end{align}
where (\ref{qe1}) follows from defining $Z$ as a function of $X$ in (\ref{zkdef}), and (\ref{qe2}) results from $Z\independent U$ in Algorithm 1. The equality in (\ref{qe3}) follows from satisfying the conditions of Lemma \ref{lemcardi} in Algorithm 1. In other words, conditioned on the event $\{Z=z\}$, $Y$ is a function of $U$, or equivalently $H(Y|Z,U)=0$. The equality in (\ref{qe4}) results from the Markov chain $X-(Z,Y)-U$, since $U$ is generated by applying Algorithm 1 to the pair $(Z,Y)$. Finally, (\ref{qe5}) follows from the fact that $Z-X-Y$ form a Markov chain.

% Setting $P_0\triangleq(I(X;Y),H(Y))$, and $P_k\triangleq(I(X;U_k),I(Y;U_k))=(I(X;Y|Z_k),I(Y;U_k)),\ k\in[|\mathcal{X}|-1]$, we obtain $|\mathcal{X}|$ points on the utility-privacy plane, whose upper concave envelope, i.e., $\mathds{U}_{[\{P_k\}_{k=0}^{|\mathcal{X}|-1}]}(\cdot)$ provides the achievable utility-privacy trade-off as a lower bound on $G_\epsilon(X,Y)$. 
% Finally, since the trade-off is a concave function, we take the upper concave envelope of these points to obtain a lower bound on $G_\epsilon(X,Y)$.

The procedure in Proposition \ref{Prop1} is computationally complex when $|\mathcal{X}|$ is large, since there are $2^{|\mathcal{X}|}-|\mathcal{X}|-2$ nonempty subsets of $\mathcal{X}$ with size at most $|\mathcal{X}|-2$. Therefore, for large $|\mathcal{X}|$, we can restrict the analysis to a fixed collection of subsets denoted by $\{\mathcal{X}_k\}$ for $k\in[|\mathcal{X}|-2]$, in which $|\mathcal{X}_k|=k$ and $\mathcal{X}_{j}\subset\mathcal{X}_k$ if $j\leq k$.
% for an arbitrary sequence of the subsets of $\mathcal{X}$, i.e., $\{\mathcal{X}_k\}_{k=1}^{|\mathcal{X}|-1}$ with $|\mathcal{X}_k|=k$. {\color{red}Therefore, a questions arises on how to select this initial sequence. Although this selection could be arbitrary, the following heuristic methods are proposed.}
Let $(Z_k,U_k)$ be the same as $(Z,U)$ in Proposition \ref{Prop1} when $\mathcal{X}'=\mathcal{X}_k,\ k\in[|\mathcal{X}|-2].$ On the U-P plane, the slope of the line connecting $R$ ($=(I(X;Y),H(Y))$) to $(I(X;U_k),I(Y;U_k))$ is
\begin{align}
    m_k&\triangleq\frac{H(Y)-I(Y;U_k)}{I(X;Y)-I(X;U_k)}\nonumber\\
    &=\frac{H(Y)-I(Y;U_k)}{I(Z_k;Y)}\label{zzk}\\
    &\leq\frac{\min\bigg\{H(Y),\left(1-\sum_y\min_{z}p_{Y|Z_k}(y|z)\right)\min\{H(Z_k),\log|\mathcal{Y}|\}\bigg\}}{I(Z_k;Y)}\label{bandebala},
\end{align}
where (\ref{zzk}) and (\ref{bandebala}) follow from (\ref{qe5}) and (\ref{G0lowerbound}), respectively. We denote the upper bound in (\ref{bandebala}) by $f(p_{XY},\mathcal{X}_k)$.

Knowing that $G_\epsilon(X,Y)$ is a concave and non-decreasing curve, a heuristic approach is to select $\mathcal{X}_1$ such that $m_1$ is minimized, and an even simpler approach would be to minimize the upper bound, i.e., $f(p_{XY},\mathcal{X}_1)$. Therefore, we set 

\begin{equation}\label{ordering0}
   x_k\triangleq \argmin_{\substack{x\in\mathcal{X}\backslash\mathcal{X}_{k-1}:\\ \mathcal{\mathcal{X}}_{k}=\mathcal{X}_{k-1}\cup\{x\}}}f(p_{XY},\mathcal{X}_k),\ \  \forall k\in[|\mathcal{X}|-2],\ \mathcal{X}_0\triangleq\emptyset.
\end{equation}
The procedures of this achievable scheme are provided in Algorithm \ref{algorithm3}.
% \begin{align}
%     I(Y;U_1)&=H(Y)-(1-\sum_y\min_{z}p_{Y|Z_1}(y|z))H(Z_1),\\
%     I(Y;U_{|\mathcal{X}|-1})&\geq(H(Y)-(1-\sum_y\min_{x}p_{Y|X}(y|x))H(X))^+\\
%     I(X;U_{|\mathcal{X}|-1})&=0.
% \end{align}
\begin{algorithm}
\caption{Privacy-preserving mapping (simplified version of Proposition \ref{Prop1})}\label{algorithm3}
\begin{algorithmic}[1]
{\color{black}\Function{Algorithm2}{$p_{X,Y}$}
\State Select the collection $\{\mathcal{X}_k\},\ k\in[|\mathcal{X}|-2]$, according to (\ref{ordering0}).
\State $p_{U|XY}\triangleq\textnormal{Algorithm1}(p_{Y|X})$ 
\State $\mathcal{P}\triangleq\{(0,I(Y;U)),(I(X;Y),H(Y))\}$
\State $Z_0\triangleq 0$ 
\State $k = 1$
\While{$I(Z_{k-1};Y)\neq I(X;Y)$}
\State $Z_k\triangleq X\cdot\mathds{1}_{\{X\in\mathcal{X}_k\}}$
\State $p_{U_k|Z_kY}\triangleq\textnormal{Algorithm1}(p_{Y|Z_k})$ 
\State $\mathcal{P}\triangleq\mathcal{P}\cup\{(I(X;U_k),I(Y;U_k))\}$
\State $k = k + 1$
\EndWhile
\State \Return $uce_{[\mathcal{P}]}(\cdot)$
\EndFunction }
\end{algorithmic}
\end{algorithm}

\begin{remark}\label{nonrem}(\textbf{Non-algorithmic U-P trade-off})
The achievable points in Proposition \ref{Prop1} are obtained after applying Algorithm 1 to each constructed pair $(Z,Y)$. More specifically, it is the $y$ coordinate of these points that are obtained after the application of the algorithm, since the $x$ coordinates are already known prior to the algorithm as in (\ref{zcond}). If we replace these $y$ coordinates with their corresponding lower bounds according to (\ref{G0lowerbound}), we obtain a new set of achievable points. Obviously, these points lie below the initial set of points, but they are obtained without the need for the algorithm. Therefore, preserving (\ref{zkdef}) and its preceding assumptions in Proposition \ref{Prop1}, we set
\begin{align}
    \Tilde{P}_{\mathcal{X}'}&\triangleq \bigg(I(X;Y)-I(X;Z)\ ,\ \left(H(Y)-\left(1-\sum_y\min_{z}p_{Y|Z}(y|z)\right)\min\{H(Z),\log|\mathcal{Y}|\}\right)^+\bigg)\\
    \tilde{L}&\triangleq \bigg(0\ ,\ \left(H(Y)-\left(1-\sum_y\min_{x}p_{Y|X}(y|x)\right)\min\{H(X),\log|\mathcal{Y}|\}\right)^+\bigg),
\end{align}
and $\Tilde{\mathcal{P}}\triangleq\cup_{\mathcal{X}'\subset\mathcal{X}}\{\tilde{P}_{\mathcal{X}'}\}\cup\{\tilde{L},R\}$. The U-P trade-off $uce_{[\tilde{\mathcal{P}}]}(\epsilon)$ is a non-algorithmic lower bound on $G_\epsilon(X,Y),\ \forall\epsilon\in[0,I(X;Y)]$. Needless to say that this can also be applied to the simplified scheme (for large $|\mathcal{X}|$) discussed in Algorithm \ref{algorithm3}.
\end{remark}
\section{Public data observation}
{\color{black}
In this section, we assume that the curator has access to only $Y$, and propose an achievable scheme, i.e., a lower bound on $g_\epsilon(X,Y)$, defined in (\ref{def}). We start with $\epsilon = 0$, i.e., $X\independent U$. An algorithm is proposed (Algorithm 3) that provides a lower bound on $g_0(X,Y)$. Afterwards, this algorithm is used to generate a privacy-preserving mapping, which results in a lower bound on $g_\epsilon(X,Y),\ \epsilon\in[0:I(X;Y)]$.

Like the previous section, we start with a simple theoretical result. 
\begin{lemma}\label{lemcardi2}(\cite[Theorem 1]{RG-JSAIT})
For an optimal mapping $p_{U^*|Y}$ in the evaluation of $g_0(X,Y)$, we have
\begin{equation}
    |\{y\in\mathcal{Y}|p(y|u^*)>0\}|\leq\textnormal{rank}(\mathbf{P}_{X|Y}),\ \forall u^*\in\mathcal{U}^*,
\end{equation}
where $\mathbf{P}_{X|Y}$ is an $|\mathcal{X}|\times|\mathcal{Y}|$ matrix with $(i,j)$-th entry equal to $p_{X|Y}(x_i|y_j)$.
\end{lemma}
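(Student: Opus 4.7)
My plan is to reduce the statement to a standard basic-feasible-solution argument, working with the backward parameterization already exploited in Remark \ref{concavity}. In the Markov chain $X-Y-U$, any joint pmf is determined by $p_U(\cdot)$ together with the conditionals $\mathbf{p}_{Y|U}(\cdot|u)\in\mathds{R}^{|\mathcal{Y}|}$, $u\in\mathcal{U}$. The perfect privacy constraint $X\independent U$ reads
\begin{equation*}
p_X(\cdot)=\sum_{y}p_{X|Y}(\cdot|y)\, p_{Y|U}(y|u),\quad\forall u\in\mathcal{U},
\end{equation*}
which in matrix form is $\mathbf{P}_{X|Y}\,\mathbf{p}_{Y|U}(\cdot|u)=\mathbf{p}_X$. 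Hence every column $\mathbf{p}_{Y|U}(\cdot|u)$ must lie in the polytope
\begin{equation*}
\mathcal{Q}\triangleq\{\mathbf{q}\in\mathds{R}^{|\mathcal{Y}|}:\mathbf{P}_{X|Y}\mathbf{q}=\mathbf{p}_X,\ \mathbf{q}\geq 0\}.
\end{equation*}
Note that normalization $\mathbf{1}^T\mathbf{q}=1$ is automatic, since summing the linear constraint over $x$ gives $\mathbf{1}^T\mathbf{q}=\mathbf{1}^T\mathbf{p}_X=1$.

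Next, I would argue that at an optimum, each $\mathbf{p}_{Y|U^*}(\cdot|u^*)$ with $p_{U^*}(u^*)>0$ can be taken to be an extreme point of $\mathcal{Q}$. Indeed, maximizing $I(Y;U)$ is equivalent to minimizing $H(Y|U)=\sum_u p_U(u)H(Y|U=u)$, and $\mathbf{q}\mapsto H(Y|U=u)$ evaluated at $\mathbf{q}=\mathbf{p}_{Y|U}(\cdot|u)$ is a concave functional of $\mathbf{q}$. If some optimal $\mathbf{p}_{Y|U^*}(\cdot|u^*)$ were a strict convex combination $\sum_i\lambda_i\mathbf{q}_i$ of points in $\mathcal{Q}$, then splitting $u^*$ into several mass points $u^*_i$ carrying the conditionals $\mathbf{q}_i$ with weights $p_{U^*}(u^*)\lambda_i$ preserves the joint distribution $p_{XY}$, preserves $X\independent U$, and by concavity does not increase $H(Y|U)$. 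Hence without loss of optimality every column is extremal.

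Finally I invoke the standard LP fact: an extreme point $\mathbf{q}$ of $\{\mathbf{q}\geq 0:\mathbf{A}\mathbf{q}=\mathbf{b}\}$ has support corresponding to a linearly independent set of columns of $\mathbf{A}$, and hence $|\textnormal{supp}(\mathbf{q})|\leq \textnormal{rank}(\mathbf{A})$. Applying this with $\mathbf{A}=\mathbf{P}_{X|Y}$ to $\mathbf{q}=\mathbf{p}_{Y|U^*}(\cdot|u^*)$ yields
\begin{equation*}
|\{y\in\mathcal{Y}:p_{Y|U^*}(y|u^*)>0\}|\leq \textnormal{rank}(\mathbf{P}_{X|Y}),\quad\forall u^*\in\mathcal{U}^*,
\end{equation*}
which is the claim.

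The only delicate step is the splitting argument in the second paragraph, where one has to be sure that resplitting $u^*$ leaves $p_{XY}$ unchanged and the privacy constraint satisfied; this is routine since each $\mathbf{q}_i\in\mathcal{Q}$ by construction, and the induced marginal over $Y$ is preserved by linearity. Everything else is bookkeeping plus the textbook fact about basic feasible solutions of a linear system.
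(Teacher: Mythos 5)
Your proposal is correct and takes essentially the same route the paper relies on for this lemma (it is imported as \cite[Theorem 1]{RG-JSAIT} and sketched in Remark \ref{concavity}): backward parameterization, the polytope $\{\mathbf{q}\geq 0:\mathbf{P}_{X|Y}\mathbf{q}=\mathbf{p}_X\}$, extremality of the columns $\mathbf{p}_{Y|U^*}(\cdot|u^*)$ at the optimum, and the rank bound on the support of extreme points. One small strengthening worth making explicit: since Shannon entropy is strictly concave, your splitting step strictly decreases $H(Y|U)$ whenever a column is a nontrivial convex combination of distinct points of the polytope, so the support bound holds for every optimal mapping rather than merely ``without loss of optimality,'' which is the form in which the paper later invokes the lemma.
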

Lemma \ref{lemcardi2} implies that in the evaluation of $g_0(X,Y)$, if $X$ is binary, for any $u^*\in\mathcal{U}^*$ (corresponding to an/the optimal solution), there exist at most two realizations of $Y$, denoted by $y_1,y_2\in\mathcal{Y}$, such that  $p(y_1|u^*),p(y_2|u^*)>0$. It is also obvious that if there exists only one $y_0\in\mathcal{Y}$, such that $p(y_0|u^*)>0$ (and hence, $p(y_0|u^*)=1$), the condition $X\independent U$ indicates that this $y_0$ must satisfy $p_{X|Y}(\cdot|y_0)=p_X(\cdot)$. In other words, if there exists no such $y_0$ satisfying $p_{X|Y}(\cdot|y_0)=p_X(\cdot)$, we must have $|\{y\in\mathcal{Y}|p(y|u^*)>0\}|=2, \forall u^*\in\mathcal{U}^*$ for binary $X$. Therefore, in the achievable scheme, it makes sense to build a mapping $p_{U|Y}$, such that its corresponding $p_{Y|U}$ is in line with this observation. To this end, we start with the backward model, i.e., $p_{Y|U}$, by imposing that i) for all the realizations $u$ of $U$, the condition in lemma \ref{lemcardi2} must be satisfied, ii) the pmf $p_{Y}$ must be preserved in $p_{Y,U}$. The results are provided in the following Proposition. Throughout this section, we exclude the trivial case of $X\independent Y$.

\begin{proposition}\label{ALG} (\textbf{A lower bound on $g_0(X,Y)$ for binary $X$.})
Let $\mathcal{X}\triangleq\{x_0,x_1\}$. First, if there exists a mass point $\hat{y}\in\mathcal{Y}$, for which $p(x_0|\hat{y})=p(x_0)$, we create a corresponding $u$, such that $p(u|y)=\mathds{1}_{\{y=\hat{y}\}},\ \forall y\in\mathcal{Y}$. The set of all such $\hat{y}$'s is denoted by $\mathcal{B}\triangleq\{y_1,\ldots,y_{|\mathcal{B}|}\}$.\footnote{Needless to say that if $p(x_0|y)\neq p(x_0),\ \forall y\in\mathcal{Y}$, we have $\mathcal{B}=\emptyset$, and $|\mathcal{B}|=0.$ Also, the elements of $\mathcal{Y}$ have been relabeled in accordance with the definition of $\mathcal{B}$.} Therefore, $|\mathcal{B}|$ realizations of $U$ are created according to $p(u_i|y)\triangleq\mathds{1}_{\{y=y_i\}},\ \forall i\in[|\mathcal{B}|],\ \forall y\in\mathcal{Y}$. Furthermore, we have $H(Y|U=u)=0$ and $p(x_0|u)=p(x_0),\ \forall u\in\{u_1,\ldots,u_{|\mathcal{B}|}\}$. 

Next, $\mathcal{Y}\backslash\mathcal{B}$ is considered. Note that there is no element $y$ of this set for which $p(x_0|y)=p(x_0)$. Hence, in line with lemma \ref{lemcardi2}, we create realizations of $U$, each of which connected to exactly two elements of this set. Having in mind that we require to have $p(x_0|u)=p(x_0)$ for any $u\in\mathcal{U}$, we conclude that each of these newly created $u$'s must be connected to two elements $y_0,y_0'\in\mathcal{Y}\backslash\mathcal{B}$ such that $p(x_0)$ can be written as a convex combination of $p(x_0|y_0)$ and $p(x_0|y_0')$. In other words, we must have either $p(x_0|y_0)<p(x_0)<p(x_0|y_0')$ or $p(x_0|y_0)>p(x_0)>p(x_0|y_0')$. In this light, the set $\mathcal{Y}\backslash\mathcal{B}$ is divided into disjoint sets $\mathcal{Y}_0\triangleq\{y\in\mathcal{Y}|p(x_0|y)<p(x_0)\}$, and $\mathcal{Y}_0'\triangleq\{y\in\mathcal{Y}|p(x_0|y)>p(x_0)\}$. The purpose of this division is to make sure that $p(x_0)$ can be written as a convex combination of an arbitrary element of $\mathcal{Y}_0$ and an arbitrary element of $\mathcal{Y}_0'$. Therefore, if we create a mass point (or node) $u$, and connect it to one node in $\mathcal{Y}_0$, and another node in $\mathcal{Y}_0'$, with proper weights, the posterior $p(x_0|u)$ remains the same as the prior $p(x_0)$, which is in accordance with the condition $X\independent U$. This is carried out in an iterative way, where at each iteration $i$, a mass point $u_{i+|\mathcal{B}|}$ is created that is connected only to two mass points of $\mathcal{Y}$, i.e., $y_0$ from $\mathcal{Y}_0$, and $y_0'$ from $\mathcal{Y}_0'$, with proper weights such that $p(x_0|u_{i+|\mathcal{B}|}) = p(x_0)$, i.e., $p(y_0|u_{i+|\mathcal{B}|})=f(y_0,y_0')\triangleq\frac{p(x_0|y_0')-p(x_0)}{p(x_0|y_0')-p(x_0|y_0)}$, and $p(y_0'|u_{i+|\mathcal{B}|})=\bar{f}(y_0,y_0')\triangleq 1-f(y_0,y_0')$. This results in $H(Y|U=u_{i+|\mathcal{B}|})=H_b(f(y_0,y_0'))$. Note that the selection of a pair $(y_0,y_0')\in\mathcal{Y}_0\times\mathcal{Y}_0'$ can be done arbitrarily; however, in order to minimize $H(Y|U)$ heuristically, an asymmetric selection is carried out, i.e., $y_0$ is the point whose corresponding $p(x_0|y_0)$ is the farthest from $p(x_0)$ among the points in $\mathcal{Y}_0$, whereas, $y_0'$ is the point whose corresponding $p(x_0|y_0')$ is the closest to $p(x_0)$ among the points in $\mathcal{Y}_0'$.

In order to preserve the marginal pmf of $Y$ in the resulting pair $(Y,U)$, a water filling approach is utilized, whereby at each iteration $i$, the water levels of $y_0,y_0'$ are updated. More specifically, once $y_0\in\mathcal{Y}_0$, and $y_0'\in\mathcal{Y}_0'$ are selected, the algorithm fills the water level of at least one of them. In the first iteration, the water levels are the mass probabilities $p(y_0)$ and $p(y_0')$. Knowing that $p(y_0|u_{1+|\mathcal{B}|})=f(y_0,y_0')$, and $p(y_0'|u_{1+|\mathcal{B}|})=\bar{f}(y_0,y_0')$, we need to assign a mass probability to $u_{1+|\mathcal{B}|}$ such that at least one of the conditions i) $p(u_{1+|\mathcal{B}|})f(y_0,y_0')=p(y_0), p(u_{1+|\mathcal{B}|})\bar{f}(y_0,y_0')\leq p(y_0')$ or ii) $p(u_{1+|\mathcal{B}|})\bar{f}(y_0,y_0')= p(y_0'), p(u_{1+|\mathcal{B}|})f(y_0,y_0')\leq p(y_0)$ is valid, which is equivalent to having at least one water level filled and the other one not exceeded. This results in the assignment $p(u_{1+|\mathcal{B}|})\triangleq \min\{\frac{p(y_0)}{f(y_0,y_0')},\frac{p(y_0')}{\bar{f}(y_0,y_0')}\}$. Afterwards, the water levels of $y_0$ and $y_0'$ are modified, and the algorithm moves on to the next iteration.

Since at each iteration, at least one water level corresponding to an element of $\mathcal{Y}\backslash\mathcal{B}$ is filled, and at the very last iteration, the remaining two water levels are filled at once\footnote{since otherwise, after one more iteration, we are left with a mass point $y'\in\mathcal{Y}\backslash\mathcal{B}$, such that $p(x_0|y')=p(x_0)$. This is a contradiction, since all such mass points are already contained in $\mathcal{B}$.}, the algorithm terminates after $N$ iterations for some $N\leq |\mathcal{Y}|-|\mathcal{B}|-1$, which results in $|\mathcal{U}|\leq |\mathcal{Y}|-1$. Let $f_{\textnormal{max}}\triangleq \max_{(y_0,y_0')\in\mathcal{Y}_0\times\mathcal{Y}_0'}f(y_0,y_0')$. As mentioned before, $H(Y|U=u)=0,\ \forall u\in\{u_1,\ldots,u_{|\mathcal{B}|}\}$. Moreover, since the conditional pmf of $Y$ given any realization $u\in\mathcal{U}\backslash\{u_1,\ldots,u_{|\mathcal{B}|}\}$ has two mass probabilities, i.e., $f(y_0,y_0'),\bar{f}(y_0,y_0')$ for some $(y_0,y_0')\in\mathcal{Y}_0\times\mathcal{Y}_0'$, we have $H(Y|U=u)\leq H_b(f_{\textnormal{max}}),\ \forall u\in\mathcal{U}\backslash\{u_1,\ldots,u_{|\mathcal{B}|}\}$. As a result, we get $H(Y|U)\leq (1-\sum_{y\in\mathcal{B}}p(y))H_b(f_{\textnormal{max}})$, and $I(Y;U)\geq \left(H(Y)-(1-\sum_{y\in\mathcal{B}}p(y))H_b(f_{\textnormal{max}})\right)^+\geq(H(Y)-1)^+$. The aforementioned procedures are provided in Algorithm \ref{algor0}.

\begin{algorithm}
\caption{A lower bound on $g_0(X,Y)$ for binary $X$.}\label{algor0}
\begin{algorithmic}[1]
{\color{black}\Function{Algorithm3}{$p_{XY}$}
\State $\mathcal{B}\triangleq\{y\in\mathcal{Y}|p(x_0|y)=p(x_0)\}=\{y_1,y_2,\ldots,y_{|\mathcal{B}|}\}$
\State $p(u_i|y)=\mathds{1}_{\{y=y_i\}},\ \forall i\in[|\mathcal{B}|],\ \forall y\in\mathcal{Y}$
\State $\mathcal{Y}_0\triangleq\{y\in\mathcal{Y}|p(x_0|y)<p(x_0)\}$, $\mathcal{Y}_0'\triangleq\{y\in\mathcal{Y}|p(x_0|y)>p(x_0)\}$
\State $f(y,y')\triangleq \frac{p(x_0|y')-p(x_0)}{p(x_0|y')-p(x_0|y)}, \bar{f}(y,y')\triangleq 1- f(y,y'),\ \forall (y,y')\in\mathcal{Y}_0\times\mathcal{Y}_0'$
\State $a_1(y)=p(y),\ \forall y\in\mathcal{Y}\backslash\mathcal{B}$
\State i = 1
\While{$\max_{y}a_i(y)\neq 0$}
\State $y_0=\argmin_{y\in\mathcal{Y}_0}\{p(x_0|y)|a_i(y)>0\}$, $y_0'=\argmin_{y\in\mathcal{Y}_0'}\{p(x_0|y)|a_i(y)>0\}$
\State $p(u_{i+|\mathcal{B}|}|y_0)=\frac{f(y_0,y_0')}{p(y_0)}\min\{\frac{a_i(y_0)}{f(y_0,y_0')},\frac{a_i(y_0')}{\bar{f}(y_0,y_0')}\}$
\State $p(u_{i+|\mathcal{B}|}|y_0')=\frac{\bar{f}(y_0,y_0')}{p(y_0')}\min\{\frac{a_i(y_0)}{f(y_0,y_0')},\frac{a_i(y_0')}{\bar{f}(y_0,y_0')}\}$
\State $p(u_{i+|\mathcal{B}|}|y)=0,\ \forall y\in\mathcal{Y}\backslash\{y_0,y_0'\}$
\State $a_{i+1}(y)= a_i(y)-p(u_{i+|\mathcal{B}|})\left(f(y_0,y_0')\mathds{1}_{\{y=y_0\}}+\bar{f}(y_0,y_0')\mathds{1}_{\{y=y_0'\}}\right),\ \forall y\in\mathcal{Y}\backslash\mathcal{B}$
\State $i = i + 1$
\EndWhile
\State \Return $p_{U|Y}$
\EndFunction }
\end{algorithmic}
\end{algorithm}

\end{proposition}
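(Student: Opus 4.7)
The proposition effectively claims four things about the output $p_{U|Y}$ of Algorithm 3: (i) the joint pmf induced by $p_{U|Y}\cdot p_Y$ satisfies $X\independent U$ under the Markov chain $X-Y-U$; (ii) the marginal $p_Y$ is preserved; (iii) the algorithm terminates in at most $|\mathcal{Y}|-|\mathcal{B}|-1$ iterations, yielding $|\mathcal{U}|\leq|\mathcal{Y}|-1$; and (iv) the claimed lower bound on $I(Y;U)$ holds. My plan is to verify each in turn.

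For (i), independence splits into two cases. For each $u_i$ with $i\in[|\mathcal{B}|]$, the deterministic rule $p(u_i|y)=\mathds{1}_{\{y=y_i\}}$ yields $p(x_0|u_i)=p(x_0|y_i)=p(x_0)$ by the defining property of $\mathcal{B}$. For each iteration-generated $u_{i+|\mathcal{B}|}$, which is connected only to a pair $(y_0,y_0')\in\mathcal{Y}_0\times\mathcal{Y}_0'$ with weights $(f(y_0,y_0'),\bar{f}(y_0,y_0'))$, a direct substitution using the definition of $f$ gives $p(x_0|u_{i+|\mathcal{B}|})=f(y_0,y_0')p(x_0|y_0)+\bar{f}(y_0,y_0')p(x_0|y_0')=p(x_0)$. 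Hence $p_{X|U}(\cdot|u)=p_X(\cdot)$ for all $u\in\mathcal{U}$.

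The main technical step is (ii)--(iii), i.e., proving that the water-filling iterations exhaust the probability mass on $\mathcal{Y}\setminus\mathcal{B}$ within the stated number of iterations (equivalently, that $\mathcal{Y}_0$ and $\mathcal{Y}_0'$ are drained simultaneously, so the algorithm cannot get stuck with only one side left nonempty). I would establish the invariant
\begin{equation*}
\sum_{y\in\mathcal{Y}_0}a_i(y)(p(x_0)-p(x_0|y))=\sum_{y\in\mathcal{Y}_0'}a_i(y)(p(x_0|y)-p(x_0))
\end{equation*}
at every iteration $i$. The base case $i=1$ follows from the total-probability identity $p(x_0)=\sum_y p(x_0|y)p(y)$, after cancelling the contributions of $\mathcal{B}$ (where $p(x_0|y)=p(x_0)$). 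The inductive step holds because the definition of $f(y_0,y_0')$ makes the per-iteration decrements on the two sides identical: $p(u_{i+|\mathcal{B}|})f(y_0,y_0')(p(x_0)-p(x_0|y_0))=p(u_{i+|\mathcal{B}|})\bar{f}(y_0,y_0')(p(x_0|y_0')-p(x_0))$. Since all summands are non-negative and the two totals remain equal, the two sides vanish in the same iteration, at which point the last two water levels are filled at once; every earlier iteration zeros at least one level by the choice $p(u_{i+|\mathcal{B}|})=\min\{a_i(y_0)/f,a_i(y_0')/\bar{f}\}$. This delivers both the cardinality bound and the identity $\sum_u p(u)p(y|u)=p(y)$ for all $y\in\mathcal{Y}$, settling (ii) and (iii).

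For (iv), I would expand $H(Y|U)=\sum_u p(u)H(Y|U=u)$, note that $H(Y|U=u_i)=0$ for $i\in[|\mathcal{B}|]$ because each such $u_i$ is deterministically tied to $y_i$, and bound $H(Y|U=u_{i+|\mathcal{B}|})=H_b(f(y_0^{(i)},y_0'^{(i)}))\leq H_b(f_{\max})$ for the remaining realizations. Since $p(u_i)=p(y_i)$ for $i\in[|\mathcal{B}|]$, the initial mass satisfies $\sum_{i=1}^{|\mathcal{B}|}p(u_i)=\sum_{y\in\mathcal{B}}p(y)$, from which $H(Y|U)\leq(1-\sum_{y\in\mathcal{B}}p(y))H_b(f_{\max})$ is immediate; the stated lower bound on $I(Y;U)$, and the cruder $(H(Y)-1)^+$ bound (using $H_b(f_{\max})\leq 1$), follow. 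The main obstacle I anticipate is precisely the termination argument in (iii): one must rule out the algorithm stranding nonzero residual mass on only one of $\mathcal{Y}_0$, $\mathcal{Y}_0'$, after which no admissible $u$ could be constructed, and the invariant above is the key structural fact that prevents this failure mode.
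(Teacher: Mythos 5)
Your proposal is correct and follows essentially the same route as the paper: verify $p_{X|U}(\cdot|u)=p_X(\cdot)$ separately for the $\mathcal{B}$-induced realizations and for the pair-connected realizations via the definition of $f$, show the water-filling preserves $p_Y$, count iterations, and bound $H(Y|U)$ by splitting off the zero-entropy realizations. The one place you go beyond the paper is the termination step: where the paper disposes of the possibility of stranded mass on one side of $\{\mathcal{Y}_0,\mathcal{Y}_0'\}$ with a brief contradiction footnote, you prove the explicit conserved balance $\sum_{y\in\mathcal{Y}_0}a_i(y)\left(p(x_0)-p(x_0|y)\right)=\sum_{y\in\mathcal{Y}_0'}a_i(y)\left(p(x_0|y)-p(x_0)\right)$, which both rules out stranding (each weight $p(x_0)-p(x_0|y)$ is strictly signed off $\mathcal{B}$) and yields the simultaneous filling of the last two levels, since at that point the invariant reduces to $a(y_0)/f(y_0,y_0')=a(y_0')/\bar{f}(y_0,y_0')$; this is a sharper and self-contained formalization of the same fact the paper asserts, and it is sound.
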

}
\begin{example}\label{ex.3}
Consider the pair $(X,Y)\in\{x_0,x_1\}\times\{y_1,y_2,y_3,y_4\}$, with
% \begin{equation*}
% \mathbf{P}_{X,Y}=\begin{bmatrix}
% 0.15 & 0.2 & 0.0625 & 0.05 \\ 0.35 & 0.05 & 0.0625 & 0.075
% \end{bmatrix},
% \end{equation*}
% which results in 
\begin{equation}\label{exam}
\mathbf{p}_Y=\begin{bmatrix}
\frac{1}{2}&\frac{1}{4}&\frac{1}{8}&\frac{1}{8}\end{bmatrix}^T,\ \ \ \mathbf{P}_{X|Y}=\begin{bmatrix}
0.3 & 0.8 & 0.5 & 0.4 \\ 0.7 & 0.2 & 0.5 & 0.6
\end{bmatrix}.
\end{equation}
We have $\mathcal{B}=\emptyset$, since there is no $y\in\mathcal{Y}$ for which $p(x_0|y)=p(x_0)$ ($=0.4625$). We have $\mathcal{Y}_0=\{y_1,y_4\}$, and $\mathcal{Y}_0'=\{y_2,y_3\}$. In step 6 of the algorithm, we have the first water levels as $a_1=[0.5,0.25,0.125,0.125]^T$, which is the same as the mass probabilities of $Y$. Figure \ref{fig22:image} provides an illustrative explanation of the iterations in the algorithm, where the probabilities for the pair $(X,Y)$ are according to (\ref{exam}). In the first three subfigures, the transition probabilities are from $U$ to $Y$, while in the last subfigure, it is from $Y$ to $U$. 

In the first iteration, we have $y_0=y_1, y_0'=y_3$ according to step 9. Hence, $u_1$ is created which connects to $y_1,y_3$ with transition probabilities $f(y_1,y_3)=0.1875$, and $\bar{f}(y_1,y_3)=1-f(y_1,y_3)$, respectively. We set $p(u_1)\triangleq\min\{\frac{a_1(y_1)}{f(y_1,y_3)},\frac{a_1(y_3)}{\bar{f}(y_1,y_3)}\}=0.154$ to fill the water level $a_1(y_3)$. 
% and steps 10,11, and 12 result in $p(u_1|1)=\frac{0.1875}{0.5}\min\{\frac{0.5}{0.1875},\frac{0.125}{0.8125}\}=0.0577$, $p(u_1|3)=1$, and $p(u_1|y)=0,\ \forall y\in\{2,4\}$. 
The water levels are updated, and we get $a_2=[0.4712,0.25,0,0.125]^T$ shown on the RHS of $y_i$'s in Figure \ref{fig22:subim2}. In iteration 2, considered in the same figure, we get $y_0=y_1, y_0'=y_2$. Hence, $u_2$ is created which connects to $y_1,y_2$ with transition probabilities $f(y_1,y_2)=0.675$, and $\bar{f}(y_1,y_2)=1-f(y_1,y_2)$, respectively. We set $p(u_2)\triangleq\min\{\frac{a_2(y_1)}{f(y_1,y_2)},\frac{a_2(y_2)}{\bar{f}(y_1,y_2)}\}=0.698$ to fill the water level $a_2(y_1)$. Hence, we get the update $a_3=[0,0.0231,0,0.125]^T$, which is shown in Figure \ref{fig22:subim3}.
% We have $f(1,2)=0.675$, $p(u_2|1)=0.9423$, $p(u_2|2)=0.9075$, $p(u_2|y)=0,\ \forall y\in\{3,4\}$. We get $a_3=[0,0.0231,0,\frac{1}{8}]$. 
In the last iteration, we have $y_0=y_4, y_0'=y_2$. Hence, $u_3$ is created which connects to $y_4,y_2$ with transition probabilities $f(y_4,y_2)=0.845$, and $\bar{f}(y_4,y_2)=1-f(y_4,y_2)$, respectively. We set $p(u_3)\triangleq\min\{\frac{a_3(y_4)}{f(y_4,y_2)},\frac{a_3(y_2)}{\bar{f}(y_4,y_2)}\}=0.148$ to fill the water levels $a_3(y_2)$ and $a_3(y_4)$.
% We have $f(4,2)=0.844$, $p(u_3|4)=1$, $p(u_3|2)=0.0925$, $p(u_3|y)=0,\ \forall y\in\{1,3\}$. 
Finally, we get $a_4=[0,0,0,0]^T$, and the algorithm terminates after 3 iterations. The output of the algorithm, i.e., $p_{U|Y}$, is shown in Figure \ref{fig22:subim4}, which results in a utility of $I(Y;U)=0.9063$ bits.  It is interesting to observe that this $p_{U|Y}$ actually coincides with the optimal solution obtained in \cite[Example 1]{arxv4} via linear programming. Therefore, for the $(X,Y)$ distributed according to (\ref{exam}), we have $g_0(X,Y)=0.9063$.
\begin{figure}

\begin{subfigure}{0.5\textwidth}
\centering
\begin{tikzpicture}[scale=1, transform shape]
  \node[dspnodeopen, minimum width=4pt,  dsp/label=above] (Y_1) {$y_1:0.5$};    
  \node[dspnodeopen, minimum width=4pt, below=1cm of Y_1, dsp/label=above] (Y_2) {$y_2:0.25$};
  \node[dspnodeopen, minimum width=4pt, below=1cm of Y_2, dsp/label=below] (Y_3) {$y_3:0.125$};
  \node[dspnodeopen, minimum width=4pt, below=1cm of Y_3, dsp/label=below] (Y_4) {$y_4:0.125$};

  \node[dspnodeopen, minimum width=4pt, right=2.5cm of Y_1, dsp/label=right] (U_1) {$u_1:0.154$};

  \node[dspnodeopen, minimum width=4pt, left=3.5cm of Y_2, dsp/label=left] (X_1) {};    
  \node[dspnodeopen, minimum width=4pt, below=1cm of X_1, dsp/label=left] (X_2) {};

\draw[line width=1.2pt] (X_1) to node[near end, inner sep=1pt, above,sloped]{\footnotesize{$0.3$}} (Y_1);
\draw[line width=1.2pt] (X_2) to node[near end, inner sep=1pt, above,sloped]{\footnotesize{$0.7$}} (Y_1);
\draw[line width=1.2pt] (X_1) to node[near end, inner sep=1pt, above,sloped]{\footnotesize{$0.8$}} (Y_2);
\draw[line width=1.2pt] (X_2) to node[near end, inner sep=1pt, above,sloped]{\footnotesize{$0.2$}} (Y_2);
\draw[line width=1.2pt] (X_1) to node[near end, inner sep=1pt, above,sloped]{\footnotesize{$0.5$}} (Y_3);
\draw[line width=1.2pt] (X_2) to node[near end, inner sep=1pt, above,sloped]{\footnotesize{$0.5$}} (Y_3);
\draw[line width=1.2pt] (X_1) to node[near end, inner sep=1pt, above,sloped]{\footnotesize{$0.4$}} (Y_4);
\draw[line width=1.2pt] (X_2) to node[near end, inner sep=1pt, above,sloped]{\footnotesize{$0.6$}} (Y_4);

\draw[line width=1.2pt] (Y_1) to node[near end, inner sep=1pt, above,sloped]{\footnotesize{$0.1875$}} (U_1);
\draw[line width=1.2pt] (Y_3) to node[near end, inner sep=1pt, above,sloped]{\footnotesize{$0.8125$}} (U_1);
\end{tikzpicture}\par\bigskip
\caption{Iteration 1.}
\label{fig22:subim1}
\end{subfigure}
\begin{subfigure}{0.5\textwidth}
\centering
\begin{tikzpicture}[scale=1, transform shape]
  \node[dspnodeopen, minimum width=4pt,  dsp/label=above] (Y_1) {$y_1:0.4712$};    
  \node[dspnodeopen, minimum width=4pt, below=1cm of Y_1, dsp/label=above] (Y_2) {$y_2:0.25$};
  \node[dspnodeopen, minimum width=4pt, below=1cm of Y_2, dsp/label=below] (Y_3) {$y_3:0$};
  \node[dspnodeopen, minimum width=4pt, below=1cm of Y_3, dsp/label=below] (Y_4) {$y_4:0.125$};

  \node[dspnodeopen, minimum width=4pt, right=2.5cm of Y_2, dsp/label=right] (U_2) {$u_2:0.698$};

  \node[dspnodeopen, minimum width=4pt, left=3.5cm of Y_2, dsp/label=left] (X_1) {};    
  \node[dspnodeopen, minimum width=4pt, below=1cm of X_1, dsp/label=left] (X_2) {};

\draw[line width=1.2pt] (X_1) to node[near end, inner sep=1pt, above,sloped]{\footnotesize{$0.3$}} (Y_1);
\draw[line width=1.2pt] (X_2) to node[near end, inner sep=1pt, above,sloped]{\footnotesize{$0.7$}} (Y_1);
\draw[line width=1.2pt] (X_1) to node[near end, inner sep=1pt, above,sloped]{\footnotesize{$0.8$}} (Y_2);
\draw[line width=1.2pt] (X_2) to node[near end, inner sep=1pt, above,sloped]{\footnotesize{$0.2$}} (Y_2);
\draw[line width=1.2pt] (X_1) to node[near end, inner sep=1pt, above,sloped]{\footnotesize{$0.5$}} (Y_3);
\draw[line width=1.2pt] (X_2) to node[near end, inner sep=1pt, above,sloped]{\footnotesize{$0.5$}} (Y_3);
\draw[line width=1.2pt] (X_1) to node[near end, inner sep=1pt, above,sloped]{\footnotesize{$0.4$}} (Y_4);
\draw[line width=1.2pt] (X_2) to node[near end, inner sep=1pt, above,sloped]{\footnotesize{$0.6$}} (Y_4);

\draw[line width=1.2pt] (Y_1) to node[near end, inner sep=1pt, above,sloped]{\footnotesize{$0.675$}} (U_2);
\draw[line width=1.2pt] (Y_2) to node[near end, inner sep=1pt, above,sloped]{\footnotesize{$0.325$}} (U_2);
\end{tikzpicture}\par\bigskip
\caption{Iteration 2.}
\label{fig22:subim2}
\end{subfigure}\\
\par\bigskip\par\bigskip
\begin{subfigure}{0.5\textwidth}
\centering
\begin{tikzpicture}[scale=1, transform shape]
  \node[dspnodeopen, minimum width=4pt,  dsp/label=above] (Y_1) {$y_1:0$};    
  \node[dspnodeopen, minimum width=4pt, below=1cm of Y_1, dsp/label=above] (Y_2) {$y_2:0.0231$};
  \node[dspnodeopen, minimum width=4pt, below=1cm of Y_2, dsp/label=below] (Y_3) {$y_3:0$};
  \node[dspnodeopen, minimum width=4pt, below=1cm of Y_3, dsp/label=below] (Y_4) {$y_4:0.125$};

  \node[dspnodeopen, minimum width=4pt, right=2.5cm of Y_3, dsp/label=right] (U_3) {$u_3:0.148$};

  \node[dspnodeopen, minimum width=4pt, left=3.5cm of Y_2, dsp/label=left] (X_1) {};    
  \node[dspnodeopen, minimum width=4pt, below=1cm of X_1, dsp/label=left] (X_2) {};

\draw[line width=1.2pt] (X_1) to node[near end, inner sep=1pt, above,sloped]{\footnotesize{$0.3$}} (Y_1);
\draw[line width=1.2pt] (X_2) to node[near end, inner sep=1pt, above,sloped]{\footnotesize{$0.7$}} (Y_1);
\draw[line width=1.2pt] (X_1) to node[near end, inner sep=1pt, above,sloped]{\footnotesize{$0.8$}} (Y_2);
\draw[line width=1.2pt] (X_2) to node[near end, inner sep=1pt, above,sloped]{\footnotesize{$0.2$}} (Y_2);
\draw[line width=1.2pt] (X_1) to node[near end, inner sep=1pt, above,sloped]{\footnotesize{$0.5$}} (Y_3);
\draw[line width=1.2pt] (X_2) to node[near end, inner sep=1pt, above,sloped]{\footnotesize{$0.5$}} (Y_3);
\draw[line width=1.2pt] (X_1) to node[near end, inner sep=1pt, above,sloped]{\footnotesize{$0.4$}} (Y_4);
\draw[line width=1.2pt] (X_2) to node[near end, inner sep=1pt, above,sloped]{\footnotesize{$0.6$}} (Y_4);

\draw[line width=1.2pt] (Y_2) to node[near end, inner sep=1pt, above,sloped]{\footnotesize{$0.155$}} (U_3);
\draw[line width=1.2pt] (Y_4) to node[near end, inner sep=1pt, above,sloped]{\footnotesize{$0.845$}} (U_3);
\end{tikzpicture}\par\bigskip
\caption{Iteration 3}
\label{fig22:subim3}
\end{subfigure}
\begin{subfigure}{0.5\textwidth}
\centering
\begin{tikzpicture}[scale=1, transform shape]
  \node[dspnodeopen, minimum width=4pt,  dsp/label=above] (Y_1) {$y_1:0.5$};    
  \node[dspnodeopen, minimum width=4pt, below=1cm of Y_1, dsp/label=above] (Y_2) {$y_2:0.25$};
  \node[dspnodeopen, minimum width=4pt, below=1cm of Y_2, dsp/label=below] (Y_3) {$y_3:0.125$};
  \node[dspnodeopen, minimum width=4pt, below=1cm of Y_3, dsp/label=below] (Y_4) {$y_4:0.125$};

  \node[dspnodeopen, minimum width=4pt, right=2.5cm of Y_1, dsp/label=right] (U_1) {$u_1:0.154$};
  \node[dspnodeopen, minimum width=4pt, right=2.5cm of Y_2, dsp/label=right] (U_2) {$u_2:0.698$};
  \node[dspnodeopen, minimum width=4pt, right=2.5cm of Y_3, dsp/label=right] (U_3) {$u_2:0.148$};

  \node[dspnodeopen, minimum width=4pt, left=3.5cm of Y_2, dsp/label=left] (X_1) {};    
  \node[dspnodeopen, minimum width=4pt, below=1cm of X_1, dsp/label=left] (X_2) {};

\draw[line width=1.2pt] (X_1) to node[near end, inner sep=1pt, above,sloped]{\footnotesize{$0.3$}} (Y_1);
\draw[line width=1.2pt] (X_2) to node[near end, inner sep=1pt, above,sloped]{\footnotesize{$0.7$}} (Y_1);
\draw[line width=1.2pt] (X_1) to node[near end, inner sep=1pt, above,sloped]{\footnotesize{$0.8$}} (Y_2);
\draw[line width=1.2pt] (X_2) to node[near end, inner sep=1pt, above,sloped]{\footnotesize{$0.2$}} (Y_2);
\draw[line width=1.2pt] (X_1) to node[near end, inner sep=1pt, above,sloped]{\footnotesize{$0.5$}} (Y_3);
\draw[line width=1.2pt] (X_2) to node[near end, inner sep=1pt, above,sloped]{\footnotesize{$0.5$}} (Y_3);
\draw[line width=1.2pt] (X_1) to node[near end, inner sep=1pt, above,sloped]{\footnotesize{$0.4$}} (Y_4);
\draw[line width=1.2pt] (X_2) to node[near end, inner sep=1pt, above,sloped]{\footnotesize{$0.6$}} (Y_4);

\draw[line width=1.2pt] (U_1) to node[midway, inner sep=1pt, above,sloped]{\footnotesize{$0.0577$}} (Y_1);
\draw[line width=1.2pt] (Y_3) to node[near end, inner sep=1pt, above,sloped]{\footnotesize{$1$}} (U_1);

\draw[line width=1.2pt] (U_2) to node[midway, inner sep=1pt, above,sloped]{\footnotesize{$0.9423$}} (Y_1);
\draw[line width=1.2pt] (U_2) to node[midway, inner sep=1pt, above,sloped]{\footnotesize{$0.9075$}} (Y_2);

\draw[line width=1.2pt] (U_3) to node[midway, inner sep=1pt, above,sloped]{\footnotesize{$0.0925$}} (Y_2);
\draw[line width=1.2pt] (U_3) to node[midway, inner sep=1pt, above,sloped]{\footnotesize{$1$}} (Y_4);
\end{tikzpicture}\par\bigskip
\caption{The output $p_{U|Y}.$}
\label{fig22:subim4}
\end{subfigure}
\caption{Illustration of Example \ref{ex.3}.}
\label{fig22:image}
\end{figure}
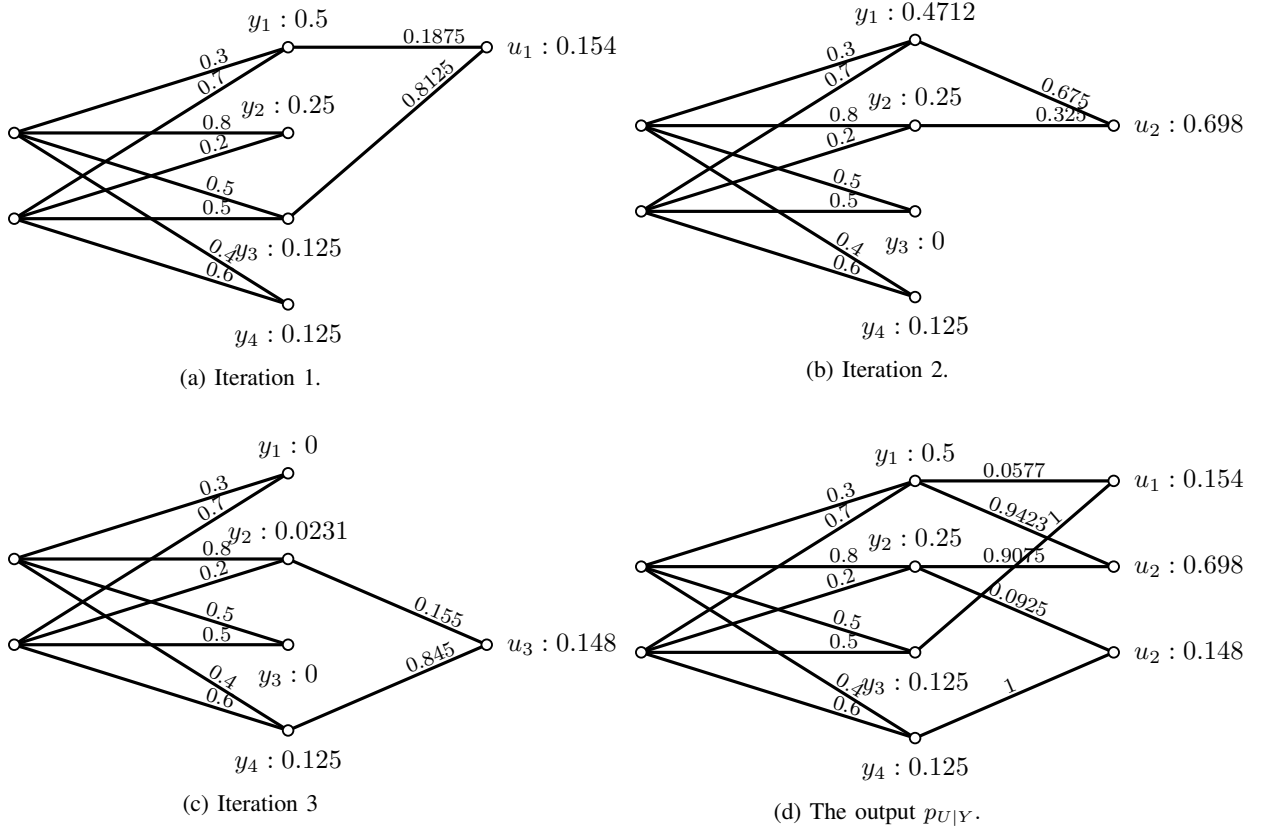
\end{example}
\begin{theorem}
    If $|\mathcal{Y}|=3$, Algorithm 3 is optimal, i.e., it achieves $g_0(X,Y)$.
\end{theorem}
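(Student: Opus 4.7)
The plan is to use Lemma~\ref{lemcardi2} to pin down the combinatorial structure of any optimal mapping $p_{U^*|Y}$, and then verify that Algorithm~3 produces exactly an instance of that structure. Since $X$ is binary, $\textnormal{rank}(\mathbf{P}_{X|Y}) \leq 2$, so every $u^* \in \mathcal{U}^*$ has a support of size at most two in $\mathcal{Y}$. A singleton-support realization must satisfy $p(x_0|y) = p(x_0)$ for the unique $y$, i.e., $y \in \mathcal{B}$; a pair-support realization on $\{y', y''\}$ with the constraint $p(x_0|u^*) = p(x_0)$ forces $(y', y'')\in\mathcal{Y}_0\times\mathcal{Y}_0'$ (up to order) and pins the conditional weights to $f(y',y''),\bar{f}(y',y'')$. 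Consequently, any two realizations sharing the same support have identical $p_{Y|U^*}(\cdot|u^*)$ and can be merged without changing $I(Y;U^*)$, so without loss of generality the canonical optimal mapping uses at most one realization per distinct support.

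Next, I would do a case analysis driven by the marginal identity $p(x_0) = \sum_y p(x_0|y) p(y)$. Because $X\not\independent Y$ is excluded, this identity rules out $|\mathcal{B}|\geq 2$, leaving $|\mathcal{B}|\in\{0,1\}$. When $|\mathcal{B}|=1$, the identity further forces the two points of $\mathcal{Y}\setminus\mathcal{B}$ to split one into $\mathcal{Y}_0$ and one into $\mathcal{Y}_0'$, so only one pair is admissible and the canonical optimal mapping has exactly two realizations (one singleton on $\mathcal{B}$ and one pair), whose $p_U$-masses are pinned down by $p_Y$. When $|\mathcal{B}|=0$, the identity forces $(|\mathcal{Y}_0|,|\mathcal{Y}_0'|)\in\{(2,1),(1,2)\}$, yielding exactly two admissible pairs; after merging, the canonical optimal mapping has two realizations (one per pair), whose masses are again pinned down uniquely by a two-variable linear system obtained from $p_Y$.

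The last step is to match Algorithm~3 to the canonical structure. In the $|\mathcal{B}|=1$ case, step~3 creates the singleton realization and a single while-loop iteration fills both remaining water levels at once (by the two-point marginal identity), producing the unique pair realization. In the $|\mathcal{B}|=0$ case with $(|\mathcal{Y}_0|,|\mathcal{Y}_0'|)=(2,1)$, label $y_1\in\mathcal{Y}_0'$ and $y_2,y_3\in\mathcal{Y}_0$ with $p(x_0|y_3)\leq p(x_0|y_2)$; step~9 then selects the pair $(y_3,y_1)$ in iteration~1. The hard part is to show this iteration fills the $\mathcal{Y}_0$-side water (at $y_3$) rather than the $\mathcal{Y}_0'$-side water (at $y_1$), so that iteration~2 can proceed with the remaining admissible pair $(y_2,y_1)$. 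Rearranging the full marginal identity gives
\begin{equation*}
p(y_1)\bigl(p(x_0|y_1)-p(x_0)\bigr) = p(y_2)\bigl(p(x_0)-p(x_0|y_2)\bigr) + p(y_3)\bigl(p(x_0)-p(x_0|y_3)\bigr),
\end{equation*}
which strictly exceeds $p(y_3)(p(x_0)-p(x_0|y_3))$, so $p(y_1)/\bar{f}(y_3,y_1) > p(y_3)/f(y_3,y_1)$ and the $y_3$-water fills first, as required. A symmetric argument handles $(|\mathcal{Y}_0|,|\mathcal{Y}_0'|)=(1,2)$. Iteration~2 closes out both remaining water levels, and the output matches the canonical optimal structure, proving that Algorithm~3 achieves $g_0(X,Y)$ whenever $|\mathcal{Y}|=3$.
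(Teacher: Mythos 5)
Your proposal is correct and follows essentially the same route as the paper: invoke Lemma~\ref{lemcardi2} together with the constraint $X\independent U$ to force every realization of an optimal $U$ to have support of size at most two, paired across $\mathcal{Y}_0\times\mathcal{Y}_0'$ (or a singleton in $\mathcal{B}$) with conditional weights pinned to $f,\bar{f}$, then split into the cases $|\mathcal{B}|\in\{0,1\}$ and observe that the resulting $H(Y|U)$ coincides with what Algorithm~3 attains. Your explicit check via the marginal identity that the $\mathcal{Y}_0$-side water level fills first in iteration~1 is a detail the paper leaves implicit in Proposition~\ref{ALG}, but it does not change the nature of the argument.
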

\begin{proof}
    Let $\mathcal{X}\triangleq \{x_0,x_1\}$ and $\mathcal{B}\triangleq\{y\in\mathcal{Y}|p(x_0|y)=p(x_0)\}$. We have either $|\mathcal{B}|=0$ or $|\mathcal{B}|=1$, since otherwise, $X\independent Y$, and $U^*=Y$.
    
    First, assume $|\mathcal{B}|=1$. Therefore, with a proper relabling of the elements in $\mathcal{Y}$, we have $p(x_0|y_1)=p(x_0)$, $p(x_0|y_2)>p(x_0)$, and $P(x_0|y_3)<p(x_0)$. \footnote{That both $p(x_0|y_2)$ and $p(x_0|y_3)$ cannot be lower or greater than $p(x_0)$ is obvious, since otherwise, we get $p(x_0)<p(x_0)$, which is absurd.} For a mapping $p_{U|Y}$ which results in $X\independent U$, define $\mathcal{U}_i\triangleq\{u\in\mathcal{U}|p(y_i|u)>0\}$, hence, $\mathcal{U}=\cup_{i=1}^3\mathcal{U}_i$. For an optimal mapping, we must have $p(y_1|u)=1,\ \forall u\in\mathcal{U}_1$, since otherwise, we have either $|\{y\neq y_1|p(y|u)>0\}|=1$ or $2$, where the former results in $p(x_0|u)\neq p(x_0)$, which violates the condition $X\independent U$, and the latter violates Lemma \ref{lemcardi2}, which states that each $u$ must be connected to at most two realizations of $Y$. As a result $\mathcal{U}_1\cap(\mathcal{U}_2\cup\mathcal{U}_3)=\emptyset$ and $\textnormal{Pr}\{U\in\mathcal{U}_1\}=p(y_1)$. Furthermore, we have $\mathcal{U}_2=\mathcal{U}_3$, since otherwise, we get $X\not\independent U$. For any realization $u\in\mathcal{U}_2$, we must have $p(y_3|u)=1-p(y_2|u)=f(y_3,y_2)$, with $f(\cdot,\cdot)$ defined in Proposition \ref{ALG}, since otherwise, the condition $X\independent U$ is violated. Therefore, we have 
    \begin{align*}
        H(Y|U)&=\sum_{u\in\mathcal{U}_1}H(Y|U=u)+\sum_{u\in\mathcal{U}_2}H(Y|U=u)\\
        &=0+\sum_{u\in\mathcal{U}_2}H_b(f(y_3,y_2))\\
        &=(1-p(y_1))H_b(f(y_3,y_2)),
    \end{align*}
which is obtained via Algorithm 3, and we get $g_0(X,Y)=H_b(p(y_1))$.

Next, assume $|\mathcal{B}|=0$. With a proper relabling of the elements in $\mathcal{Y}$, we have either $p(x_0|y_1)>p(x_0)>p(x_0|y_i),\ i\in\{2,3\}$, or $p(x_0|y_1)<p(x_0)<p(x_0|y_i),\ i\in\{2,3\}$. We only consider the former, as the proof for the latter follows similarly. Let $\mathcal{U}_i,\ i\in[3],$ be defined as before. We have $\mathcal{U}=\mathcal{U}_1$, since otherwise, for any $u\in\mathcal{U}\backslash\mathcal{U}_1$, $p(x_0|u)<p(x_0)$, and hence, $X\not\independent U$. Furthermore, $\mathcal{U}_2\cap\mathcal{U}_3=\emptyset$, since other wise, in conjunction with $\mathcal{U}=\mathcal{U}_1$, there exists $u\in\mathcal{U}$ such that $p(y_i|u)>0,\ \forall i\in[3]$, which violates the condition in Lemma \ref{lemcardi2}. For any $i\in\{2,3\}$ and any realization $u\in\mathcal{U}_i$, we must have $p(y_i|u)=1-p(y_1|u)=f(y_i,y_1)$, since otherwise, $X\not\independent U$. Moreover, since $p(y_i)=\sum_{u\in\mathcal{U}_i}p(u)p(y_i|u)$, we get $\textnormal{Pr}\{U\in\mathcal{U}_i\}=\frac{p(y_i)}{f(y_i,y_1)},\ i\in\{2,3\}$. Finally, we have
\begin{align*}
    H(Y|U)&=\sum_{i=2}^3\sum_{u\in\mathcal{U}_i}H(Y|U=u)\\
    &=\sum_{i=2}^3\frac{p(y_i)}{f(y_i,y_1)}H_b(f(y_i,y_1)),
\end{align*}
which is attained by Algorithm 3.
\end{proof}
Based on the achievable scheme in Proposition \ref{ALG}, we can now propose a privacy-preserving as a lower bound on $g_\epsilon(X,Y)$.
\begin{proposition}\label{prpr}(\textbf{Privacy-preserving mapping - a lower bound on $g_\epsilon(X,Y)$.}) Let $(X,Y)\sim p_{XY}$ be given, and let $S=(x_1,x_2,\ldots,x_{|\mathcal{X}|-1})$ be an arbitrary ordered $(|\mathcal{X}|-1)$-tuple of the elements in $\mathcal{X}$. Set $U_0\triangleq Y$. The algorithm starts off from this point by decreasing the privacy-leakage step by step as follows. In the first step, define the binary random variable $\hat{X}_1\in\{0,1\}$ as $\hat{X}_1\triangleq\mathds{1}_{\{X=x_1\}}$. Since $\hat{X}_1$ is a function of $X$, $\hat{X}_1-X-Y-U_0$ form a Markov chain. Since $\hat{X}_1$ is binary, by applying Algorithm 3 in Proposition \ref{ALG} to the pair $(\hat{X}_1,U_0)$, $p_{U_1|U_0}$ is generated such that $X-Y-U_0-U_1$ form a Markov chain, and $\hat{X}_1\independent U_1$, or equivalently, $p_{X|U_1}(x_1|u)=p_X(x_1),\ \forall u\in\mathcal{U}_1$. Hence, $X$ is at least 1-independent of $U_1$. Set $P_1\triangleq (I(X;U_1),I(Y;U_1))$. The algorithm proceeds in an iterative way as follows. After building the Markov chain $\hat{X}_i-X-Y-U_{i-1}$, $i\in[2:|\mathcal{X}|-1]$, in which $\hat{X}_i\triangleq \mathds{1}_{\{X=x_i\}}$, apply Algorithm 3 to $(\hat{X}_i,U_{i-1})$ to generate $p_{U_i|U_{i-1}}$, such that $X-Y-U_{i-1}-U_i$ form a Markov chain and $\hat{X}_i\independent U_i$, or equivalently, $X$ is at least $i$-independent of $U_i$. Set $P_i\triangleq (I(X;U_i),I(Y;U_i))$, and let $\mathcal{P}_S\triangleq \{P_i\}_{i=1}^{|\mathcal{X}|-1}$ denote the set of achievable points for the ordered tuple $S$ introduced earlier. Finally, let $\mathcal{P}\triangleq \cup_{S}\mathcal{P}_S\cup\{I(X;Y),H(Y)\}$. A lower bound on $g_\epsilon(X,Y)$ is provided by $uce_{[\mathcal{P}]}(\cdot)$.
\end{proposition}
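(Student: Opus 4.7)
The plan is to verify three ingredients: (i) at each step $i$, the constructed $U_i$ satisfies the Markov chain $X - Y - U_i$ required by the public data observation model; (ii) $X$ is at least $i$-independent of $U_i$, so $P_i$ lies on the U-P plane with the privacy interpretation the proposition claims; and (iii) the set $\mathcal{P}$ consists entirely of achievable points, so that invoking the remark that follows Definition 1 yields $uce_{[\mathcal{P}]}(\cdot)\leq g_\epsilon(X,Y)$ throughout $[0,I(X;Y)]$.

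I would argue (i) by induction on $i$. The base case $U_0=Y$ is immediate. For the inductive step, suppose $X-Y-U_{i-1}$ is Markov. Algorithm 3, applied to the pair $(\hat{X}_i,U_{i-1})$, outputs a conditional pmf $p_{U_i|U_{i-1}}$ that depends only on $U_{i-1}$. Therefore $(X,Y,\hat{X}_i)-U_{i-1}-U_i$ is Markov, and chaining with the inductive hypothesis produces the long chain $X-Y-U_{i-1}-U_i$, from which $X-Y-U_i$ is Markov. This is exactly the admissibility condition in the definition of $g_\epsilon$.

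For (ii) the key point is to show $\hat{X}_j\independent U_i$ for every $j\leq i$; since the functions $\hat{X}_j=\mathds{1}_{\{X=x_j\}}$ pin the posteriors $p_{X|U_i}(x_j|\cdot)$ to their priors, this delivers at least $i$-independence, $X\stackrel{i}{\independent}U_i$. At the iteration producing $U_j$, Algorithm 3 guarantees $\hat{X}_j\independent U_j$. For $i>j$, since each subsequent $U_{k+1}$ is built from a kernel depending only on $U_k$, we obtain the Markov chain $\hat{X}_j-U_j-U_{j+1}-\cdots-U_i$, hence $\hat{X}_j-U_j-U_i$. A standard marginalization then shows that $\hat{X}_j\independent U_j$ together with the Markov chain $\hat{X}_j-U_j-U_i$ implies $\hat{X}_j\independent U_i$; this is the most delicate bookkeeping in the argument and the step I expect to be the main obstacle, since it requires carefully tracking how partial independence propagates through repeated applications of Algorithm 3.

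Once (i) and (ii) are in hand, step (iii) is essentially a formality. Each pair $(I(X;U_i),I(Y;U_i))$ is realized by a valid $p_{U_i|Y}$ obeying $X-Y-U_i$, so every $P_i$ and every $P_i\in\mathcal{P}_S$ is achievable on the U-P plane, and so is $(I(X;Y),H(Y))$ via $U=Y$. Thus $\mathcal{P}=\bigcup_{S}\mathcal{P}_S\cup\{(I(X;Y),H(Y))\}$ is a set of achievable points. Applying the remark after Definition 1, which states that the upper concave envelope of any set of achievable points is itself a lower bound on the optimal U-P curve (using concavity of $g_\epsilon$ from Remark 1 and time-sharing between achievable distributions), we conclude $uce_{[\mathcal{P}]}(\epsilon)\leq g_\epsilon(X,Y)$ for all $\epsilon\in[0,I(X;Y)]$, completing the proof.
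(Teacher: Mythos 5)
Your proposal is correct and follows essentially the same route the paper takes: the paper justifies the proposition through Algorithm 3's guarantee $\hat{X}_i\independent U_i$, the composition of kernels giving $X-Y-U_{i-1}-U_i$, and the observation (stated right after the definition of $k$-independence) that $X\stackrel{k}\independent U$ together with a Markov chain $X-U-Z$ yields $X\stackrel{k}\independent Z$, which is precisely your "delicate bookkeeping" step, followed by the remark that the upper concave envelope of achievable points lower-bounds the optimal trade-off. The only detail worth adding is that $X\stackrel{|\mathcal{X}|-1}\independent U_{|\mathcal{X}|-1}$ forces $I(X;U_{|\mathcal{X}|-1})=0$, so the envelope indeed covers the whole range $[0,I(X;Y)]$.
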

For a fixed tuple $S$ in Proposition \ref{prpr}, we have
\begin{align}
    I(Y;U_i)&=H(Y)-H(Y,U_{i-1}|U_i)+H(U_{i-1}|Y,U_i)\nonumber\\
    &\geq I(Y;U_{i-1})-H(U_{i-1}|U_i)\label{harf1}\\
    &\geq I(Y;U_{i-1})-1,\ \forall i\in[|\mathcal{X}|-1] \label{harf2},
\end{align}
where (\ref{harf1}) follows from the Markov chain $Y-U_{i-1}-U_i$ and non-negativity of entropy, and (\ref{harf2}) results from the fact that according to Algorithm 3, $U_{i-1}$ conditioned on any realization $u_i$ of $U_i$ has at most two non-zero mass probabilities, and hence, $H(U_{i-1}|U_i)\leq 1$. 

We also have
% \begin{align}
%     I(X;U_i)&=\sum_{x\in\mathcal{X}}p_X(x)D\left(p_{U_i|X}(\cdot|x)||p_{U_i}(\cdot)\right)\nonumber\\
%     &=\sum_{x\in\mathcal{X}\backslash\{x_i\}}p_X(x)D\left(p_{U_i|X}(\cdot|x)||p_{U_i}(\cdot)\right)+p_X(x_i)D\left(p_{U_i|X}(\cdot|x_i)||p_{U_i}(\cdot)\right)\nonumber\\
%     &=\sum_{x\in\mathcal{X}\backslash\{x_i\}}p_X(x)D\left(p_{U_i|X}(\cdot|x)||p_{U_i}(\cdot)\right)\label{bv11}\\
%     &\leq \sum_{x\in\mathcal{X}\backslash\{x_i\}}p_X(x)D\left(p_{U_{i-1}|X}(\cdot|x)||p_{U_{i-1}}(\cdot)\right)\label{bv2}\\
%     &=I(X;U_{i-1})-p_X(x_i)D\left(p_{U_{i-1}|X}(\cdot|x_i)||p_{U_{i-1}}(\cdot)\right)\nonumber\\
%     &\leq I(X;Y)-\sum_{j=1}^ip_X(x_j)D\left(p_{Y|X}(\cdot|x_j)||p_{Y}(\cdot)\right)\label{bv3},\ \forall i\in[|\mathcal{X}|-1],
% \end{align}
\begin{align}
    I(X;U_i)&=I(\hat{X}_i,X;U_i)\label{rt1}\\
    &=I(X;U_i|\hat{X}_i)\label{rt2}\\
    &\leq I(X;U_{i-1}|\hat{X}_i)\label{rt3}\\
    &=I(X;U_{i-1})-I(\hat{X}_i;U_{i-1}),\ \forall i\in[|\mathcal{X}|-1]\label{rt4},
\end{align}
where (\ref{rt1}) follows from having defined $\hat{X}_i$ as a function of $X$, (\ref{rt2}) results from $\hat{X}_i\independent U_i$, and finally, (\ref{rt3}) is from the application of data processing inequality in $\hat{X}_i-X-U_{i-1}-U_i$. 

The procedure in Proposition \ref{prpr} can be computationally complex when $|\mathcal{X}|$ is large, as the total number of ordered $(|\mathcal{X}|-1)$-tuples is $|\mathcal{X}|!$. Therefore, this calls for a simpler scheme when $|\mathcal{X}|$ is large. Let $S=(x_1,\ldots,x_{|\mathcal{X}|-1})$ be a given ordered tuple. In the first iteration of the algorithm, a release random variable $U_1$ is generated, which results in a utility greater than or equal to $(H(Y)-1)^+$ from (\ref{harf2}), since $U_0=Y$, and a privacy-leakage lower than or equal to $I(X;Y)-I(\hat{X}_1;Y)$ from (\ref{rt4}). Since $g_\epsilon(X,Y)$ is concave and non-decreasing in $\epsilon$, if a utility within 1 bit of $H(Y)$ is to be achieved in iteration 1, a heuristic approach is to choose an $x_1\in\mathcal{X}$ which is likely to result in the maximum drop in the privacy-leakage, i.e., $I(\hat{X}_1;Y)$. In other words, on the U-P plane, the algorithm tries to depart from the rightmost point $(I(X;Y),H(Y))$ with the lowest slope. As a result, we set $x_1\triangleq \argmax_{x\in\mathcal{X}}I(\mathds{1}_{\{X=x\}};U_0)$, and the algorithm proceeds to provide $U_1$ in $X-Y-U_1$. Following the same rationale, we set $x_2\triangleq \argmax_{x\in\mathcal{X}\backslash\{x_1\}}I(\mathds{1}_{\{X=x\}};U_1)$, and so on. 
% Therefore, for large $|\mathcal{X}|$, we can restrict the analysis to an ordered tuple $S$, in which $x_i$'s are arranged in an iterative way.

% As a result, 

% according to a non-increasing order of $p_X(x)D(p_{Y|X}(\cdot|x)||p_Y(\cdot))$. In other words, setting $\mathcal{D}_0\triangleq\emptyset$, we can write
% \begin{equation}\label{ordering}
%    x_i\triangleq \argmax_{x\in\mathcal{X}\backslash\mathcal{D}_{i-1}}p(x)D(p_{Y|X}(\cdot|x)||p_Y(\cdot)),\ \mathcal{D}_{i}\triangleq\mathcal{D}_{i-1}\cup\{x_i\},\ \ \forall i\in[|\mathcal{X}|-1]. 
% \end{equation}

% Having in mind that the statistical independence between $X$ and $U$ means that $p_{X|U}(x|\cdot)=p_X(x),\ \forall x\in\mathcal{X}$, we observe that this $U_1$ can be viewed as a first step towards statistical independence. 
% In this construction, we have $X-Y-U$ form a Markov chain and a point with utility of at least $(H(Y)-1)^+$, and privacy leakage of at most $I(X;Y)-p(x^*)D\left(p_{Y|X}(\cdot|x^*)||p_Y(\cdot)\right)$ is achievable in the utility-privacy trade-off curve.  the slope at $(0,0)$ is lower bounded by the slope of the straight line connecting this point to the origin.
The procedure of this simplified scheme is presented in Algorithm \ref{algor4}.
\begin{algorithm}
\caption{Privacy-preserving mapping - a lower bound on $g_\epsilon(X,Y).$}\label{algor4}
\begin{algorithmic}[1]
{\color{black}\Function{Algorithm4}{$p_{XY}$}
\State $U_0\triangleq Y$ 
\State $x_1\triangleq \argmax_{x\in\mathcal{X}}I(\mathds{1}_{\{X=x\}};U_0)$
\State $\mathcal{P}\triangleq\{(I(X;U_0),I(Y;U_0))\}$
\State $i = 1$
\While{$I(X;U_{i-1})\neq 0$}
\State $\hat{X}_i\triangleq\mathds{1}_{\{X=x_i\}}$
\State $p_{U_i|U_{i-1}}\triangleq\textnormal{Algorithm3}(p_{\hat{X}_iU_{i-1}})$ 
\State $\mathcal{P}\triangleq\mathcal{P}\cup\{(I(X;U_i),I(Y;U_i))\}$
\State $i = i + 1$
\State $x_i\triangleq \argmax_{x\in\mathcal{X}\backslash\{x_1,\ldots,x_{i-1}\}}I(\mathds{1}_{\{X=x\}};U_{i-1})$
\EndWhile
\State \Return $uce_{[\mathcal{P}]}(\cdot)$
\EndFunction }
\end{algorithmic}
\end{algorithm}

\section{Numerical results}
In this section, the performance of the proposed privacy-preserving schemes are evaluated. Prior to this evaluation, we note that
\begin{align}
    g_0(X,Y)&\geq g_0^L\triangleq\left(H(Y)-\log\textnormal{rank}(\mathbf{P}_{X|Y})\right)^+\label{lbyek}\\
    G_0(X,Y)&\geq G_0^L\triangleq\left(H(Y)-\min\{H(X),\log\textnormal{rank}(\mathbf{P}_{X|Y})\}\right)^+,\label{lbdo}
\end{align}
where $\mathbf{P}_{X|Y}$ is the matrix form of the conditional pmf $p_{X|Y}$. 

The lower bound in (\ref{lbyek}) is from \cite[Corollary 2]{RG-JSAIT} and (\ref{lbdo}) results from i) $G_0(X,Y)\geq g_0(X,Y)$ along with (\ref{lbyek}) and ii) the fact that according to Lemma \ref{lemcardi}, for an optimal mapping $p^*_{U|XY}$, we have $H(Y|X,U^*)=0$, and hence
\begin{align*}
    G_0(X,Y)&=I(Y;U^*)\\
    &=I(X,Y;U^*)-I(X;U^*|Y)\\
    &=I(Y;U^*|X)-I(X;U^*|Y)\\
    &\geq H(Y|X)-H(Y|X,U^*)-H(X|Y)\\
    &=H(Y|X)-H(X|Y)\\
    &=H(Y)-H(X).
\end{align*}
\subsection{Full data observation}
Consider the following joint pmf on $\mathcal{X}\times\mathcal{Y}=[8]^2$ generated randomly\footnote{Each probability vector is obtain by normalizing an 8-dimensional vector whose elements have been i.i.d. generated according to uniform distribution over the interval $[0,1]$.}.
\begin{equation}
    \mathbf{p}_X=  \begin{bmatrix}
    0.175 \cr
    0.089 \cr
    0.146 \cr
    0.026 \cr
    0.077 \cr
    0.167 \cr
    0.145 \cr
    0.175
  \end{bmatrix},\ \  \mathbf{P}_{Y|X}=   \begin{bmatrix}
    0.130 &  0.233 &  0.159 &  0.045 &  0.185 &  0.158 &  0.039 &  0.051 \cr
    0.007 &  0.061 &  0.072 &  0.117 &  0.046 &  0.054 &  0.067 &  0.065 \cr
    0.168 &  0.251 &  0.217 &  0.106 &  0.034 &  0.107 &  0.219 &  0.160 \cr
    0.185 &  0.011 &  0.008 &  0.154 &  0.141 &  0.147 &  0.066 &  0.123 \cr
    0.134 &  0.099 &  0.100 &  0.169 &  0.271 &  0.188 &  0.212 &  0.091 \cr
    0.150 &  0.016 &  0.087 &  0.180 &  0.096 &  0.202 &  0.063 &  0.216 \cr
    0.147 &  0.035 &  0.175 &  0.066 &  0.165 &  0.115 &  0.242 &  0.152 \cr
    0.078 &  0.293 &  0.182 &  0.162 &  0.063 &  0.029 &  0.091 &  0.143
  \end{bmatrix}.
\end{equation}
Figure \ref{fig_a1} illustrates the lower bounds on $G_\epsilon(X,Y)$ in Proposition \ref{Prop1} and Remark \ref{nonrem}. The filled black circles are the achievable points in Proposition \ref{Prop1} whose upper concave envelope is drawn in solid black line. The empty red circles correspond to the non-algorithmic achievable points in Remark \ref{nonrem} whose upper concave envelope is plotted in solid red line. The top dashed blue line is the line connecting $(0,H(Y|X))$ to $(I(X;Y),H(Y))$, while the bottom dashed blue line corresponds to the line connecting $(0,G_0^L)$, where $G_0^L$ is given in (\ref{lbdo}), to $(I(X;Y),H(Y))$.
\begin{figure}[ht]
    \centering
    \includegraphics{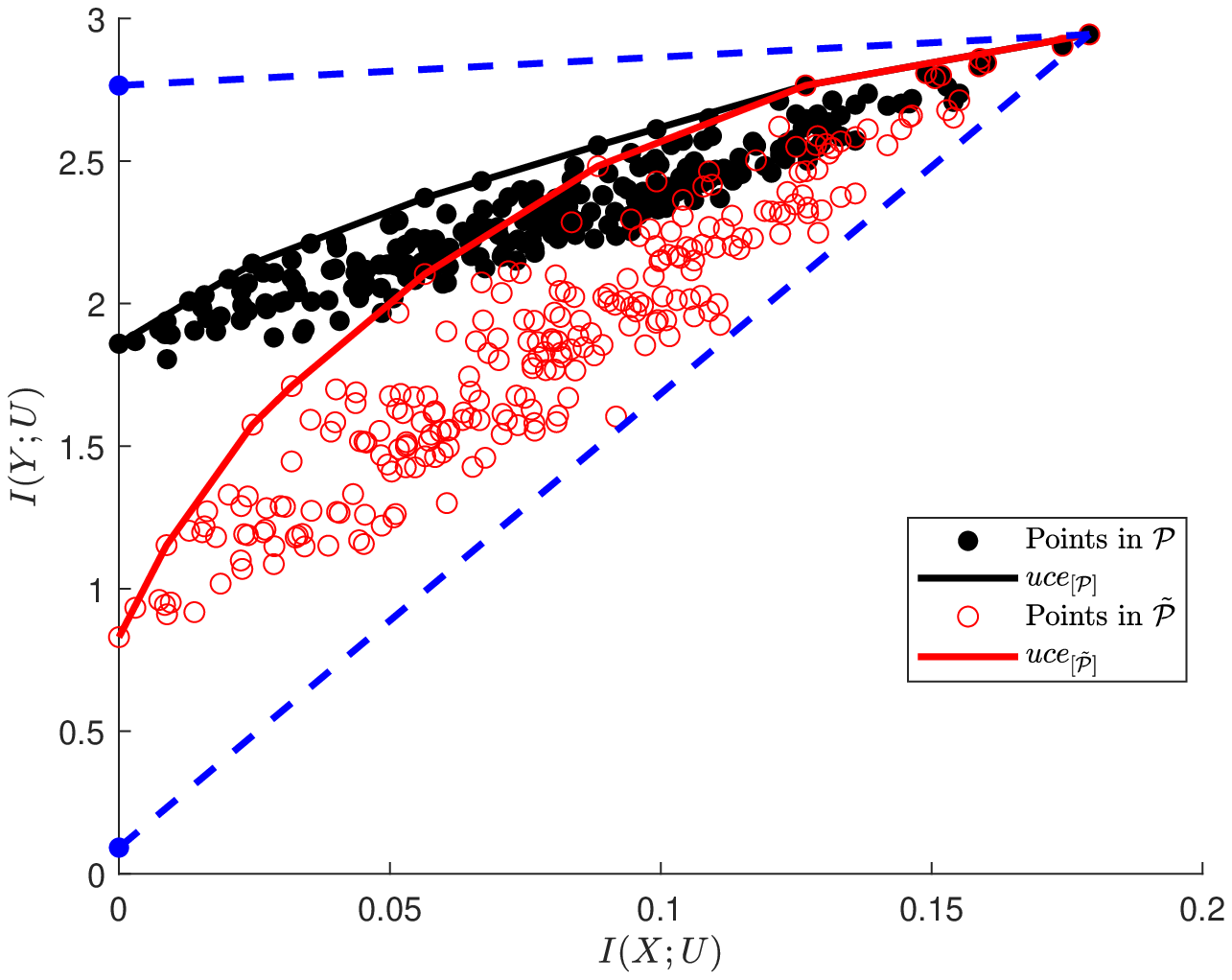}
    \caption{Lower bounds on $G_\epsilon(X,Y)$ in Proposition \ref{Prop1} and Remark \ref{nonrem}.}
    \label{fig_a1}
\end{figure}
Figure \ref{fig_a2} illustrates the lower bounds on $G_\epsilon(X,Y)$ in the simplified version of Proposition \ref{Prop1}, i.e., Algorithm \ref{algorithm3}. The filled black circles are the achievable points in Algorithm \ref{algorithm3} whose upper concave envelope is drawn in solid black line. The empty red circles correspond to the non-algorithmic achievable points in Remark \ref{nonrem} when applied to the procedure in Algorithm \ref{algorithm3} whose upper concave envelope is plotted in solid red line. The dashed blue lines are as mentioned earlier.
\begin{figure}[ht]
    \centering
    \includegraphics{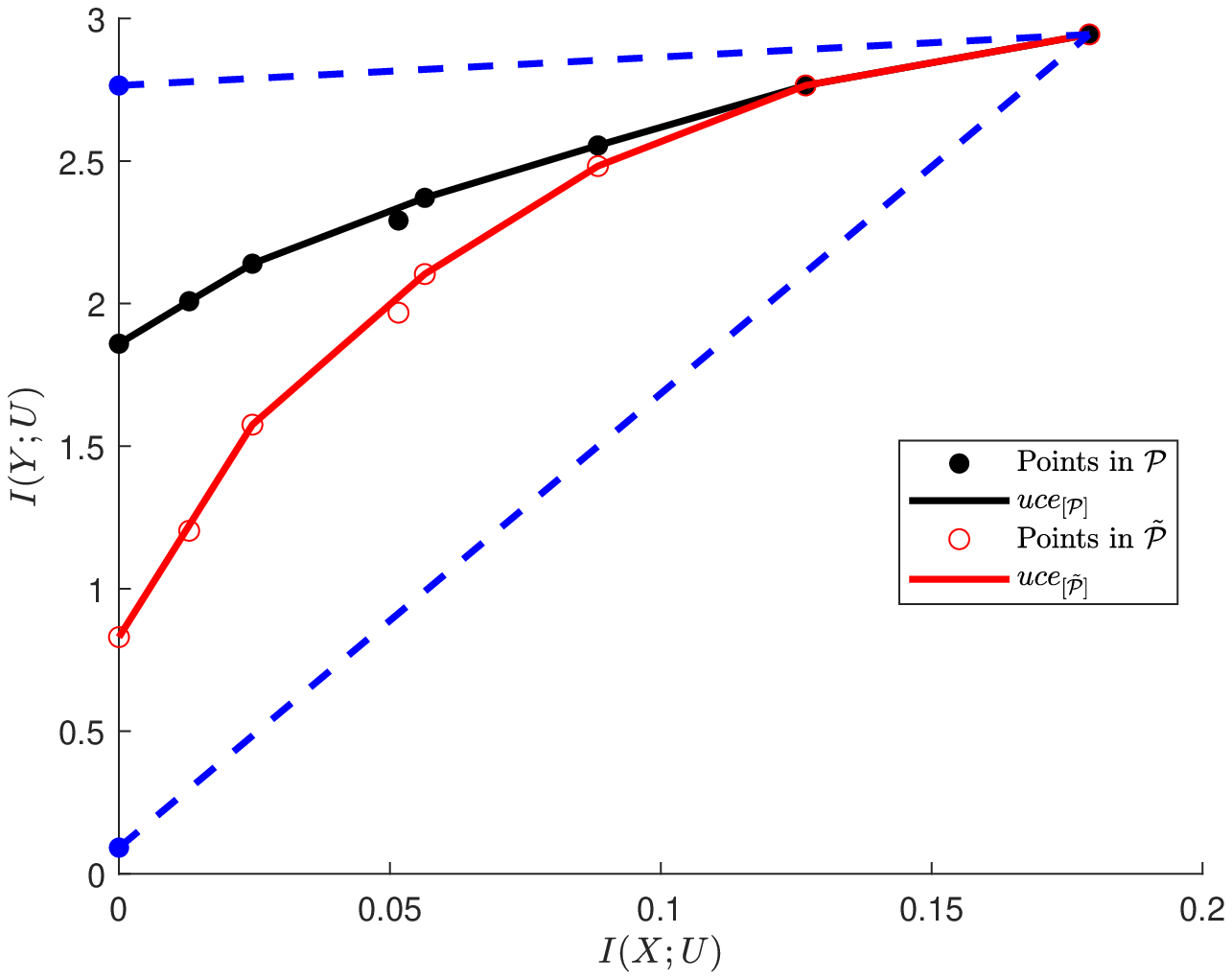}
    \caption{Lower bounds on $G_\epsilon(X,Y)$ in Algorithm \ref{algorithm3}.}
    \label{fig_a2}
\end{figure}
\subsection{Public data observation}
Figure \ref{fig_a3} illustrates the lower bounds on $g_\epsilon(X,Y)$ in Proposition \ref{prpr} and its simplified version, i.e., Algorithm \ref{algor4}. The filled black circles are the achievable points in Proposition \ref{prpr} whose upper concave envelope is drawn in solid black line. The empty red circles correspond to the achievable points in Algorithm \ref{algor4} whose upper concave envelope is plotted in solid red line. The top dashed blue line is as mentioned earlier in the full data observation, while the bottom dashed blue line is the line connecting $(0,g_0^L)$, where $g_0^L$ is given in (\ref{lbyek}), to $(I(X;Y),H(Y))$.
\begin{figure}[ht]
    \centering
    \includegraphics{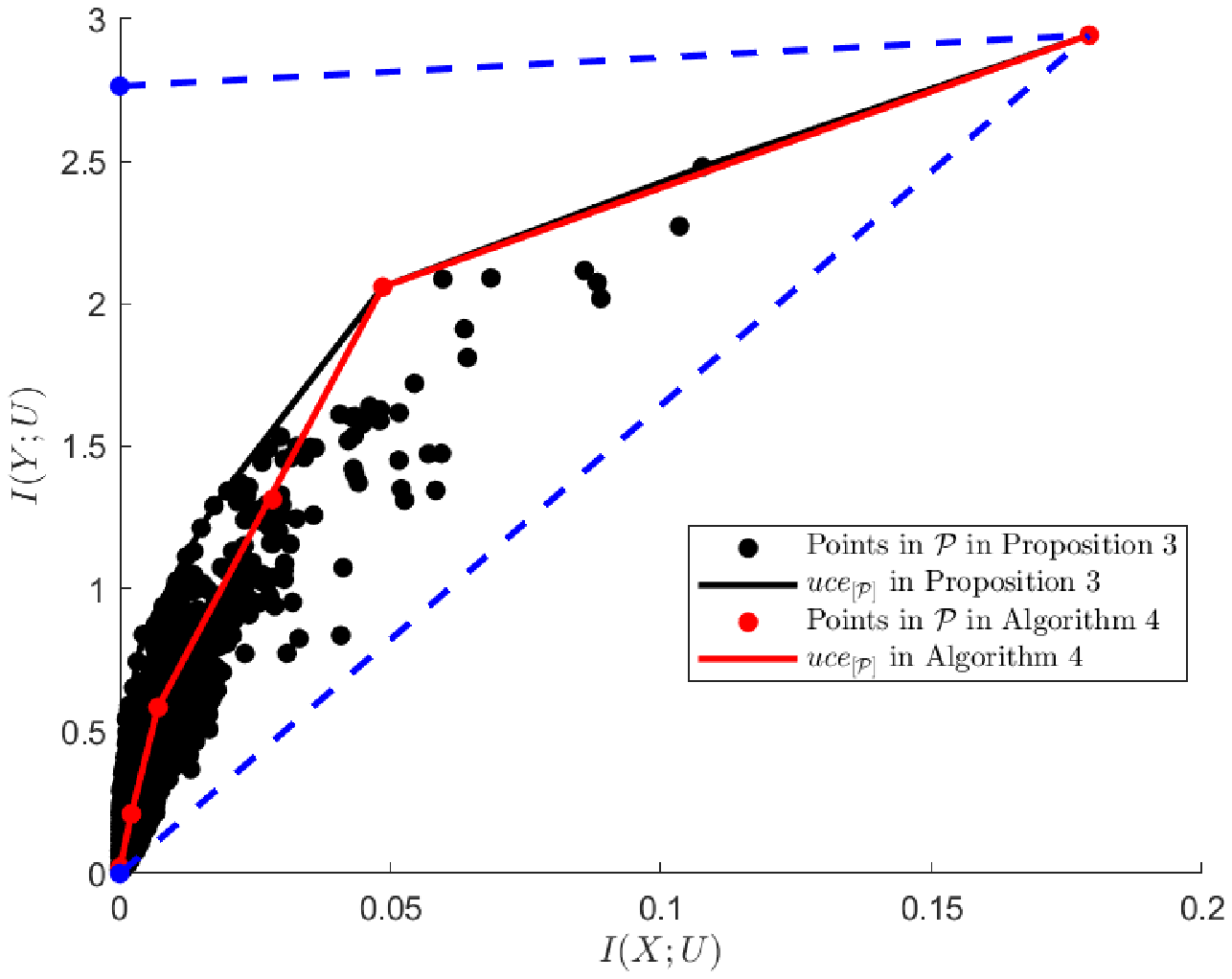}
    \caption{Lower bounds on $g_\epsilon(X,Y)$ in Proposition \ref{prpr} and Algorithm \ref{algor4}.}
    \label{fig_a3}
\end{figure}
\section{Conclusions}
Information-theoretic privacy is considered in this paper, in which a curator, aware of the joint distribution $p_{XY}$, wishes to maximize $I(Y;U)$ subject to $I(X;U)\leq\epsilon$. The optimization is over $p_{U|XY}$ (or $p_{U|Y}$) when curator has access to $(X,Y)$ (or only $Y$), and trade-off is captured by $G_\epsilon(X,Y)$ (or $g_\epsilon(X,Y)$). The problem is investigated from theoretical and practical point of view.
\appendices

\section{}\label{app22}
Fix an arbitrary $x_0\in\mathcal{X}$. Assume that for some $u_0\in\mathcal{U}^*$, we have $p(x_0,y'|u_0)>0$, and $p(x_0,y''|u_0)>0$ with $y'\neq y''$. It is shown that this cannot be optimal by construction. In other words, a mapping $p_{\hat{U}|XY}$ is constructed such that $I(X;\hat{U})=I(X;U^*)$ and $I(Y;\hat{U})>I(Y;U^*)$, which disproves the optimality of $p_{U^*|XY}$. 

Assume the random variable $\hat{U}\in(\mathcal{U}^*\backslash\{u_0\})\cup\{u_0',u_0''\}$ with $u_0',u_0''\not\in\mathcal{U}^*$ such that 
\begin{align*}
    p_{X,Y|\hat{U}}(x_0,y'|u_0'),p_{X,Y|\hat{U}}(x_0,y''|u_0'')&\triangleq p_{X,Y|U}(x_0,y'|u_0)+p_{X,Y|U}(x_0,y''|u_0)\\p_{X,Y|\hat{U}}(x_0,y''|u_0'),p_{X,Y|\hat{U}}(x_0,y'|u_0'')&\triangleq 0,\\
    p_{\hat{U}}(u_0')&\triangleq p_{U}(u_0)\frac{p_{X,Y|U}(x_0,y'|u_0)}{p_{X,Y|U}(x_0,y'|u_0)+p_{X,Y|U}(x_0,y''|u_0)}\nonumber\\
    p_{\hat{U}}(u_0'')&\triangleq p_{U}(u_0)\frac{p_{X,Y|U}(x_0,y''|u_0)}{p_{X,Y|U}(x_0,y'|u_0)+p_{X,Y|U}(x_0,y''|u_0)}\nonumber\\
    p_{X,Y,\hat{U}}(x,y,u)&\triangleq p_{X,Y,U}(x,y,u),\ \forall (x,y,u)\in\mathcal{X}\times\mathcal{Y}\times(\hat{U}\backslash\{u_0',u_0''\})\nonumber\\
    p_{X,Y,\hat{U}}(x,y|u)&\triangleq p_{X,Y,U}(x,y|u_0),\ \forall u\in\{u_0',u_0''\}, \forall (x,y)\neq (x_0,y'),(x_0,y'').
\end{align*}
It can be verified that in this construction, i) the marginal $p_{X,Y}$ is preserved in $(X,Y,\hat{U})$, ii) $H(X|U^*)=H(X|\hat{U})$ (which results from having $p_{X,\hat{U}}(\cdot,u)=p_{X,U^*}(\cdot,u),\ \forall u\in\mathcal{U}^*\backslash\{u_0,u_0',u_0''\}$, and $p_{X|U^*}(\cdot|u_0)=p_{X|\hat{U}}(\cdot|u_0')=p_{X|\hat{U}}(\cdot|u_0'')$, with $p(u_0)=p(u_0')+p(u_0'')$), and iii) $H(Y|\hat{U})<H(Y|U)$ due to strict concavity of entropy. Therefore, we have constructed $p_{\hat{U}|X,Y}$, such that $I(X;\hat{U})=I(X;U^*)$, and $I(Y;U^*)<I(Y;\hat{U})$ which contradicts the attainability of $G_0(X,Y)$ by $p_{U^*|X,Y}$. Hence, by noting that $x_0$ was chosen arbitrarily, we obtain
\begin{equation*}
    |\{y\in\mathcal{Y}|p(x,y|u^*)>0\}|\leq 1,\ \forall (x,u^*) \in\mathcal{X}\times\mathcal{U}^*,
\end{equation*}
which is equivalent to (\ref{chahar}) and (\ref{entfunction}) when $X\independent U$.

\section{}\label{app1}
Let $x^*\triangleq \argmin_xp(x)$ and define $\mathcal{X}_0\triangleq\{x\in\mathcal{X}|f(x)=f(x^*)\}$. Hence, we have
    \begin{equation}\label{lf0}
        p(x^*)\leq\min\left\{\textnormal{Pr}\{X\in\mathcal{X}_0\},1-\textnormal{Pr}\{X\in\mathcal{X}_0\}\right\}.
    \end{equation}
    Define $Z\triangleq\mathds{1}_{\{f(X)=f(x^*)\}}$ as a function of $f(X)$. Since $f(X)$ has at least two realizations, $Z$ is a Bernoulli random variable with parameter $\textnormal{Pr}\{X\in\mathcal{X}_0\}$. Therefore, we can write
    \begin{align}
        H(f(X))&\geq H(Z)\label{lf1}\\
        &=H_b\left(\textnormal{Pr}\{X\in\mathcal{X}_0\}\right)\nonumber\\
        &\geq H_b(p(x^*))\label{lf2},
    \end{align}
where (\ref{lf1}) follows from defining $Z$ as a function of $f(X)$, and (\ref{lf2}) results from (\ref{lf0}).  
\section{}\label{app3}
   If $i=j$ or $i=k$, the proof is complete by the non-negativity of entropy. Also, if any of $p_i,p_j,p_k$ is 0 or 1, the proof is complete, since it results in one of the following trivial possibilities i) $0\leq 0$, ii) $H_b(p_i)\leq H_b(1-p_i)$, or iii) the non-negativity of entropy. Therefore, we assume that the indices $i,j,k$ are all distinct and non of the mass probabilities is 0 or 1. Since the RHS of (\ref{nbv1}) is symmetric with respect to $p_j,p_k$, i.e., it doesn't change if we exchange $p_j$ and $p_k$, without loss of generality, assume that $p_j\leq p_k$. Therefore, we have
   \begin{align}
       H_b(p_i)&=H_b(p_j+p_k)\nonumber\\
       &< H_b(p_k)+H'_b(p_k)p_j\label{nbv2}\\
       &\leq H_b(p_k)+H'_b(p_j)p_j\label{nbv3}\\
       &< H_b(p_k)+H(p_j)\label{nbv4},
   \end{align}
where (\ref{nbv2}) results from Taylor expansion of $H_b(\cdot)$ and its strict concavity, i.e., $H''_b(\cdot)<0$. The latter also results in (\ref{nbv3}) and (\ref{nbv4}), i.e., $H'_b(p_k)\leq H'_b(p_j)$ since $p_j\leq p_k$, and $H'_b(p_j)p_j<H_b(p_j)$.
\bibliography{REFERENCE}
\bibliographystyle{IEEEtran}
\end{document}